\newtheorem{theorem}{Theorem}
\newtheorem{corollary}[theorem]{Corollary}
\newtheorem{assumption}{Assumption}
\newtheorem{lemma}[theorem]{Lemma}
\newtheorem{definition}[theorem]{Definition}
\newtheorem{proposition}[theorem]{Proposition}
\newtheorem{remark}{Remark}
\begin{document}

\def\myparagraph#1{\vspace{2pt}\noindent{\bf #1~~}}



\def\DeltaCeil{{\lceil\Delta\rceil}}
\def\TwoDeltaCeil{{\lceil 2\Delta\rceil}}
\def\OnePointFiveDeltaCeil{{\lceil 3\Delta/2\rceil}}

\long\def\ignore#1{}
\def\myps[#1]#2{\includegraphics[#1]{#2}}
\def\etal{{\em et al.}}
\def\Bar#1{{\bar #1}}
\def\br(#1,#2){{\langle #1,#2 \rangle}}
\def\setZ[#1,#2]{{[ #1 .. #2 ]}}
\def\Pr#1{\mbox{\tt Pr}\left[{#1}\right]}
\def\REACHED{\mbox{\tt REACHED}}
\def\AdjustFlow{\mbox{\tt AdjustFlow}}
\def\GetNeighbors{\mbox{\tt GetNeighbors}}
\def\true{\mbox{\tt true}}
\def\false{\mbox{\tt false}}
\def\Process{\mbox{\tt Process}}
\def\ProcessLeft{\mbox{\tt ProcessLeft}}
\def\ProcessRight{\mbox{\tt ProcessRight}}
\def\Add{\mbox{\tt Add}}

\newcommand{\eqdef}{{\stackrel{\mbox{\tiny \tt ~def~}}{=}}}

\def\setof#1{{\left\{#1\right\}}}
\def\suchthat#1#2{\setof{\,#1\mid#2\,}} 
\def\event#1{\setof{#1}}
\def\q={\quad=\quad}
\def\qq={\qquad=\qquad}
\def\calA{{\cal A}}
\def\calB{{\cal B}}
\def\calC{{\cal C}}
\def\calD{{\cal D}}
\def\calE{{\cal E}}
\def\calK{{\cal K}}
\def\calG{{\cal G}}
\def\calI{{\cal I}}
\def\calF{{\cal F}}
\def\calH{{\cal H}}
\def\calL{{\cal L}}
\def\calN{{\cal N}}
\def\calM{{\cal M}}
\def\calP{{\cal P}}
\def\calR{{\cal R}}
\def\calS{{\cal S}}
\def\calT{{\cal T}}
\def\calU{{\cal U}}
\def\calV{{\cal V}}
\def\calO{{\cal O}}
\def\calX{{\cal X}}

\def\U{{\mathbb U}}
\def\E{{\calE}}
\def\T{{\calF}}
\def\XX(#1){{{#1}^\downarrow}}

\def\calY{{\cal Y}}
\def\calZ{{\cal Z}}
\def\s{\footnotesize}
\def\calNG{{\cal N_G}}
\def\psfile[#1]#2{}
\def\psfilehere[#1]#2{}
\def\epsfw#1#2{\includegraphics[width=#1\hsize]{#2}}
\def\assign(#1,#2){\langle#1,#2\rangle}
\def\edge(#1,#2){(#1,#2)}
\def\VS{\calV^s}
\def\VT{\calV^t}
\def\slack(#1){\texttt{slack}({#1})}
\def\barslack(#1){\overline{\texttt{slack}}({#1})}
\def\NULL{\texttt{NULL}}
\def\PARENT{\texttt{PARENT}}
\def\GRANDPARENT{\texttt{GRANDPARENT}}
\def\TAIL{\texttt{TAIL}}
\def\HEADORIG{\texttt{HEAD$\_\:$ORIG}}
\def\TAILORIG{\texttt{TAIL$\_\:$ORIG}}
\def\HEAD{\texttt{HEAD}}
\def\CURRENTEDGE{\texttt{CURRENT$\!\_\:$EDGE}}

\def\unitvec(#1){{{\bf u}_{#1}}}
\def\uvec{{\bf u}}
\def\vvec{{\bf v}}
\def\Nvec{{\bf N}}
\def\r{{\bf r}}

\newcommand{\bg}{\mbox{$\bf g$}}
\newcommand{\bh}{\mbox{$\bf h$}}

\newcommand{\bx}{\mbox{$x$}}
\newcommand{\by}{\mbox{\boldmath $y$}}
\newcommand{\bz}{\mbox{\boldmath $z$}}
\newcommand{\bu}{\mbox{\boldmath $u$}}
\newcommand{\bv}{\mbox{\boldmath $v$}}
\newcommand{\bw}{\mbox{\boldmath $w$}}
\newcommand{\bvarphi}{\mbox{\boldmath $\varphi$}}
\newcommand{\balpha}{\mbox{\boldmath $\alpha$}}

\newcommand\myqed{{}}

\newcommand{\IBFSFS}{{IBFS$^{\mbox{~\!\tiny FS}}$}}
\newcommand{\IBFSAL}{{IBFS$^{\mbox{~\!\tiny AL}}$}}
\newcommand{\BKFS}{{BK$^{\mbox{~\!\tiny FS}}$}}
\newcommand{\BKAL}{{BK$^{\mbox{~\!\tiny AL}}$}}

\newcommand{\XL}{{X_{\le L}}}

\def\br#1{{\llbracket #1 \rrbracket}}


\title{\Large\bf  }
\author{}
\title{
\Large\bf  \vspace{-20pt} A simpler and parallelizable {$O(\sqrt{\log n})$}-approximation \\ algorithm  for {\sc Sparsest Cut}
}
\author{Vladimir Kolmogorov \\ \normalsize Institute of Science and Technology Austria (ISTA) \\ {\normalsize\tt vnk@ist.ac.at}}
\date{}
\maketitle
\begin{abstract}
Currently, the best known tradeoff between approximation ratio and complexity
for the {\sc Sparsest Cut} problem is achieved by the algorithm in [Sherman, FOCS 2009]:
it computes $O(\sqrt{(\log n)/\varepsilon})$-approximation using $O(n^\varepsilon\log^{O(1)}n)$ maxflows
for any $\varepsilon\in[\Theta(1/\log n),\Theta(1)]$.
It works by solving the SDP relaxation of [Arora-Rao-Vazirani, STOC 2004]
using the Multiplicative Weights Update algorithm (MW) of [Arora-Kale, JACM 2016].
To implement one MW step, Sherman approximately solves a multicommodity flow problem
using another application of MW. Nested MW steps are solved via
a certain ``chaining'' algorithm that combines results of multiple calls to the maxflow algorithm.

We present an alternative approach that avoids solving the multicommodity flow problem
and instead computes ``violating paths''.
This simplifies Sherman's algorithm by removing a need for a nested application of MW,
and also allows parallelization: we show how to compute
$O(\sqrt{(\log n)/\varepsilon})$-approximation via $O(\log^{O(1)}n)$ maxflows
using $O(n^\varepsilon)$ processors.

We also revisit Sherman's chaining algorithm, and present a simpler version together with a new analysis.
\end{abstract}

\section{Introduction}
Partitioning a given undirected graph $G=(V,E)$ into two (or more) components is a fundamental problem in computer science with many real-world applications,
ranging from data clustering and network analysis to parallel computing and VLSI design.
Usually, desired partitions should satisfy two properties:
(i) the total cost of edges between different components should be small, and
(ii) the components should be sufficiently balanced.
For partitions with two components $(S,\bar S)$ this means
that $E(S,\bar S)$ should be small
and $\min\{|S|,|\bar S|\}$ should be large,
where $E(S,\bar S)$ is the total number of edges between $S$ and $\bar S$
(or their total weight in the case of weighted graphs).

One of the most widely studied versions is the {\sc Sparsest Cut} problem
whose goal is to minimize the ratio $\frac{E(S,\bar S)}{\min\{|S|,|\bar S|\}}$
(called {\em edge expansion}) over partitions $(S,\bar S)$.
Another well-known variant is the {\sc $c$-Balanced Separator} problem:
 minimize $E(S,\bar S)$
over {\em $c$-balanced partitions}, i.e.\ partitions satisfying ${\min\{|S|,|\bar S|\}}\ge cn$
where $n=|V|$ and $c\in(0,\tfrac 12)$ is a given constant.

Both problems are NP-hard, which forces one to study approximation algorithms,
or pseudoapproximation algorithms in the case of {\sc Balanced Separator}.
(An algorithm for {\sc Balanced Separator} is said to be a $\lambda$-pseudoapproximation
if it computes a $c'$-balanced partition $(S,\bar S)$ whose cost $E(S,\bar S)$
is at most $\lambda$ times the optimal cost of the {\sc $c$-Balanced Partition} problem,
for some  constants $c'\le c$). Below we discuss known results for the {\sc Sparsest Cut} problem.
They also apply to {\sc Balanced Separator}: in all previous works, whenever there is an $O(\lambda)$-approximation algorithm
for {\sc Sparsest Cut} then there is an $O(\lambda)$-pseudoapproximation algorithm for {\sc Balanced Separator}
with the same complexity.

The first nontrivial guarantee was obtained by Leighton and Rao~\cite{LeightonRao},
who presented a $O(\log n)$-approximation algorithm based on a certain LP relaxation of the problem.
The approximation factor was improved to $O(\sqrt{\log n})$ in another seminal paper by Arora, Rao and Vazirani~\cite{ARV}
who used an SDP relaxation. Arora, Hazan and Kale~\cite{AHK}
showed how to (approximately) solve this SDP in $\tilde O(n^2)$ time using multicommodity flows
while preserving the $O(\sqrt{\log n})$ approximation factor.
Arora and Kale later developed in~\cite{AK} a more general method for solving SDPs
that allowed different tradeoffs between approximation factor and complexity;
in particular, they presented an $O(\log n)$-approximation algorithm using $O(\log^{O(1)} n)$ maxflow computations,
and a simpler version of $O(\sqrt{\log n})$-approximation with $\tilde O(n^2)$ complexity.

The algorithms in~\cite{AHK,AK} were based on the Multiplicative Weights Update method.
An alternative approach based on the so-called {\em cut-matching game} was proposed by 
Khandekar, Rao and Vazirani~\cite{Khandekar:06}; their method computes $O(\log^2 n)$-approximation for {\sc Sparsest Cut}
using $O(\log^{O(1)} n)$ maxflows. This was later improved to $O(\log n)$-approximation by
Orecchia, Schulman,  Vazirani and Vishnoi~\cite{Orecchia:08}.

The line of works above culminated in the result of Sherman~\cite{Sherman}, who showed
how to compute $O(\sqrt{(\log n)/\varepsilon})$-approximation for {\sc Sparsest Cut} 
using $O(n^\varepsilon\log^{O(1)}n)$ maxflows for any $\varepsilon\in[\Theta(1/\log n),\Theta(1)]$.
This effectively subsumes previous results, as taking $\varepsilon=\Theta(1/\log n)$ yields
an $O(\log n)$ approximation using $O(\log^{O(1)} n)$ maxflows,
while a sufficiently small constant $\varepsilon$ achieves an $O(\sqrt{\log n})$-approximation
and improves on the $\tilde O(n^2)$ runtime in~\cite{AHK,AK}.
In particular, using the recent almost linear-time maxflow algorithm~\cite{Chen:FOCS22}
yields $O(n^{1+\varepsilon})$ complexity
for $O(\sqrt{\log n})$-approximation.
(As usual, we assume in this paper that graph $G$ has $m=O(n\log n)$ edges. 
This can be achieved by  sparsifying the graph using the algorithm of Bencz\'ur and Karger \cite{BenczurKarger:96},
which with high probability preserves the cost of all cuts up to any given constant factor.)

\myparagraph{Our contributions} 
In this paper we present a new algorithm that computes 
an $O(\sqrt{(\log n)/\varepsilon})$ approximation for {\sc Sparsest Cut} w.h.p.\ whose expected runtime is $O((n^\varepsilon\log^{O(1)}n)\cdot T_{\tt maxflow})$ for given $\varepsilon\in[\Theta(1/\log n),\Theta(1)]$,
where $T_{\tt maxflow}=\Omega(n)$ is the runtime of a maxflow algorithm on a graph with $n$ nodes and $O(n\log n)$ edges.
It has the following features: \\
(i) It simplifies Sherman's algorithm in two different ways. \\
(ii) The algorithm is parallelizable: it can be implemented on $O(n^\varepsilon)$ processors in expected parallel runtime $O((\log^{O(1)}n)\cdot T_{\tt maxflow})$ 
(in any version of the PRAM model), where ``parallel runtime'' is defined as the maximum runtime over all processors.
\\
(iii) To prove algorithm's correctness, we introduce a new technique, which we believe may yield
smaller constants in the $O(\cdot)$ notation. 
Note that there are numerous papers that optimize constants for problems with a constant factor
approximation guarantee. 
We argue that this direction makes just as much sense for the {\sc Sparsest Cut} problem.
The question can be naturally formulated as follows: 
what is the fastest algorithm to compute $C\sqrt{\log n}$-approximation for a given constant $C$?
Alternatively, for maxflow-based algorithms one may ask
 what is the smallest $C=C_\varepsilon$ such that there is a $C\sqrt{\log n}$-approximation algorithm that uses $\tilde O(n^\varepsilon)$ maxflow,
for given $\varepsilon>0$. 
Unfortunately, it is impossible to directly compare our constant with that of Sherman:
we believe that the paper~\cite{Sherman} contains a numerical mistake
(see the footnote in Section~\ref{sec:chaining}).
An additional complication is that optimizing the constants may not be an easy task. 
Due to these considerations, we formulate our claim differently:
our proof technique should lead to smaller constants since our analysis is more compact and avoids case analysis present in~\cite{Sherman}.

To explain details, we need to give some background on Sherman's algorithm.
It builds on the work of Arora and Kale~\cite{AK} who approximately solve an SDP relaxation
using a (matrix) Multiplicative Weights update algorithm (MW).
The key subroutine is to identify constraints that are violated by the current primal solution.
Both~\cite{AK} and~\cite{Sherman} do this by solving a multicommodity flow problem.
Arora and Kale simply call Fleischer's multicommodity flow algorithm~\cite{Fleischer:00},
while Sherman designs a more efficient customized method for approximately solving this flow problem
using another application of MW. Our algorithm avoids solving the multicommodity flow problem,
and instead searches for ``violating paths'', i.e.\ paths that violate triangle inequalities in the SDP relaxation.
We show that this can be done by a simple randomized procedure that does not rely on MW.
Furthermore, independent calls to this procedure return violating paths that are mostly disjoint,
which allows parallelization: we can compute many such paths on different processors and
then take their union.

In order to compute violating paths, we first design procedure ${\tt Matching}(u)$
that takes vector $u\in\mathbb R^d$ and outputs a directed matching on a given set $S\subset\mathbb R^d$
with $|S|=\Theta(n)$.
(It works by calling a maxflow algorithm and then postprocessing the flow).
The problem then boils down to the following:
given vector $u$ randomly sampled from the Gaussian distribution, we need to sample vectors $u_1,\ldots,u_K$
so that set ${\tt Matching}(u_1)\circ\ldots\circ{\tt Matching}(u_K)$ contains many paths $(x_0,x_1,\ldots,x_K)$ 
with  ``large stretch'', i.e.\ with $\langle x_K-x_0,u\rangle\ge \Omega(K)$.
(Here ``$\circ$'' is the operation that ``chains together'' paths in a natural way).
This was also a key task in~\cite{Sherman}, where it was needed for implementing one inner MW step.
We refer to an algorithm that samples vectors $u_1,\ldots,u_K$ as a {\em chaining algorithm}.

Sherman's chaining algorithm can be viewed as an algorithmization of the proof in the original ARV paper~\cite{ARV}
and its subsequent improvement by Lee~\cite{Lee:05}. We present a simpler chaining algorithm with a very different proof;
as stated before, the new proof may yield smaller constants.

\myparagraph{Concurrent work} After finishing the first draft of the paper~\cite{vnk-arxiv}, we learned
about a very recent work by Lau-Tung-Wang~\cite{LTW},
which considers a generalization of {\sc Sparsest Cut} to {\em directed} graphs.
The authors presented an algorithm which, in the case of undirected graphs,
also simplifies Sherman's algorithm by computing violating paths instead of solving a multicommodity flow problem.
Unlike our paper,~\cite{LTW} does not consider parallelization, and uses Sherman's chaining algorithm as a black box.

Another very recent paper by Agarwal-Khanna-Li-Patil-Wang-White-Zhong~\cite{AKLPWWZ:24}
presented a parallel algorithm for {\em approximate maxflow}  with polylogarithmic depth and near-linear work in the PRAM model.
Using this algorithm, they presented, in particular, an $O(\log^3 n)$-approximation sparsest cut algorithm with polylogarithmic depth and near-linear work
(by building on the work~\cite{Nanongkai:STOC17} who showed how to replace exact maxflow computations in the cut-matching game~\cite{Khandekar:06,Orecchia:08}
with approximate maxflows).

\myparagraph{Other related work} The problem of computing a cut of conductance $\tilde O(\sqrt\phi)$
assuming the existence of a cut of conductance $\phi\in[0,1]$ (with various conditions on the balancedness)
has been considered in~\cite{Sarma:15,Kuhn:15,Orecchia:SODA11}.
Papers~\cite{Sarma:15,Kuhn:15} presented distributed algorithms for this problem (in the CONGEST model),
while~\cite{Peng:STOC14} observed that the {\sc BalCut} algorithm in~\cite{Orecchia:SODA11}
is parallelizable: it can be implemented in near-linear work and polylogarithmic depth.
Note that $\phi$ can be much smaller than 1.
\cite{Chang:PODC19} presented a distributed CONGEST algorithm for computing an expander decomposition of a graph.

\section{Background: Arora-Kale framework}
We will describe the algorithm only for the {\sc $c$-Balanced Separator} problem.
As shown in~\cite{AK}, the {\sc Sparsest Cut} problem can be solved by a very similar approach,
essentially by reducing it to the {\sc $c$-Balanced Separator} problem for some constant $c$;
we refer to~\cite{AK} for details.\footnote{
These details are actually not included in the journal version~\cite{AK}, but can be found
in~\cite{Kale:PhD} or in \url{https://www.cs.princeton.edu/~arora/pubs/mmw.pdf}, Appendix A.
}

Let $c_e\ge 0$ be the weight of edge $e$ in $G$.
The standard SDP relaxation of the {\sc $c$-Balanced Separator} problem can be written in vector form as follows (see~\cite{ARV}):
\begin{subequations}\label{eq:SDP:orig}
\begin{align}
\min \sum_{e=\{i,j\}\in E} & c_e || v_i - v_j||^2 \label{eq:SDP:orig:a} \\
||v_i||^2&=1                                       && \forall i \label{eq:SDP:orig:b} \\
||v_i-v_j||^2+||v_j-v_k||^2&\ge || v_i-v_k||^2     && \forall i,j,k \label{eq:SDP:orig:c} \\
\sum_{i<j}||v_i-v_j||^2 &\ge 4c(1-c)n^2 \label{eq:SDP:orig:d}
\end{align}
\end{subequations}

Here $v_i\in\mathbb R^n$ for each node $i\in V$. The optimum of this SDP divided by 4 is a lower bound on the minimum {\tt $c$-Balanced Separator} problem.
Arora and Kale considered a slightly different relaxation:

\begin{subequations}\label{eq:SDP}
\begin{align}
\min \sum_{e=\{i,j\}\in E} & c_e || v_i - v_j||^2  &  \min C &\bullet X \label{eq:SDP:a} \\
||v_i||^2&=1                                                                               &X_{ii}&=1                           && \forall i \label{eq:SDP:b} \\
\sum_{j=1}^{\ell(p)} ||v_{p_j}-v_{p_{j-1}}||^2 & \ge || v_{p_{\ell(p)}}-v_{p_0}||^2           &T_p\bullet X &\ge 0                   && \forall p \label{eq:SDP:c} \\
\sum_{i,j\in S:i<j}||v_i-v_j||^2 &\ge \xi n^2                                                     &K_S\bullet X&\ge \xi n^2                  && \forall S \label{eq:SDP:d} \\
                             &                                                             &X&\succeq  0 \label{eq:SDP:e}
\end{align}
\end{subequations}
Here $p$ stands for a path $p=(p_0,\ldots,p_{\ell(p)})$ in graph $G$, notation ``$\forall S$'' means all subsets $S\subseteq V$ 
of size at least $(1-c/4)n$, and $\xi=3c-4c^2$.
Matrix $X$ is defined via $X=V^T V$ where $V$ is the $n\times n$ matrix with columns $v_1,\ldots,v_n$.
We have  $C=\calL(G)$
and $T_p=\calL(G_p)-\calL(G_{(p_0,p_{\ell(p)})})$
where $\calL(\cdot)$ is the Laplacian of the corresponding graph and $G_q$ for a path $q$ is the undirected unweighted graph containing all edges of $q$.
Finally, $K_S(i,i)=|S|-1$ for $i\in S$,
$K_S(i,j)=-1$ for distinct $i,j\in S$,
and $K_S(i,j)=0$ in all other cases.

Note that triangle inequalities~\eqref{eq:SDP:orig:c} imply path inequalities~\eqref{eq:SDP:c},
while constraints~\eqref{eq:SDP:orig:b} and~\eqref{eq:SDP:orig:d} imply constraints~\eqref{eq:SDP:d} (see~\cite{AK}).
SDP~\eqref{eq:SDP} may be looser than~\eqref{eq:SDP:orig},
but its optimum divided by 4 is still a lower bound on the minimum {\tt $c$-Balanced Separator} problem.

In the sequel we will use a slight modification of~\eqref{eq:SDP}
in which constraints~\eqref{eq:SDP:c} are enforced for {\em all} sequences $p=(p_0,\ldots,p_{\ell(p)})$ 
of distinct nodes  in $G$, and not just paths in $G$.
This modification can only make the relaxation stronger;
constraints~\eqref{eq:SDP:c} are now equivalent to the triangle inequalities in~\eqref{eq:SDP:orig:c}.

The dual of~\eqref{eq:SDP} is as follows. It has variables $y_i$ for every node $i$,
$f_p$ for every path $p$, and $z_S$ for every set $S$ of size at least $(1-c/4)n$.
Let ${\tt diag}(y)$ be the diagonal matrix with vector $y$ on the diagonal:
\begin{subequations}\label{eq:SDPdual}
\begin{align}
\max \sum_i y_i + \xi n^2 \sum_S z_S \\
{\tt diag}(y) + \sum_p f_p T_p + \sum_S z_S K_S \preceq C \\
f_p,z_S \ge 0 \quad \forall p,S
\end{align}
\end{subequations}

\subsection{Matrix multiplicative weights algorithm}\label{sec:MW}
To solve the above SDP, \cite{AK} converts it to a sequence of feasibility problems.
First, an interval $[L,U]$ is computed containing an optimal value of the objective.
One can use, for example, the algorithm in~\cite{Orecchia:08} to get such an interval
with $U/L=O(\log n)$ via $O(\log^{O(1)} n)$ maxflows.
Fix a value $\alpha\in[L,U]$, and replace the objective \eqref{eq:SDP:a} with the constraint
\begin{equation}
 C\bullet X  \le \alpha \label{eq:SDP:a'}  \tag{\ref{eq:SDP:a}$'$}
\end{equation}
It suffices to try $O(\log (U/L))=O(\log \log n)$ values of threshold $\alpha$ if we are willing
to accept the loss by a constant factor in the approximation ratio.

Let (\ref{eq:SDP}$'$) be the system consisting of constraints \eqref{eq:SDP:a'} and \eqref{eq:SDP:b}-\eqref{eq:SDP:e}.
To check the feasibility of this system, Arora and Kale apply the {Matrix Multiplicative Weights} (MW) algorithm which
we review in Appendix~\ref{sec:app:MW}. The main computational subroutine is procedure {\tt Oracle}
that, given current matrix $X$ of the form $X=V^T V$, $V\in\mathbb R^{n\times n}$, should either (i) find an inequality violated by $X$,
or (ii) find a $\Theta(1)$-balanced cut of value at most $\kappa\alpha$ where $\Theta(\kappa)$ is the desired
approximation factor. 
Working directly with matrix $V$ would be too slow (even storing it requires $\Theta(n^2)$ space and thus $\Omega(n^2)$ time).
To reduce complexity, Arora and Kale work instead with matrix $\tilde V\in\mathbb R^{d\times n}$,
$d\ll n$ so that $X\approx \tilde X\eqdef \tilde V^T \tilde V$.
Let $v_1,\ldots,v_n\in\mathbb R^n$ and $\tilde v_1,\ldots,\tilde v_n\in\mathbb R^d$
be the columns of $V$ and $\tilde V$, respectively. Below we give a formal specification of {\tt Oracle}.
Note that the oracle has access only to vectors $\tilde v_1,\ldots,\tilde v_n$.

\vspace{5pt}
\noindent\hspace{0pt}
\begin{minipage}{\dimexpr\columnwidth-10pt\relax}
\fbox{\parbox{\textwidth}{

\noindent {\bf Input}:
 vectors $v_1,\ldots,v_n\in\mathbb R^n$ and $\tilde v_1,\ldots,\tilde v_n\in\mathbb R^d$
satisfying 
\begin{subequations}
\begin{align}
||\tilde v_i||^2&\;\;\le\;\; 2 && \forall i \label{eq:tilde-v:one} \\
\sum_{i,j\in V:i<j}||\tilde v_i-\tilde v_j||^2&\;\;\ge\;\; \tfrac {\xi n^2}4 \label{eq:tilde-v:two} \\
|\;||\tilde v_i||^2-||v_i||^2\;| &\;\;\le\;\; \gamma (||\tilde v_i||^2+\tau) && \forall i \label{eq:GramApproximation:i} \\
|\;||\tilde v_i-\tilde v_j||^2-||v_i-v_j||^2\;| &\;\;\le\;\; \gamma (||\tilde v_i-\tilde v_j||^2+\tau) && \forall i,j \label{eq:GramApproximation:ij}
\end{align}
\end{subequations}
for some constants $\gamma,\tau>0$. Let $X=V^T V$ and $\tilde X=\tilde V^T\tilde V$
 where $V\in\mathbb R^{n\times n}$ and $\tilde V\in\mathbb R^{d\times n}$
 are the matrices with columns $\{v_i\}$ and $\{\tilde v_i\}$, respectively.

\noindent {\bf Output}: either 
(i) variables $f_p\ge 0$ and symmetric matrix $F\preceq C$ such that 
\begin{equation}
\left(\sum\nolimits_p f_p T_p - F\right)\bullet X\le -\alpha\label{eq:OracleInequality}
\end{equation}
 or 
(ii) a $\Theta(1)$-balanced cut of value at most $\kappa\alpha$.

}}
\end{minipage}
\vspace{5pt}

The number of iterations of the MW algorithm will depend on the maximum possible spectral norm of matrix $\tfrac \alpha n I + \sum_p f_p T_p - F$.
This parameter is called the {\em width} of the oracle, and will be denoted as $\rho$.
We will use the bound $\rho=||\tfrac \alpha n I + \sum_p f_p T_p - F||\le \tfrac \alpha n + ||\sum_p f_p T_p - F||$.
The algorithm has the following guarantee (see Appendix~\ref{sec:app:MW} for details).\!\!\!

\begin{theorem}[\cite{AK}]
If in the MW algorithm the first $T=\lceil \frac{4\rho^2 n^2 \ln n}{\epsilon^2}\rceil$
calls to {\tt Oracle} output 
 option {\em (i)} then the optimum value of SDP~\eqref{eq:SDP} is at least $\alpha-\epsilon$.
\end{theorem}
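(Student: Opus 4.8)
The plan is to cast this as a standard regret/potential analysis for the Matrix Multiplicative Weights algorithm specialized to the SDP feasibility problem, following the template of Arora--Kale. First I would set up the MW update: maintaining a density matrix $X^{(t)}$ (normalized to trace $n$) obtained by exponentiating the negative sum of past ``loss'' matrices $M^{(t)} = \tfrac{\alpha}{n}I + \sum_p f_p^{(t)} T_p - F^{(t)}$, scaled by the learning rate. The MW regret bound (which I would invoke from Appendix~\ref{sec:app:MW}) guarantees that after $T$ steps, for the fixed optimal SDP solution $X^\star$ (rescaled to trace $n$), one has
\begin{equation}
\frac{1}{T}\sum_{t=1}^T M^{(t)}\bullet X^{(t)} \;\ge\; \frac{1}{T}\sum_{t=1}^T M^{(t)}\bullet \frac{X^\star}{\operatorname{tr}X^\star/n} \;-\; \frac{\rho^2 \ln n}{\eta} \;-\; \eta\rho^2,
\end{equation}
or some such inequality; choosing $\eta$ optimally and using $T = \lceil 4\rho^2 n^2 \ln n/\epsilon^2\rceil$ makes the additive error at most $\epsilon/n$ per step in an appropriately normalized sense — this is exactly where the stated value of $T$ comes from, so tracking the normalization carefully is the crux of getting the clean ``$\alpha-\epsilon$'' bound rather than something messier.

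Next I would use the two sides of this inequality. On the ``algorithm'' side: by hypothesis every one of the first $T$ oracle calls returns option (i), i.e.\ $(\sum_p f_p^{(t)} T_p - F^{(t)})\bullet X^{(t)} \le -\alpha$; adding the $\tfrac{\alpha}{n}I \bullet X^{(t)} = \alpha$ term (since $X^{(t)}$ has trace $n$) gives $M^{(t)}\bullet X^{(t)} \le 0$ for every $t$, hence the left-hand average is $\le 0$. On the ``comparator'' side: I would show that $M^{(t)}\bullet X^\star$ is large whenever $X^\star$ is a good SDP solution. Indeed $F^{(t)}\preceq C$ gives $F^{(t)}\bullet X^\star \le C\bullet X^\star \le \alpha$ if $X^\star$ is feasible for (\ref{eq:SDP}$'$) — wait, more carefully: the point is that $T_p\bullet X^\star \ge 0$ by the path inequalities~\eqref{eq:SDP:c}, so $\sum_p f_p^{(t)} T_p\bullet X^\star \ge 0$, and $-F^{(t)}\bullet X^\star \ge -C\bullet X^\star$. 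Combining, $M^{(t)}\bullet X^\star \ge \alpha + 0 - C\bullet X^\star$. After normalizing $X^\star$ to trace $n$ and summing, the comparator average is at least $\alpha - (C\bullet X^\star)$ up to the normalization factor.

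Putting the two sides together with the regret bound yields $0 \ge \alpha - C\bullet X^\star - \epsilon$ (modulo the trace-normalization bookkeeping), i.e.\ any feasible $X^\star$ must have $C\bullet X^\star \ge \alpha - \epsilon$; contrapositively, if the SDP optimum were below $\alpha - \epsilon$ some $X^\star$ would witness it and one of the oracle calls would have had to fail to return option (i). The main obstacle I anticipate is purely technical: correctly handling the rescaling of $X^\star$ so that it becomes a valid MW comparator of trace exactly $n$ (the SDP solution has $X_{ii}=1$ hence trace $n$ already, which is convenient), and tracking how the factor $\rho$ and the choice $T=\lceil 4\rho^2 n^2\ln n/\epsilon^2\rceil$ interact with the learning rate to produce exactly $\epsilon$ rather than a constant multiple of it — this requires plugging in the specific width bound $\rho \le \tfrac\alpha n + \|\sum_p f_p T_p - F\|$ and being careful that the ``$+\tfrac{\alpha}{n}I$'' shift was introduced precisely to make all loss matrices have the right sign/scale for the MW guarantee. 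None of this is deep, but it is where the precise constants live, and it is presumably carried out in detail in the appendix the theorem cites.
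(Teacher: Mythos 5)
This theorem is not proved in the paper at all: it is imported from Arora--Kale \cite{AK} (and restated in Appendix~\ref{sec:app:MW} with the explicit learning rate $\eta=\epsilon/(2\rho^2 n)$), so there is no in-paper proof to compare against. Your sketch is the standard matrix-MW regret argument and is correct in outline, and it is essentially the argument of \cite{AK}: the oracle forces $N^{(t)}\bullet X^{(t)}\le 0$, the regret bound with $\eta=\epsilon/(2\rho^2n)$ and $T=\lceil 4\rho^2n^2\ln n/\epsilon^2\rceil$ gives per-step error $\epsilon/n$ in the trace-one normalization (your displayed regret expression $\rho^2\ln n/\eta+\eta\rho^2$ is not quite the right formula --- the two terms should be $\eta\rho^2$ and $\ln n/(\eta T)$ per step --- but you flag this and the final arithmetic is the intended one), and any feasible $X^\star$, which automatically has trace $n$ by \eqref{eq:SDP:b}, serves as the comparator, yielding $C\bullet X^\star\ge\alpha-\epsilon$. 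The only cosmetic difference from \cite{AK} is that they extract the conclusion by observing that $\tfrac1T\sum_t N^{(t)}\preceq\tfrac{\epsilon}{n}I$ directly exhibits a feasible solution of the dual \eqref{eq:SDPdual} of value at least $\alpha-\epsilon$ (weak duality), whereas you argue the contrapositive on the primal side; the two are equivalent, and the dual phrasing has the minor advantage of handling the general feedback matrix with nonzero $y$ and $z_S$ terms uniformly.
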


\subsection{Oracle implementation}

To implement the oracle, Arora and Kale interpret values $f_p$ as a {\em multicommodity flow}
in graph $G$, i.e.\ a flow that sends $f_p$ units of demand between the endpoints of $p$.
Given this flow, introduce the following notation. Let $f_e$ be the flow on edge $e$ (i.e.\ $f_e=\sum_{p \ni e}f_p$).
Let $d_{ij}$ be the total flow between nodes $i$ and $j$ 
(i.e.\ $d_{ij}=\sum_{p\in\calP_{ij}}f_p$ 
where $\calP_{ij}$ 
is the set of paths from $i$ to $j$).
Finally, let $d_i$ be the total flow from node $i$ (i.e.\ $d_i=\sum_{j}d_{ij})$.
Given parameter $\pi>0$, a {\em valid $\pi$-regular flow} is one that satisfies capacity constraints:
$f_e\le c_e$ for all edges $e$ and $d_i\le \pi$ for all nodes $i$.

The oracle in~\cite{AK} computes a $\pi$-regular flow $f$ for some parameter $\pi$,
and sets $F$ to be the Laplacian of the {\em flow graph} (i.e.\ the weighted graph where edge $e$ has weight $f_e$).
Capacity constraints then ensure that $F\preceq C$ (because $C-F$ is the Laplacian of the weighted graph with weights $c_e-f_e\ge 0$
on edges $e\in E$).
Let $D$ be the Laplacian of the {\em demand graph} (i.e.\ the complete weighted graph where edge $\{i,j\}$ has weight $d_{ij}$).
It can be checked that $\sum_p f_p T_p=F-D$.
Thus, the oracle needs to ensure that $D\bullet X\ge\alpha$, or equivalently
\begin{equation}\label{eq:largeD}
\sum_{i<j} d_{ij} ||v_i-v_j||^2 \ge \alpha
\end{equation}
All degrees in the demand graph are bounded by $\pi$, therefore $||D||\le 2\pi$.
Thus, the width of the oracle can be bounded as 
$||\rho||\le \tfrac \alpha n + ||D||\le \tfrac \alpha n + 2\pi$.


Below we summarize three known implementations of the oracle.
The first two are due to Arora and Kale~\cite{AK} and the third one is due to Sherman~\cite{Sherman}.
\begin{enumerate}
\item Using $O(1)$ expected maxflow computations, the oracle computes either a $\pi$-regular flow with $\pi=O(\tfrac {\alpha\log n}n)$
or a $\Theta(1)$-balanced cut of capacity at most $O(\alpha\log n)$.
\item Using $O(1)$ expected multicommodity flow computations, the oracle computes either a $\pi$-regular flow with $\pi=O(\tfrac {\alpha}n)$
or a $\Theta(1)$-balanced cut of capacity at most $O(\alpha\sqrt{\log n})$.
\item Let $\varepsilon\in[O(1/\log n),\Omega(1)]$. Using $O(n^\varepsilon \log^{O(1)}n)$ expected maxflow computations,
the oracle either computes a $\pi$-regular flow with $\pi=O(\tfrac {\alpha}{\varepsilon n})$,
or a $\Theta(1)$-balanced cut of capacity at most $O\left(\alpha\sqrt{\tfrac{\log n}\varepsilon}\right)$.
\end{enumerate}
By the discussion in Section~\ref{sec:MW}, these oracles lead to algorithms with approximation factors $O(\log n)$, $O(\sqrt{\log n})$ and $O\left(\sqrt{\tfrac{\log n}\varepsilon}\right)$, respectively.

To conclude this section, we discuss how to verify condition~\eqref{eq:largeD} in practice.
(Recall that we only have an access to approximations $\tilde v_i$ of vectors $v_i$.)
\begin{proposition}\label{prop:NGANGKAS}
Suppose parameters $\tau,\gamma$ in eq.~\eqref{eq:GramApproximation:i}-\eqref{eq:GramApproximation:ij} satisfy $\tau\le 2$ and $\gamma\le \tfrac{\alpha}{20n\pi}$. Then condition
\begin{equation}\label{eq:largeDtilde}
\sum_{i<j} d_{ij} ||\tilde v_i-\tilde v_j||^2 \ge 2\alpha
\end{equation}
implies condition~\eqref{eq:largeD}.
\end{proposition}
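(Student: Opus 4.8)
The plan is to move from $\|\tilde v_i-\tilde v_j\|^2$ to $\|v_i-v_j\|^2$ pair by pair using the Gram approximation bound \eqref{eq:GramApproximation:ij}, weight by the demand $d_{ij}$, sum over all pairs, and then show that the accumulated error is small because the total demand $\sum_{i<j} d_{ij}$ is small for a $\pi$-regular flow.

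Concretely, I would first rewrite \eqref{eq:GramApproximation:ij} as the one-sided estimate $\|v_i-v_j\|^2 \ge \|\tilde v_i-\tilde v_j\|^2-\gamma(\|\tilde v_i-\tilde v_j\|^2+\tau)$, multiply through by $d_{ij}\ge 0$, and sum over $i<j$ to get
\[
\sum_{i<j} d_{ij}\|v_i-v_j\|^2 \;\ge\; \sum_{i<j} d_{ij}\|\tilde v_i-\tilde v_j\|^2 \;-\; \gamma\sum_{i<j} d_{ij}\|\tilde v_i-\tilde v_j\|^2 \;-\; \gamma\tau\sum_{i<j} d_{ij}.
\]
By hypothesis \eqref{eq:largeDtilde} the first term on the right is at least $2\alpha$, so it suffices to bound the two error terms by $\alpha$ in total. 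For this I would use two elementary facts. From $\|\tilde v_i\|^2\le 2$ and the bound $\|a-b\|^2\le 2\|a\|^2+2\|b\|^2$ we get $\|\tilde v_i-\tilde v_j\|^2\le 8$ for all $i,j$; and from $\pi$-regularity, $d_i\le\pi$ for every node, so $\sum_{i<j} d_{ij}=\tfrac12\sum_i d_i\le \tfrac12 n\pi$. Plugging these in, the two error terms are at most $8\gamma\cdot\tfrac12 n\pi+\gamma\tau\cdot\tfrac12 n\pi$, which using $\tau\le 2$ is at most $5\gamma n\pi$, and using $\gamma\le\tfrac{\alpha}{20 n\pi}$ is at most $\tfrac{\alpha}{4}$. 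Hence $\sum_{i<j} d_{ij}\|v_i-v_j\|^2\ge 2\alpha-\tfrac{\alpha}{4}\ge\alpha$, which is exactly \eqref{eq:largeD}.

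I do not expect a genuine obstacle; the one point that needs care is which quantity must be bounded from above. The term $-\gamma\sum_{i<j} d_{ij}\|\tilde v_i-\tilde v_j\|^2$ forces an \emph{upper} bound on $\|\tilde v_i-\tilde v_j\|^2$, obtained from $\|\tilde v_i\|^2\le 2$, rather than reuse of the lower bound $2\alpha$. An alternative is to keep the factor $(1-\gamma)$ in front of $\sum_{i<j} d_{ij}\|\tilde v_i-\tilde v_j\|^2$ and instead bound $\gamma\alpha$ using $n\pi\ge\alpha$ (which holds for all three oracle implementations listed above), but the argument sketched here is self-contained and does not need that relation.
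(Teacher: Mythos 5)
Your proof is correct and follows essentially the same route as the paper: both use $\|\tilde v_i\|^2\le 2$ to get $\|\tilde v_i-\tilde v_j\|^2\le 8$, apply the Gram approximation bound termwise, and control the accumulated error via the degree bound $d_i\le\pi$ (your bound $\sum_{i<j}d_{ij}\le\tfrac12 n\pi$ is in fact slightly tighter than the paper's $2n\pi$). No issues.
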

\begin{proof}
Denote $z_{ij}=||v_i-v_j||^2$ and $\tilde z_{ij}=||\tilde v_i-\tilde v_j||^2$.
We have $||\tilde v_i||^2\le 2$, $||\tilde v_j||^2\le 2$ and hence $\tilde z_{ij}\le 8$.
By Theorem~\ref{th:GramApproximation} we then have $|\tilde z_{ij}-z_{ij}|\le\gamma(\tilde z_{ij}+\tau)\le 10\gamma$.
This implies that 
 $\sum_{i<j} d_{ij} |\tilde z_{ij}-z_{ij}| \le 10\gamma \cdot \sum_{i<j}d_{ij}\le 10\gamma \cdot 2n\pi\le \alpha$.
 The claim follows.
\end{proof}


\section{Our algorithm}
In this section we present our implementation of the oracle.
To simplify notation, we assume in this section that vectors $\tilde v_i$ for $i\in V$ are unique,
and rename the nodes in $V$ so that $\tilde v_x=x$ for each $x\in V$.
Thus, we now have $V\subseteq\mathbb R^d$. The ``true'' vector in $\mathbb R^n$ corresponding to $x\in V$ is still denoted as $v_x$.

Recall that the oracles in~\cite{AK,Sherman} do one of the following: 
{
\setlist{nolistsep}\em
\begin{itemize}
\item  output a cut; 
\item output multicommodity flows $f_p$ satisfying~\eqref{eq:largeD}, and set $F$ to be its flow graph. 
\end{itemize}\em
\noindent Our oracle will use a third option described in the lemma below.
\begin{lemma}\label{lemma:ViolatingPaths}
Let $M$ be a set of paths on $V$ such that each $p\in M$ violates the path inequality by some amount $\tfrac 12\Delta>0$.
In other words, we require that $T_p\bullet X\le -\tfrac 12\Delta$, or equivalently
\begin{equation}\label{eq:PathViolation}
\sum_{j=1}^{\ell(p)} ||v_{p_j}-v_{p_{j-1}}||^2  \;\;\le\;\; || v_{p_{\ell(p)}}-v_{p_0}||^2 - \tfrac 12\Delta
\end{equation}
Let $\calG_F$ and $\calG_D$ be respectively flow and demand graphs of the multicommodity flow
defined by $M$ (where each path carries one unit of flow).
Set $f_p=\frac{2\alpha}{|M|\Delta}$ for all $p\in M$, $f_p=0$ for $p\notin M$, and $F=0$.
Then these variables give a valid output of the oracle with width 
$\rho\le \tfrac \alpha n + \tfrac {4\alpha(\pi_F+\pi_D)}{|M|\Delta}$
where $\pi_F,\pi_D$ are the maximum degrees of $\calG_F,\calG_D$, respectively.
\end{lemma}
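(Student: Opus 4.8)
The plan is to verify directly that the variables $f_p = \tfrac{2\alpha}{|M|\Delta}$ (for $p \in M$), $f_p = 0$ otherwise, and $F = 0$ satisfy the two requirements of option (i) in the oracle specification: first that $F \preceq C$, and second the key inequality $\left(\sum_p f_p T_p - F\right)\bullet X \le -\alpha$ from \eqref{eq:OracleInequality}; then I would bound the width. The condition $F = 0 \preceq C$ is immediate since $C$ is a Laplacian and hence PSD. For the main inequality, since $F = 0$ it suffices to show $\sum_p f_p (T_p \bullet X) \le -\alpha$. By hypothesis each $p \in M$ violates its path inequality by $\tfrac12\Delta$, i.e.\ $T_p \bullet X \le -\tfrac12\Delta$; summing over the $|M|$ paths in $M$ with weight $f_p = \tfrac{2\alpha}{|M|\Delta}$ gives $\sum_{p\in M} f_p (T_p\bullet X) \le |M|\cdot \tfrac{2\alpha}{|M|\Delta}\cdot\left(-\tfrac12\Delta\right) = -\alpha$, as required.

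It remains to bound the width $\rho = \left\|\tfrac{\alpha}{n} I + \sum_p f_p T_p - F\right\|$. Using $F = 0$ and the triangle inequality for the spectral norm, $\rho \le \tfrac{\alpha}{n} + \left\|\sum_p f_p T_p\right\|$. Here I would invoke the identity from Section~2.3 that $\sum_p f_p T_p = \calG_F - \calG_D$ in Laplacian terms — more precisely, with our unit-flow interpretation, $\sum_p f_p T_p$ equals $\tfrac{2\alpha}{|M|\Delta}$ times (Laplacian of the flow graph minus Laplacian of the demand graph), since each $T_p$ contributes the Laplacian of the path $p$ as an edge-subgraph minus the Laplacian of the single demand edge $\{p_0, p_{\ell(p)}\}$. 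Thus $\left\|\sum_p f_p T_p\right\| \le \tfrac{2\alpha}{|M|\Delta}\left(\|L_{\calG_F}\| + \|L_{\calG_D}\|\right)$. For any weighted graph the Laplacian spectral norm is at most twice the maximum weighted degree, so $\|L_{\calG_F}\| \le 2\pi_F$ and $\|L_{\calG_D}\| \le 2\pi_D$, giving $\left\|\sum_p f_p T_p\right\| \le \tfrac{4\alpha(\pi_F + \pi_D)}{|M|\Delta}$ and hence $\rho \le \tfrac{\alpha}{n} + \tfrac{4\alpha(\pi_F+\pi_D)}{|M|\Delta}$.

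The only genuinely delicate point is making the identity $\sum_p f_p T_p = \tfrac{2\alpha}{|M|\Delta}(L_{\calG_F} - L_{\calG_D})$ precise: one must check that $T_p \bullet X = \left(\sum_{j} \|v_{p_j}-v_{p_{j-1}}\|^2\right) - \|v_{p_{\ell(p)}}-v_{p_0}\|^2$ is exactly the inner product of $X$ with $L_{\text{path}(p)} - L_{\{p_0,p_{\ell(p)}\}}$, so that $T_p$ itself is that Laplacian difference, and then sum — flows on a shared edge add, so the coefficient of each physical edge in $\sum_{p\in M} T_p$ is its flow value $f_e$ (resp.\ $d_{ij}$ on the demand side). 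Everything else is the routine spectral-norm bookkeeping above; I would present the whole argument in three short paragraphs (feasibility $F\preceq C$; the $\le -\alpha$ bound; the width bound), reusing the Section~2.3 machinery rather than re-deriving it.
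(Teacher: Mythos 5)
Your proposal is correct and follows essentially the same route as the paper's proof: verify $F=0\preceq C$, sum the per-path violations $T_p\bullet X\le-\tfrac12\Delta$ against the weights $f_p=\tfrac{2\alpha}{|M|\Delta}$ to get $\le-\alpha$, and bound the width via the identity $\sum_p f_p T_p=\tfrac{2\alpha}{|M|\Delta}(\tilde F-\tilde D)$ together with the fact that a weighted Laplacian's spectral norm is at most twice the maximum degree. Nothing is missing.
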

\begin{proof}
We have $F\preceq C$ and $(\sum_p T_p-F)\bullet X\le\sum_p f_p  \cdot(-\tfrac 12\Delta)=-\alpha$,
so condition~\eqref{eq:OracleInequality} holds.
It can be checked that $\sum_p f_p T_p = \frac{2\alpha}{|M|\Delta}(\tilde F-\tilde D)$
where $\tilde F$ and $\tilde D$ are the Laplacians of respectively $\calG_F$ and $\calG_D$.
Therefore,
$\rho=||\tfrac \alpha n I + \sum_p f_p T_p - F|| 
=||\tfrac \alpha n I + \tfrac{2\alpha}{|M|\Delta}(\tilde F-\tilde D)|| \
\le ||\tfrac \alpha n I|| + \tfrac{2\alpha}{|M|\Delta}(||\tilde F||+||\tilde D||)
\le\tfrac \alpha n +\tfrac{2\alpha}{|M|\Delta}(2\pi _F+2\pi_ D)$.
\end{proof}
Recall that in the specification of the oracle we required paths $p$ to have distinct nodes
(to make  the number of constraints finite). This does not cause problems
for Lemma~\ref{lemma:ViolatingPaths}: if some path in $M$ does not satisfy this,
then before applying the lemma we can shorten it while preserving endpoints and condition~\eqref{eq:PathViolation}.

The following proposition shows how to verify condition~\eqref{eq:PathViolation}
for unobserved variables $v_x\in\mathbb R^n$ using observed variables $x\in\mathbb R^d$.
Its proof is very similar to that of Proposition~\ref{prop:NGANGKAS}, and is omitted.
\begin{proposition}
Suppose parameters $\tau,\gamma$ in Theorem~\ref{th:GramApproximation} satisfy $\tau\le 2$ and $\gamma\le \tfrac{\Delta}{20(K+1)}$ for some integer $K\ge 1$. Then condition
\begin{equation}\label{eq:PathViolationtilde}
\sum_{j=1}^{\ell(p)} ||p_j-p_{j-1}||^2  \;\;\le\;\; || p_{\ell(p)}-p_0||^2 - \Delta
\end{equation}
 implies condition~\eqref{eq:PathViolation}, assuming that $\ell(p)\le K$.
\end{proposition}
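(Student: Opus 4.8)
The plan is to mirror the proof of Proposition~\ref{prop:NGANGKAS}: bound the total error incurred when replacing the ``true'' squared distances $\|v_{p_j}-v_{p_{j-1}}\|^2$ by the observed ones $\|p_j-p_{j-1}\|^2$, and show it is at most $\tfrac12\Delta$, so that the gap of $\Delta$ in~\eqref{eq:PathViolationtilde} shrinks to at least $\tfrac12\Delta$ in~\eqref{eq:PathViolation}. First I would introduce the shorthand $z_{ij}=\|v_i-v_j\|^2$ and $\tilde z_{ij}=\|i-j\|^2$ (the latter using the convention $\tilde v_x=x$ from this section), and recall that the input guarantee gives $\|\tilde v_i\|^2\le 2$, hence $\tilde z_{ij}\le 8$ for every pair. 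By the Gram-approximation bound~\eqref{eq:GramApproximation:ij} (invoked via Theorem~\ref{th:GramApproximation}) together with $\tau\le 2$, this yields the per-pair bound $|\tilde z_{ij}-z_{ij}|\le\gamma(\tilde z_{ij}+\tau)\le 10\gamma$.

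Next I would count how many pair-terms appear in the inequality. The path $p=(p_0,\dots,p_{\ell(p)})$ with $\ell(p)\le K$ contributes $\ell(p)\le K$ consecutive-pair terms on the left side of~\eqref{eq:PathViolation}, plus the single endpoint term $\|v_{p_{\ell(p)}}-v_{p_0}\|^2$ on the right side — a total of at most $K+1$ terms. Moving everything to one side, the difference between the $v$-version and the $\tilde v$-version of the inequality is at most $\sum$ of $K+1$ quantities each bounded by $10\gamma$, i.e.\ at most $10\gamma(K+1)$. Using the hypothesis $\gamma\le\tfrac{\Delta}{20(K+1)}$, this total error is at most $\tfrac12\Delta$.

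Finally I would assemble the implication: if~\eqref{eq:PathViolationtilde} holds, then
$\sum_{j=1}^{\ell(p)}\|v_{p_j}-v_{p_{j-1}}\|^2 \le \sum_{j=1}^{\ell(p)}\tilde z_{p_jp_{j-1}} + \ell(p)\cdot 10\gamma \le \tilde z_{p_{\ell(p)}p_0} - \Delta + \ell(p)\cdot 10\gamma \le \|v_{p_{\ell(p)}}-v_{p_0}\|^2 + 10\gamma + \ell(p)\cdot 10\gamma - \Delta \le \|v_{p_{\ell(p)}}-v_{p_0}\|^2 - \tfrac12\Delta$,
which is exactly~\eqref{eq:PathViolation}. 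I do not expect any real obstacle here; the only point requiring a little care is the bookkeeping of signs — the endpoint term sits on the opposite side of the inequality from the consecutive-pair terms, so one must make sure the error estimates are applied in the direction that weakens the inequality, but since we bound absolute values this is automatic. The counting of exactly $K+1$ terms (not $2K$ or similar) is what makes the stated constant $\tfrac{\Delta}{20(K+1)}$ tight enough to match Proposition~\ref{prop:NGANGKAS}'s bookkeeping, so I would state that count explicitly rather than leaving it implicit.
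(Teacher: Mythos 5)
Your proof is correct and is exactly the argument the paper intends: the paper omits this proof, stating only that it is ``very similar to that of Proposition~\ref{prop:NGANGKAS}'', and your write-up carries out precisely that adaptation --- the per-pair bound $|\tilde z_{ij}-z_{ij}|\le\gamma(\tilde z_{ij}+\tau)\le 10\gamma$ from $\|\tilde v_i\|^2\le 2$, the count of at most $K+1$ pair-terms, and the resulting total error $10\gamma(K+1)\le\tfrac12\Delta$. Nothing further is needed.
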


\begin{remark}
Sherman~\cite{Sherman} explicitly tries to (approximately) solve the multicommodity flow problem:
find valid flows $\{f_p\}$ in $G$ with demands $\{d_{xy}\}$ that maximize $\sum_{x,y}d_{xy}||x-y||^2$.
This is done by an iterative scheme via the Multiplicative Weights (MW) framework.
Using the option in Lemma~\ref{lemma:ViolatingPaths} has the following advantages over this approach.
\begin{enumerate}
\item We can avoid another application of MW and thus simplify the algorithm.
\item The oracle can be easily parallelized: we can compute different ``violating paths'' on different processors
and then take their union. (Of course, we still need to make sure that these paths are ``sufficiently disjoint''
so that the degrees of graphs $\calG_F,\calG_D$ and hence the oracle width remain small). 

\end{enumerate}
\end{remark}

\subsection{Correlated Gaussians and measure concentration}

We  write $u\sim\calN$ to indicate that $u$ is a random vector in $\mathbb R^d$ with Gaussian independent components $u_i\sim\calN(0,1)$.
Throughout the paper notation ${\tt Pr}_u[\cdot]$ means the probability under distribution $u\sim\calN$.
We write $(u,u')\sim\calN_\omega$ for $\omega\in[0,1)$ to indicate that $(u,u')$ are random vectors in $\mathbb R^d\times \mathbb R^d$
such that for each $i\in[d]$, pair $(u_i,u'_i)$ is an independent 2-dimensional Gaussian with mean $(0,0)^T$ and covariance matrix $\begin{pmatrix}1 & \omega \\ \omega & 1\end{pmatrix}$.
We write $u'\sim_\omega u$ to indicate that  $u'$ is an {\em $\omega$-correlated copy of~$u$}~\cite{Sherman},
i.e.\ $(u,u')$ is generated according to $(u, u')\sim\calN_\omega$ conditioned on fixed $u$.
It can be checked that for each $i\in [d]$, $u'_i$ is an independent Gaussian with mean $\omega \cdot u_i$ and variance $1-\omega^2$.
Note, if $(u,u')\sim\calN_\omega$ then $u\sim\calN$ and $u'\sim\calN$.
Conversely, the process $u\sim\calN,u'\sim_\omega u$ generates  pair $(u,u')$ with distribution $\calN_\omega$.
The same is true for the process  $u'\sim\calN, u\sim_\omega u'$.

The key property for obtaining an $O(\sqrt{\log n})$-approximation algorithm is {\em measure concentration} of the Gaussian distribution.
This property can be expressed in a number of different ways; we will use the following version.\footnote{Theorem~\ref{th:isoperimetric}
is formulated in \cite{Mossel06} for the discrete cube, but the proof also works for the Gaussian distribution.
For completeness, we reproduce the proof in Appendix~\ref{sec:isoperimetric}. 
}
\begin{theorem}[\cite{Mossel06}]\label{th:isoperimetric}
Consider sets $\calA\subseteq\mathbb R^d$, $\calB\subseteq\mathbb R^d$ 
with ${\tt Pr}_{u}[u\in \calA]={\tt Pr}_{u'}[u'\in \calB]=\delta$.
Then
$$
{\tt Pr}_{(u,u')\sim\calN_\omega}[(u,u')\in \calA\times \calB]\ge \delta^{2/(1-\omega)}
$$
\end{theorem}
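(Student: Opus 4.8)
The plan is to prove Theorem~\ref{th:isoperimetric} by reducing it to the standard Gaussian isoperimetric/hypercontractivity machinery, following the argument that works for the discrete cube in \cite{Mossel06}. The cleanest route is via the Gaussian noise operator (Ornstein--Uhlenbeck semigroup) $U_\rho$ together with hypercontractivity. First I would recall that if $(u,u')\sim\calN_\omega$, then the conditional expectation operator sending a function $g:\mathbb R^d\to\mathbb R$ to $(U_\omega g)(u)\eqdef \mathbb E[g(u')\mid u]$ is exactly the Ornstein--Uhlenbeck operator with parameter $\omega$, because each coordinate $u_i'$ is an independent Gaussian with mean $\omega u_i$ and variance $1-\omega^2$ (as noted just above the theorem statement). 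Writing $\mathbf 1_\calA$ and $\mathbf 1_\calB$ for the indicator functions, the quantity to bound is
\begin{equation}
{\tt Pr}_{(u,u')\sim\calN_\omega}[(u,u')\in\calA\times\calB] \;=\; \langle \mathbf 1_\calA,\, U_\omega \mathbf 1_\calB\rangle,
\end{equation}
where the inner product is in $L^2(\mathbb R^d,\calN)$.

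The main step is a two-sided hypercontractivity (reverse hypercontractivity) estimate. I would split $U_\omega = U_{\sqrt\omega}\circ U_{\sqrt\omega}$ by the semigroup property, so that
\begin{equation}
\langle \mathbf 1_\calA,\, U_\omega \mathbf 1_\calB\rangle \;=\; \langle U_{\sqrt\omega}\mathbf 1_\calA,\, U_{\sqrt\omega}\mathbf 1_\calB\rangle,
\end{equation}
and then apply Hölder with suitably chosen conjugate exponents $p,q$ together with the (reverse) hypercontractive bounds $\|U_{\sqrt\omega}\mathbf 1_\calA\|_{p}\le \|\mathbf 1_\calA\|_{r}$ for appropriate $r<1$ (this is the \emph{reverse} direction, valid for nonnegative functions), and likewise for $\calB$. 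Since $\mathbf 1_\calA$ is an indicator, $\|\mathbf 1_\calA\|_r = \delta^{1/r}$ for any $r>0$. Choosing the exponents to make everything consistent — the natural choice being driven by the constraint that forces $r = (1-\omega)/2$ on each side, yielding the product $\delta^{2/(1-\omega)}$ — gives the claimed bound. The bookkeeping of which exponents are admissible in Borell's reverse hypercontractive inequality is where one has to be careful; I would state Borell's inequality in the form: for $f,g\ge 0$ on Gaussian space and $0\le \omega<1$, $\mathbb E[f(u)g(u')]\ge \|f\|_{p}\|g\|_{q}$ whenever $(1-p)(1-q)\ge \omega^2$ with $p,q<1$, and then simply set $p=q=1-\omega$, so that $(1-p)(1-q)=\omega^2$, giving $\mathbb E[\mathbf 1_\calA(u)\mathbf 1_\calB(u')]\ge \delta^{1/(1-\omega)}\cdot\delta^{1/(1-\omega)} = \delta^{2/(1-\omega)}$.

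So in fact the proof is essentially one invocation of Borell's reverse hypercontractivity inequality with the symmetric exponent choice $p=q=1-\omega$, plus the trivial computation $\|\mathbf 1_\calA\|_{1-\omega}=\delta^{1/(1-\omega)}$. The one genuine obstacle is having a clean, self-contained statement and proof of Borell's reverse hypercontractive inequality for the Gaussian measure; since the appendix promises a complete reproduction, I would either cite Borell directly or include the short proof via the two-point inequality and tensorization (reverse hypercontractivity tensorizes just like the forward version), then pass from the two-point/cube setting to the Gaussian setting by the central limit theorem exactly as \cite{Mossel06} does for the forward inequality. A secondary point to check is that the correlation structure in $\calN_\omega$ — each coordinate pair independent with the stated $2\times 2$ covariance — is precisely the product structure needed for tensorization, which is immediate from the definition given in the excerpt.
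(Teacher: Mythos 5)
Your proposal is correct and follows essentially the same route as the paper's Appendix~B proof: the paper writes $\mathbb E[f(u)g(u')]=\mathbb E_u[f\,T_\omega g]\ge \|f\|_p\|T_\omega g\|_q\ge\|f\|_p\|g\|_p=\delta^{2/(1-\omega)}$ with $p=1-\omega$, $q=1-1/\omega$ via reverse H\"older plus Borell's reverse hypercontractive inequality, which is exactly the derivation of the symmetric two-function form you invoke with $p=q=1-\omega$ and $(1-p)(1-q)=\omega^2$. The only cosmetic difference is that the paper cites Borell directly rather than re-proving it by tensorization and the CLT.
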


Given a sequence of numbers $\omega_1,\ldots,\omega_{k-1}\in[0,1)$,
we write $(u_1,\ldots,u_k)\sim\calN_{\omega_1,\ldots,\omega_{k-1}}$
to indicate the following distribution:
sample $u_1\sim\calN$, then $u_2\sim_{\omega_1} u_1$,  then $u_3\sim_{\omega_2} u_2$, $\ldots$,  then $u_k\sim_{\omega_{k-1}} u_{k-1}$.
If $\omega_1\!=\!\ldots\!=\!\omega_{k-1}\!=\!\omega$ then we write $\calN^{k}_\omega$ instead of $\calN_{\omega_1,\ldots,\omega_{k-1}}$ for brevity.
Finally, if some values of the sequence $(u_1,\ldots,u_k)$
are fixed, e.g.\ $(u_1,u_k)$, then we write
$(u_1,\ldots,u_k)\sim\calN_{\omega_1,\ldots,\omega_{k-1}}|(u_1,u_k)$
to indicate that $(u_1,\ldots,u_k)$ is obtained by sampling from $\calN_{\omega_1,\ldots,\omega_{k-1}}$
conditioned on fixed values $(u_1,u_k)$.
In that case $(u_2,\ldots,u_{k-1})$ are random variables that depend on $(u_1,u_k)$.

\subsection{Procedure ${\tt Matching}(u)$}
In this section we describe a procedure that takes vector $u\in\mathbb R^d$ and either outputs a directed matching $M$ on nodes $V$
or terminates the oracle. In this procedure we choose constants $c',\Delta,\sigma$ (to be specified later), and denote
\begin{eqnarray}
\pi&=&\frac{6 \alpha }{c' n\Delta}
\end{eqnarray}

\begin{algorithm}[H]
  \DontPrintSemicolon
\SetNoFillComment

	compute $w_x=\langle x,u\rangle$ for each $x\in V$ \\
	sort $\{w_x\}_{x\in V}$, let $A,B$ be subsets of $V$ with $|A|=|B|=2c' n$ containing nodes with the least and the greatest values of $w_x$, respectively \\
	let $G'$ be the graph obtained from $G$ by adding new vertices $s,t$ and edges $\{\{s,x\}\::\:x\in A\}\cup \{\{y,t\}\::\:y\in B\}$ of capacity $\pi$ \\
	compute maximum $s$-$t$ flow and the corresponding minimum $s$-$t$ cut in $G'$ \\
	if capacity of the cut is less than $c'n\pi=\tfrac {6\alpha}\Delta$ then return this cut and terminate the oracle \\
	use flow decomposition to compute multicommodity flows $f_p$ and demands $d_{xy}$ (see text) \\
	if flows $f_p$ satisfy condition~\eqref{eq:largeDtilde} then return these flows and terminate the oracle \\
	let $M_{\tt all}=\{(x,y)\in A\times B\::\:d_{xy}>0,w_y-w_x\ge \sigma\}$ and $M_{\tt short}=\{(x,y)\in M_{\tt all}\::\:||x-y||^2\le \Delta\}$ \\
	pick maximal matching $M\subseteq M_{\tt short}$ and return $M$
      \caption{${\tt Matching}(u)$. 
      }\label{alg:Matching}
\end{algorithm}

Let us elaborate line 6. Given flow $f'$ in $G'$, we compute its flow decomposition and remove flow cycles.
Each path in this decomposition has the form $p'=(s,p,t)$ where $p=(x,\ldots,y)$ with $x\in A$, $y\in B$.
For each such $p$ we set $f_p=f'_{p'}$, and accordingly increase demand $d_{xy}$ by $f'_{p'}$.
Note that we need to know only the endpoints of $p$, and not $p$ itself.
This computation can be done in $O(m \log n)$ time using dynamic trees~\cite{SleatorTarjan}.
(The same subroutine was used in~\cite{Sherman}).

For the purpose of analysis we make the following assumption: if Algorithm~\ref{alg:Matching} terminates
at line 5 or 7 then it returns $\varnothing$ (the empty matching).
Thus, we always have ${\tt Matching}(u)\subseteq V\times V$ and $|{\tt Matching}(u)|\le |V|$.
\begin{lemma}\label{lemma:matching}
(a) If the algorithm terminates at line 5 then the returned cut is $c'$-balanced. \\
(b) There exist positive constants $c',\sigma,\delta$ for which  either (i) $\mathbb E_{u} |{\tt Matching}(u)|\ge \delta n$,
or (ii)~Algorithm~\ref{alg:Matching} for $u\sim\calN$ terminates at line 5 or 7 with probability at least $\Theta(1)$.
\end{lemma}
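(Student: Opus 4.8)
Part~(a) is a short minimum-cut computation. For part~(b) the plan is to separate a purely deterministic estimate (``if enough flow survives the two filters, the returned matching is large'') from a single probabilistic fact about random linear projections of the point set $V$, and then combine them.

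\emph{Part (a).} Suppose the algorithm returns a cut at line~5, coming from a minimum $s$-$t$ cut $(S_s,S_t)$ of $G'$ with $s\in S_s$, $t\in S_t$. Its capacity is at least $\pi\bigl(|A\cap S_t|+|B\cap S_s|\bigr)$, so the test ``capacity $<c'n\pi$'' forces $|A\cap S_t|<c'n$ and $|B\cap S_s|<c'n$. Since $|A|=|B|=2c'n$ and $A\cap B=\varnothing$ (as $c'\le\tfrac14$), the induced cut $(S,\bar S)=(S_s\setminus\{s\},S_t\setminus\{t\})$ of $G$ satisfies $|S|\ge|A\cap S_s|>c'n$ and $|\bar S|\ge|B\cap S_t|>c'n$, i.e.\ it is $c'$-balanced.

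\emph{Deterministic core of (b).} Fix small constants $\sigma,c'>0$ and let $\theta_A,\theta_B$ be the $2c'n$-th smallest and $2c'n$-th largest values among $\{w_x=\langle x,u\rangle\}_{x\in V}$. I claim: if $u$ satisfies $\theta_B-\theta_A\ge\sigma$ and Algorithm~\ref{alg:Matching} does not terminate at line~5 or~7, then $|M|>\tfrac{c'n}{3}$. Indeed, not terminating at line~5 means the minimum $s$-$t$ cut of $G'$, hence (by max-flow/min-cut) the maximum flow value, is $\ge c'n\pi=\tfrac{6\alpha}{\Delta}$; after flow decomposition and cycle removal this equals $\sum_{x,y}d_{xy}$. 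Not terminating at line~7 means \eqref{eq:largeDtilde} fails, i.e.\ $\sum_{x,y}d_{xy}\|x-y\|^2<2\alpha$ (recall $\tilde v_x=x$). Every positive-demand pair has $x\in A$, $y\in B$, hence $w_y-w_x\ge\theta_B-\theta_A\ge\sigma$, so $M_{\tt all}$ consists of \emph{all} positive-demand pairs and therefore
$$\sum_{(x,y)\in M_{\tt short}}d_{xy}\;\ge\;\sum_{x,y}d_{xy}-\tfrac1\Delta\sum_{x,y}d_{xy}\|x-y\|^2\;>\;\tfrac{6\alpha}{\Delta}-\tfrac{2\alpha}{\Delta}\;=\;\tfrac{4\alpha}{\Delta}.$$
Finally, the flow from line~6 is $\pi$-regular and every vertex of $A$ (resp.\ $B$) is adjacent in $G'$ only to $s$ (resp.\ $t$), so each carries total demand $\le\pi$; maximality of $M\subseteq M_{\tt short}$ forces every edge of $M_{\tt short}$ to touch one of the $2|M|$ endpoints of $M$, whence $2|M|\pi\ge\sum_{(x,y)\in M_{\tt short}}d_{xy}>\tfrac{4\alpha}{\Delta}$, i.e.\ $|M|>\tfrac{c'n}{3}$ (using $\pi=\tfrac{6\alpha}{c'n\Delta}$).

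\emph{Probabilistic core of (b) and assembly.} It remains to prove $p^\ast:={\tt Pr}_u[\theta_B-\theta_A\ge\sigma]=\Theta(1)$; the plan deliberately avoids arguing from the empirical variance $\sum_x(w_x-\bar w)^2$, which is not robust to the $\Theta(\sqrt{\log n})$-size coordinates that $w$ may have, and instead counts stretched ``far'' pairs. Call $\{x,y\}$ \emph{far} if $\|x-y\|^2\ge4\sigma^2$; since $\|x-y\|^2\le 8$ for all pairs, the spreading hypothesis $\sum_{x<y}\|x-y\|^2\ge\tfrac{an^2}{4}$ gives at least $\tfrac{an^2}{64}$ far pairs when $\sigma^2\le\tfrac{a}{16}$. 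For a far pair, $w_x-w_y\sim\calN(0,\|x-y\|^2)$ has standard deviation $\ge2\sigma$, so ${\tt Pr}_u[|w_x-w_y|\ge\sigma]\ge{\tt Pr}[|\calN(0,1)|\ge\tfrac12]>\tfrac12$; hence $Y:=|\{\text{far }\{x,y\}:|w_x-w_y|\ge\sigma\}|$ has $\mathbb E_uY=\Omega(n^2)$, while $Y\le\binom n2$. Conversely, if $\theta_B-\theta_A<\sigma$ then any pair with $|w_x-w_y|\ge\sigma$ has an endpoint among the fewer than $4c'n$ vertices whose $w$-value lies outside $[\theta_A,\theta_B]$, so $Y<4c'n^2$; choosing $c'$ small enough that $4c'n^2<\tfrac12\mathbb E_uY$, the reverse Markov inequality applied to $Y\in[0,\binom n2]$ yields ${\tt Pr}_u[Y\ge 4c'n^2]=\Omega(1)$, hence $p^\ast=\Omega(1)$. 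Now assemble: if ${\tt Pr}_u[\text{termination at line~5 or~7}]\ge\tfrac12p^\ast$ then option~(ii) holds; otherwise ${\tt Pr}_u[\theta_B-\theta_A\ge\sigma\text{ and no termination}]\ge\tfrac12p^\ast$, and on that event $|M|>\tfrac{c'n}{3}$ by the deterministic core, so, using the convention ${\tt Matching}(u)=\varnothing$ upon termination, $\mathbb E_u|{\tt Matching}(u)|\ge\tfrac12p^\ast\cdot\tfrac{c'n}{3}=:\delta n$, which is option~(i).

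\emph{Main obstacle.} The genuinely nontrivial step is $p^\ast=\Theta(1)$: one has to turn the SDP spreading constraint into anti-concentration of a random Gaussian projection, and the subtlety is that the naive second-moment route is spoiled by a handful of large coordinates, forcing the ``count stretched far pairs'' device whose a priori bound $Y\le\binom n2$ makes the reverse Markov step give an $n$-independent constant. A secondary point requiring care is the maximal-matching/degree estimate, which uses the special structure of the flow produced at line~6 (each $A$-vertex meets only $s$, each $B$-vertex only $t$, so all demands are $\le\pi$).
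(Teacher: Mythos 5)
Both parts of your proposal are correct, and the deterministic content coincides with the paper's. For (a) you argue directly from the cut capacity ($\ge \pi(|A\cap S_t|+|B\cap S_s|)$) where the paper counts unsaturated source edges via max-flow/min-cut; these are the same argument. For (b), your ``deterministic core'' is exactly the paper's counting, just run forward instead of by contradiction: total demand at least $6\alpha/\Delta$ from line 5, at most $2\alpha/\Delta$ of it on pairs with $||x-y||^2>\Delta$ from line 7, and each matched endpoint absorbs at most $\pi$ units of demand because it is attached to $s$ (resp.\ $t$) by a single capacity-$\pi$ edge, giving $|M|\ge c'n/3$ via maximality. The one place you genuinely add something is the probabilistic step: the paper does \emph{not} prove that the projection gap $\theta_B-\theta_A\ge\sigma$ occurs with constant probability, but cites it as ``a standard argument'' from \cite[Lemma~14]{Kale:PhD}. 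Your self-contained proof of that fact --- at least $an^2/64$ pairs with $||x-y||\ge 2\sigma$ extracted from the spreading constraint together with $||x-y||^2\le 8$, anti-concentration of each such projected difference, the observation that a gap smaller than $\sigma$ forces every $\sigma$-stretched pair to meet one of fewer than $4c'n$ extreme vertices, and reverse Markov against the trivial bound $Y\le\binom n2$ --- is correct, and your final case split on whether the termination probability exceeds $\tfrac12 p^\ast$ is a slightly more careful bookkeeping than the paper's, which folds ``reaches line 9'' into the cited lemma. In short: same route, with the outsourced anti-concentration step filled in.
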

\begin{proof}
\myparagraph{(a)} We repeat the argument from~\cite{AK}. Since the total flow is at most $\pi c'n$, at least $c'n$ newly added source edges are not saturated by the flow,
and hence their endpoints are on the side of the source node in the cut obtained, which implies that that side has at least $c'n$ nodes.
Similarly, the other side of the cut also has at least $c'n$ nodes, and thus the cut is $c'$-balanced.

\myparagraph{(b)} 
Assume that condition (ii) is false.
By a standard argument, conditions~\eqref{eq:tilde-v:one} and \eqref{eq:tilde-v:two} imply the following:
there exist constants $c'\in(0,c)$, $\beta>0$ and $\sigma>0$ such that with probability at least $\beta$ Algorithm~\ref{alg:Matching} reaches line 9 and we have $w_y-w_x\ge \sigma$ for all $x\in A$, $y\in B$
(see~\cite[Lemma 14]{Kale:PhD}). Suppose that this event happens.
We claim that in this case $|M|\ge \tfrac 13 c'n$.
Indeed, suppose this is false. Let $A'\subseteq A$ and $B'\subseteq B$ be the sets of nodes involved in $M$
(with $|A'|=|B'|=|M|=k$). The total value of flow from $A$ to $B$ is at least $c'\pi n$ (otherwise we would have terminated at line 5).
The value of flow leaving $A'$ is at most $|A'|\cdot \pi\le \tfrac 13 c'\pi n$.
Similarly, the value of flow entering $B'$ is at most $|B'|\cdot \pi\le \tfrac 13 c'\pi n$.
Therefore, the value of flow from $A-A'$ to $B-B'$ is at least $c'\pi n-2\cdot \tfrac 13 c'\pi n=\tfrac 13 c'\pi n$.
For each edge $(x,y)\in M_{\tt all}$ with $x\in A-A'$, $y\in B-B'$ we have $||x-y||^2 > \Delta$
(otherwise $M$ would not be a maximal matching in $M_{\tt short}$). Therefore,
$$
\sum_{p:p=(x,\ldots,y)} f_p ||x-y||^2 
\ge\!\!\! \sum_{\substack{p:p=(x,\ldots,y) \\ x\in A-A',B\in B-B'}} \!\!\! f_p ||x-y||^2 
\ge  \tfrac 13 c'\pi n \cdot \Delta = 2\alpha
$$
But then the algorithm should have terminated at line 7 - a contradiction.
\end{proof}

In the remainder of the analysis we assume that case (i) holds in Lemma~\ref{lemma:matching}(b).
(In the case of case (ii) procedures that we will describe will terminate the oracle at line 5 or 7 with probability $\Theta(1)$).

We assume that procedure ${\tt Matching}(\cdot)$ satisfies the following skew-symmetry condition:
 for any $u$, matching ${\tt Matching}(-u)$ is obtained from ${\tt Matching}(u)$ by reversing all edge
orientations. (This can be easily enforced algorithmically).

\subsection{Matching covers}

We say that a {\em generalized matching} is a set $M$ of paths of the form $p=(p_0,\ldots,p_k)$ with $p_0,\ldots,p_k\in V$
such that each node $x\in V$ has at most one incoming and at most one outgoing path.
 We say that path $q$ is {\em violating}
if  $q=(\ldots,p,\ldots)$ and path $p=(p_0,\ldots,p_{\ell(p)})$ satisfies~\eqref{eq:PathViolationtilde}.
We denote $M^{\tt violating}$ and $M^{\tt nonviolating}$ to be the set of violating and nonviolating paths in
generalized matching $M$, respectively.
Below we will only be interested in the endpoints of paths $p\in M$ and violating/nonviolating status of $p$.
Thus, paths in $M$ will essentially be treated as edges with a Boolean flag.
For generalized matchings $M_1,M_2$ we define
$$
M_1\circ M_2=\{(p,x,q)\::\:(p,x)\in M_1,(x,q)\in M_2\}
$$
where $p,q$ are paths and $x$ is a node. 
Clearly, $M_1\circ M_2$ is also a generalized matching.
For generalized matching $M$, vector $u\in\mathbb R^d$ and value $\sigma\in\mathbb R$ we define
\begin{align*}
{\tt Truncate}_\sigma(M;u)&=M^{\tt violating}\;\;\cup\;\; \{(x,\ldots,y)\in M^{\tt nonviolating}\::\:\langle y-x, u\rangle\ge \sigma\}
\end{align*}


We will consider algorithms for constructing generalized matchings that have
 the following form: given vector $u\in\mathbb R^d$,
 sample vectors $(u_1,\ldots,u_k)$ according to some distribution that depends on $u$,
and return ${\tt Matching}(u_1)\circ\ldots\circ{\tt Matching}(u_k)$.
Any such algorithm specifies a {\em matching cover} as defined below.

\begin{definition}
A {\em generalized matching cover} (or just {\em matching cover})
is function $\calM$ that maps vector $u\in\mathbb R^d$ to a distribution over generalized matchings.
It is called {\em skew-symmetric} if
sampling $M\sim\calM(-u)$
and then reversing all paths in $M$ produces the same distribution as sampling $M\sim\calM(u)$.
We define ${\tt size}(\calM)=\tfrac 1n\mathbb E_{M\sim \calM(\calN)}[|M|]$
where $\calM(\calN)$ denotes the distribution $u\sim \calN,M\sim\calM(u)$.
We say that $\calM$ is {\em $\sigma$-stretched} (resp. {\em $L$-long}) 
if $\langle y-x,u\rangle\ge \sigma$
(resp. $||y-x||^2\le L$)
for any $u\in\mathbb R^d$ and any nonviolating $(x,\ldots,y)\in {\tt supp}(\calM(u))$.
$\calM$ is {\em $k$-hop} if any path $p\in M\in{\tt supp}(\calM(\calN))$ has the form $p=(p_0,p_1,\ldots,p_k)$
where $||p_i-p_{i-1}||^2\le \Delta$ for all $i\in[k]$.
We write $\calM\subseteq\calM'$ for (coupled) matching covers $\calM,\calM'$
if $M\subseteq M'$ for any $u$ and $M\sim\calM(u)$, $M'\sim\calM(u')$.
\end{definition}
If $\calM$ is a matching cover then we define matching cover $\calM_{[\sigma]}$ as follows:
given vector $u$, sample $M\sim\calM(u)$ and let ${\tt Truncate}_\sigma(M;u)$ be the output
of $\calM_{[\sigma]}(u)$. Clearly, $\calM_{[\sigma]}$ is $\sigma$-stretched.

The following lemma shows how matching covers can be used. We say that direction $u\in\mathbb R^d$ is {\em regular}
if $\langle y-x,u\rangle< \sqrt{6\ln n}\cdot ||y-x||$ for all distinct $x,y\in V$.

\begin{lemma}\label{lemma:regular}
Let $\calM$ be a $k$-hop matching cover. \\
{\rm (a)} $\calM$ is $((k+1)\Delta)$-long. \\
{\rm (b)} If $\calM$ is $\sqrt{6(k+1)\Delta\ln n}$-stretched
then $M^{\tt nonviolating}\!=\!\varnothing$ for any regular $u$ and  $M\!\in\!{\tt supp}(\calM(u))$. \\
{\rm (c)}  ${\tt Pr}_u[u\mbox{ is regular\,}]\ge 1-\tfrac 1n$.
\end{lemma}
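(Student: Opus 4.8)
For part (a): I would observe that if $\calM$ is $k$-hop, then any path $p = (p_0, p_1, \ldots, p_k) \in M$ has $\|p_i - p_{i-1}\|^2 \le \Delta$ for all $i \in [k]$. Since the (observed) points live in $\mathbb R^d$ with the Euclidean norm, I would apply the triangle inequality to the chain $p_0, p_1, \ldots, p_k$: $\|p_k - p_0\| \le \sum_{i=1}^k \|p_i - p_{i-1}\| \le k\sqrt{\Delta}$, hence $\|p_k - p_0\|^2 \le k^2 \Delta$. Actually, to get the stated bound $(k+1)\Delta$ one should instead use the Cauchy--Schwarz / power-mean inequality: $\|p_k - p_0\|^2 \le \big(\sum_{i=1}^k \|p_i-p_{i-1}\|\big)^2 \le k \sum_{i=1}^k \|p_i - p_{i-1}\|^2 \le k \cdot k\Delta = k^2\Delta \le (k+1)\Delta$ — wait, $k^2\Delta$ can exceed $(k+1)\Delta$. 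I suspect the intended argument is that a $k$-hop path is a concatenation of $k$ sub-edges each of squared length $\le \Delta$, and a nonviolating path satisfies the reverse path inequality $\|p_{\ell(p)} - p_0\|^2 \le \sum_j \|p_j - p_{j-1}\|^2$ is \emph{false} (that's what violating means), so for nonviolating paths $\|p_k - p_0\|^2 \le \sum_{j=1}^k \|p_j - p_{j-1}\|^2 + \Delta \le k\Delta + \Delta = (k+1)\Delta$ by negating \eqref{eq:PathViolationtilde}. That is the right route: a path is nonviolating exactly when \eqref{eq:PathViolationtilde} fails, i.e. $\sum_{j=1}^{k}\|p_j - p_{j-1}\|^2 > \|p_k - p_0\|^2 - \Delta$, which rearranges to $\|p_k - p_0\|^2 < \Delta + \sum_j \|p_j-p_{j-1}\|^2 \le \Delta + k\Delta = (k+1)\Delta$. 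So the claim that $\calM$ is $((k+1)\Delta)$-long should be read as: \emph{nonviolating} paths have $\|y-x\|^2 \le (k+1)\Delta$, which matches the definition of $L$-long (it only constrains nonviolating paths).

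For part (b): I would argue by contradiction. Suppose $u$ is regular and some $M \in {\tt supp}(\calM(u))$ has a nonviolating path $(x, \ldots, y)$. By the $\sqrt{6(k+1)\Delta \ln n}$-stretched hypothesis, $\langle y - x, u \rangle \ge \sqrt{6(k+1)\Delta \ln n}$. By part (a), since the path is nonviolating, $\|y - x\|^2 \le (k+1)\Delta$, so $\|y-x\| \le \sqrt{(k+1)\Delta}$. Combining, $\langle y-x, u\rangle \ge \sqrt{6\ln n}\cdot\sqrt{(k+1)\Delta} \ge \sqrt{6\ln n}\cdot\|y-x\|$, which contradicts the definition of regularity (which demands strict inequality $\langle y-x,u\rangle < \sqrt{6\ln n}\,\|y-x\|$ for all distinct $x,y$), provided $x \ne y$. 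The degenerate case $x = y$ would force $\langle y-x,u\rangle = 0 < \sqrt{6(k+1)\Delta\ln n}$ (for $\Delta > 0$), contradicting stretchedness directly; so $x \ne y$ and the contradiction goes through.

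For part (c): this is a union bound combined with a Gaussian tail estimate. For fixed distinct $x, y \in V$, the random variable $\langle y - x, u \rangle$ with $u \sim \calN$ is a one-dimensional Gaussian with mean $0$ and variance $\|y-x\|^2$, so $\langle y-x, u\rangle / \|y-x\| \sim \calN(0,1)$. The standard Gaussian tail bound gives ${\tt Pr}_u[\langle y-x,u\rangle \ge \sqrt{6\ln n}\,\|y-x\|] = {\tt Pr}[\mathcal N(0,1) \ge \sqrt{6\ln n}] \le e^{-(6\ln n)/2} = n^{-3}$. There are fewer than $n^2$ ordered pairs $(x,y)$ of distinct nodes (at most $n^2$), so by a union bound the probability that \emph{some} pair violates regularity is at most $n^2 \cdot n^{-3} = 1/n$. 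Hence ${\tt Pr}_u[u \text{ is regular}] \ge 1 - 1/n$. The only mild subtlety is whether one needs to handle both signs (i.e. $|\langle y-x,u\rangle|$ vs $\langle y-x,u\rangle$), but since the pairs are ordered, swapping $x$ and $y$ already covers both directions, so the one-sided tail suffices.

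\textbf{The main obstacle} is getting part (a) stated and proved with the right reading — the bound $(k+1)\Delta$ only makes sense once one realizes it is a bound on nonviolating paths, obtained by negating the violation inequality \eqref{eq:PathViolationtilde} and summing the per-hop bounds $\|p_i - p_{i-1}\|^2 \le \Delta$; a naive triangle-inequality argument gives the worse bound $k^2\Delta$. The rest is routine: part (b) is a two-line contradiction and part (c) is a textbook Gaussian-tail union bound.
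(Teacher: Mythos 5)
Your proposal is correct and follows essentially the same route as the paper's proof: part (a) by negating the violation inequality \eqref{eq:PathViolationtilde} and summing the per-hop bounds, part (b) by combining (a) with regularity to get a contradiction, and part (c) by a Gaussian tail bound with $c=\sqrt{6\ln n}$ and a union bound over at most $n^2$ pairs. Your eventual reading of part (a) — that ``$L$-long'' constrains only nonviolating paths — is exactly the paper's definition, so the argument you settle on after discarding the triangle-inequality route is the intended one.
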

\begin{proof}
\myparagraph{(a)} By definitions, any nonviolating path $p=(p_0,p_1,\ldots,p_k)\in M\in{\tt supp}(\calM(\calN))$
satisfies $||p_k-p_0||^2\le\Delta+\sum_{i=1}^k||p_i-p_{i-1}||^2\le \Delta+\sum_{i=1}^k\Delta=(k+1)\Delta$.

\myparagraph{(b)} Suppose there exists nonviolating path $(x,\ldots,y)\in M\in{\tt supp}(\calM(u))$.
Elements $x,y$ must be distinct.
Part (a) gives $||y-x||\le \sqrt{(k+1)\Delta}$.
Regularity of $u$ thus implies that $\langle y-x,u\rangle<\sqrt{6\ln n}\cdot \sqrt{(k+1)\Delta}$ - a contradiction.

\myparagraph{(c)} Consider distinct $x,y\in V$. 
The quantity $\langle y-x, u\rangle$ is normal with zero mean
and variance $||y-x||^2$ under $u\sim\calN$.
Thus, 
$
{\tt Pr}_u[\langle y-x, u\rangle \ge c||y-x||]
={\tt Pr}[\calN(0,||y-x||)\ge c||y-x||]
={\tt Pr}[\calN(0,1)\ge c]\le e^{-c^2/2}
=\tfrac 1 {n^3}
$ for $c=\sqrt{6\ln n}$. There are at most $n^2$ distinct pairs $x,y\in V$, so the union bound gives the claim.
\end{proof}

\subsection{Chaining algorithms}\label{sec:chaining}
By construction, ${\tt Matching}$ is a (deterministic) 1-hop $\sigma$-stretched skew-symmetric matching cover.
Next, we discuss how to ``chain together'' matchings returned by ${\tt Matching}(\cdot)$ to obtain
a matching cover with a large stretch, as required by Lemma~\ref{lemma:regular}(b).

\myparagraph{Sherman's algorithm}
First, we review Sherman's algorithm~\cite{Sherman}.
In addition to vector $u$, it takes a sequence  $b=(b_1,\ldots,b_K)\in\{0,1\}^K$ as an input.

\begin{algorithm}[H]
  \DontPrintSemicolon
\SetNoFillComment
	let $k$ be the number of $1$'s in $b$ and $k'$ be the number of $0$'s in $b$, so that $k+k'=K$ \\
	sample $(u_1,\ldots,u_k)\sim \calN_\omega^k|u_1$ and $(u'_1,\ldots,u'_{k'})\sim \calN_0^{k'}$ independently where $\omega=1-1/k$ \\
let $(\bar u_1,\ldots,\bar u_{K})$ be the sequence
obtained by merging sequences $(u_1,\ldots,u_k)$ and $(u'_1,\ldots,u'_{k'})$ at positions specified by $b$
in a natural way \\
return ${\tt Matching}(\bar u_1)\circ\ldots\circ {\tt Matching}(\bar u_K)$ 
      \caption{${\tt SamplePaths}^b(u_1)$. 
      }\label{alg:Sherman}
\end{algorithm}

\begin{theorem}\label{th:Sherman}
\begin{sloppypar}
For any $\delta>0$ and $\sigma>0$ there exist positive constants $c_1,c_2,c_3$ with the following property.
Suppose that ${\tt Matching}$ is a 1-hop $\sigma$-stretched skew-symmetric matching cover with ${\tt size}({\tt Matching})\ge \delta$, and $K\Delta\le c_1$.
Then there exists vector $b\in\{0,1\}^K$ for which ${\tt size}\left({\tt SamplePaths}^b_{[c_2 K]}\right)\ge e^{-c_3K^2}$.
\end{sloppypar}
\end{theorem}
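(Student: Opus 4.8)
The plan is to track how the expected matching size degrades as we chain together $K$ calls to ${\tt Matching}$, and to show that a single clever choice of the binary sequence $b$ controls this degradation to within $e^{-\Theta(K^2)}$. The intuition, following ARV/Lee, is that there are two competing effects. Using a $0$ in $b$ (i.e.\ an independent fresh Gaussian $u'_j\sim\calN_0$) is ``cheap'': by the measure-concentration Theorem~\ref{th:isoperimetric} with $\omega=0$, composing two matchings keeps a constant fraction of mass, but it does not help accumulate stretch in the direction $u_1$. Using a $1$ in $b$ (a correlated step $u_j\sim_\omega u_{j-1}$) costs a factor governed by $\delta^{2/(1-\omega)}$ with $\omega=1-1/k$, hence a factor like $\delta^{2k}$ over the whole correlated block, but the correlated vectors stay close to $u_1$ so their matchings genuinely push points in direction $u_1$. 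The content of the theorem is that $k=\Theta(K)$ correlated steps suffice to build stretch $\Theta(K)$ in direction $u_1$ while only paying $e^{-\Theta(K^2)}$ in size.

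Concretely, I would proceed as follows. First, set up a ``size-tracking'' lemma for composition: for matching covers $\calM_1,\calM_2$ and correlation $\omega$, if we sample $(u,u')\sim\calN_\omega$, $M_1\sim\calM_1(u)$, $M_2\sim\calM_2(u')$ and form $M_1\circ M_2$, then ${\tt size}(\calM_1\circ\calM_2)\gtrsim {\tt size}(\calM_1)^{?}\cdot{\tt size}(\calM_2)^{?}$ where the exponents come from applying Theorem~\ref{th:isoperimetric} to the ``events'' that a fixed node $x$ is matched on the left (as a function of $u$) and on the right (as a function of $u'$). One must be careful: the events $\{x\text{ has an outgoing path in }M_1\}$ and $\{x\text{ has an incoming path in }M_2\}$ are sets in $u$-space and $u'$-space respectively, and Theorem~\ref{th:isoperimetric} needs them to have a common probability $\delta$ — so one works with level sets or with an averaging/Cauchy--Schwarz argument over $x$. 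This is the standard ARV matching-composition step, and skew-symmetry is what lets one relate ``outgoing from $x$'' statistics to ``incoming to $x$'' statistics. Second, iterate this over the correlated block of length $k$: since the $\calN_\omega$ steps compose (the excerpt notes $u\sim\calN,\hat u\sim_\omega u$ reproduces $\calN_\omega$), chaining $k$ matchings of size $\ge\delta$ along $\calN^k_\omega$ with $\omega=1-1/k$ yields size $\ge \delta^{O(k)}$ by the theorem applied $k$ times with exponent $2/(1-\omega)=2k$ — wait, more carefully, each composition loses a power, and the total is $\delta^{\exp(O(k\cdot\frac{1}{1-\omega}))}$-ish; here is where the $e^{-\Theta(K^2)}$ genuinely appears (roughly $k$ steps each costing a factor whose logarithm scales like $k$, giving $\log(\text{size})\gtrsim -k^2$). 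Third, for the stretch: along the correlated block, $\bar u_j$ is within a controlled distance of $u_1$ in expectation, and each ${\tt Matching}(\bar u_j)$ is $\sigma$-stretched in direction $\bar u_j$; I would show that on a constant-probability event the accumulated displacement $x_K-x_0$ of a surviving chained path satisfies $\langle x_K-x_0, u_1\rangle \ge c_2 K$ — combining the $k$ genuine $\sigma$-stretches in directions close to $u_1$, using $K\Delta\le c_1$ to bound how far points can drift laterally (via Lemma~\ref{lemma:regular}(a)-type length bounds, $\|p_i-p_{i-1}\|^2\le\Delta$), so that each step's stretch in the $u_1$-direction is at least $\sigma/2$ or so. Finally, the $0$-entries of $b$ are inserted purely to preserve/boost size via cheap $\omega=0$ compositions without costing stretch; choosing the number $k'$ of zeros and their interleaving pattern optimally (this is the ``there exists $b$'' quantifier) balances the size loss so the final bound is $e^{-c_3 K^2}$.

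The main obstacle I anticipate is the matching-composition size bound under correlation: making Theorem~\ref{th:isoperimetric} apply cleanly requires handling the fact that the ``matched'' indicator for a node $x$ depends on the realized matching, not just on $u$, and that the left-event lives in $u$-space while the right-event lives in $u'$-space with the joint law $\calN_\omega$. The clean way is: for each node $x$, let $a(x)={\tt Pr}_u[x\text{ outgoing-matched}]$ and $b(x)={\tt Pr}_{u'}[x\text{ incoming-matched}]$; by skew-symmetry these are equal in aggregate, and one needs ${\tt Pr}_{(u,u')\sim\calN_\omega}[x\text{ out in }M_1\text{ and in in }M_2]\ge (\text{something})$, which does not follow from Theorem~\ref{th:isoperimetric} directly unless the two probabilities match. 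The fix is a slicing argument over the ``matched'' regions or an appeal to the original ARV-style bound; getting the exponents exactly right here is where the $2/(1-\omega)$ and hence the final $K^2$ come from, and where I'd expect most of the technical work — together with verifying the stretch accumulation survives the lateral drift bounded via $K\Delta\le c_1$. The insertion pattern optimization (choice of $b$) is comparatively routine once the per-step size/stretch trade-off is pinned down: it is a one-parameter (or few-parameter) optimization that I would carry out last, after both the size-loss estimate and the stretch lower bound are in hand.
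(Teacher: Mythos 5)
First, a point of reference: the paper does not actually prove Theorem~\ref{th:Sherman} --- it attributes it to Sherman and states that the proof in~\cite{Sherman} ``can be easily adapted.'' The paper's own chaining proof is for the modified Theorem~\ref{th:vnk} (all-ones $b$), carried out in Section~\ref{sec:proof:vnk} via a doubling induction with the knock-out technique. Your proposal instead sketches Sherman's original route (mixed correlated/independent steps, existential choice of $b$), which is a legitimate target, but the sketch has a genuine unresolved gap at its core.

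The gap is exactly the composition step you flag and then defer. To chain two matching covers you need a lower bound on ${\tt Pr}_{(u,u')\sim\calN_\omega}[x\text{ is out-matched under }u\ \wedge\ x\text{ is in-matched under }u']$, but Theorem~\ref{th:isoperimetric} requires two sets of \emph{equal} prescribed measure $\delta$, whereas the event ``$x$ is out-matched'' can be supported on a set of directions of arbitrarily small (and node-dependent) measure even when the average matching size is $\ge\delta n$. Saying ``the fix is a slicing argument or an appeal to the original ARV-style bound'' is not a fix; it is the theorem. The paper's resolution (for Theorem~\ref{th:vnk}) is the knock-out construction in Section~\ref{sec:induction}: for each node $x$ one takes $\calA_x$ to be the measure-$\delta$ set of directions with largest expected out-degree $\nu_x(u)$, deletes $x$'s edges on $\calA_x$ (and the reversed edges on $-\calA_x$, preserving skew-symmetry), applies the induction hypothesis to the reduced cover --- which still has size $\ge\delta_k$ --- and concludes $\inf_{u\in\calA_x}\nu_x(u)\ge\lambda_x$; Theorem~\ref{th:isoperimetric} is then applied to the equal-measure pair $(\calA_x,-\calA_x)$, with skew-symmetry converting the out-degree bound into the in-degree bound. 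This also forces the ``extended matching cover'' device (conditioning on both endpoint directions) so that deleting an edge at $u_1$ kills the whole path. Your sketch contains none of this machinery, and without it the size recursion does not close. Separately, your stretch bookkeeping is too optimistic: a path that is $\sigma_k$-stretched in direction $u_j$ is only stretched about $\omega^{j}\sigma_k$ in direction $u_1$ \emph{in expectation}, with Gaussian fluctuations of standard deviation $\sqrt{(1-\omega^{2j})}\,\|y-x\|$; one must subtract a slack $\beta_k$ and charge the tail probability $\exp(-\beta_k^2/(2(k{+}1)(1{-}\omega^k)\Delta))$ against the size bound (condition~\eqref{eq:HDHGDHDGAHKA}). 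This trade-off between $\beta_k$, $\lambda_{2k}$ and $K\Delta\le c_1$ is where the constants $c_1,c_2,c_3$ actually come from, and it is absent from the proposal.
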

In practice vector $b$ is not known, however it can be sampled uniformly at random
which decreases the expectation by a factor of $2^{-K}$, which does not change the bound $e^{-\Theta(K^2)}$ in Theorem~\ref{th:Sherman}.
Note that~\cite{Sherman} does not use the notion of ``violating paths'', and accordingly Theorem~\ref{th:Sherman}
is formulated slightly differently in~\cite{Sherman}. However, the proof in~\cite{Sherman} can be easily adapted to yield 
Theorem~\ref{th:Sherman}.~\footnote{
We believe that \cite{Sherman} has a numerical
mistake. Namely, consider value $\delta\in(0,1)$ and
weighted graph $(V,E,w)$ with $n=|V|$ nodes and non-negative weights $w$ 
such that $w(x,A)=w(A,x)\le 1$ for all $x\in V$ and $A\subseteq V$ 
(where $w(X,Y)=\sum_{(x,y)\in X\times Y}w(x,y)$). \cite[proof of Lemma 4.8]{Sherman} essentially makes
the following claim:
\begin{itemize}[leftmargin=15pt]
\item {\em
For any subset $B\subseteq V$ with $w(V,B)\ge \delta|B|$
 there exists $A\subseteq V$
such that either (i)~$|A|\ge  \delta |B|$ and $w(x,B)\ge \delta^3$ for any $x\in A$,
or (ii) $|A|\ge \tfrac 1{\delta} |B|$ and $w(x,B)\ge \tfrac {\delta^2}{n}|B|$ for any $x\in A$.
}
\end{itemize}
A counterexample can be constructed as follows.
Assume that $\delta|B|$ and $\tfrac{1}{\delta^4}$ are integers. 
For $\delta |B|-1$ nodes $x$ set $w(x,B)=1$, for $\tfrac 1{\delta^4}$ nodes $x$ set $w(x,B)=\delta^4$,
and for remaining nodes set $w(x,B)=0$.
Then the claim above is false if $\delta |B|-1+\tfrac 1 {\delta^4}<\tfrac 1\delta|B|$.
The statement can be corrected as follows:

\begin{itemize}[leftmargin=15pt]
\item {\em 
For any $\lambda\le\delta$ and any subset $B\subseteq V$ with $w(V,B)\ge \delta|B|$
there exists $A\subseteq V$
such that either (i) $|A|\ge \tfrac  {\delta}3 |B|$ and $w(x,B)\ge \lambda$ for any $x\in A$,
or (ii) $|A|\ge \tfrac {\delta} {3\lambda}|B|$ and $w(x,B)\ge \tfrac {\delta}{3n}|B|$ for any $x\in A$.
(If both conditions are false then $w(V,B)< \tfrac {\delta}3|B|\cdot 1+\tfrac {\delta} {3\lambda}|B|\cdot\lambda+n\cdot \tfrac {\delta}{3n}|B|=\delta|B|$,
which is impossible since $w(V,y)\ge  \delta$ for each $y\in B$).
}
\end{itemize}
Then the proof in~\cite{Sherman} still works, but with different constants.
}

\myparagraph{Robustness to deletions and parallelization} Note that Theorem~\ref{th:Sherman} 
can also be applied to any matching cover ${\tt Matching}'\subseteq{\tt Matching}$ satisfying ${\tt size}({\tt Matching}')\ge\delta'$
for some positive constant $\delta'<\delta$. Thus, we can adversarially ``knock out'' some edges from ${\tt Matching}$
and still get useful bounds on the size of the output.
This is a key proof technique in this paper; to our knowledge, it has not been exploited before.
Our first use of this technique is for parallelization.
We need to show that running Algorithm~\ref{alg:Sherman} multiple times independently produces
many disjoint paths.
Roughly speaking, our argument is as follows.
Consider the $i$-th run, and assume that $|M_{\tt prev}|$ is small
where $M_{\tt prev}$  is the union of paths computed in the first $i-1$ runs.
Let us apply Theorem~\ref{th:Sherman} to the matching
cover obtained from  ${\tt Matching}$ by knocking out edges incident to nodes in $M_{\tt prev}$.
It yields that the $i$-th run produces many paths that are node-disjoint from $M_{\tt prev}$, as desired.
We refer to Section~\ref{sec:FinalAlg} for further details.

\myparagraph{New chaining algorithm} 
Next, we discuss how
a similar proof technique (combined with additional ideas) can be used
to simplify Sherman's chaining algorithm.
We will show that Theorem~\ref{th:Sherman}  still holds if we consider vectors $b$ of the form $b=(1,\ldots,1)$.
The algorithm thus becomes as follows.

\begin{algorithm}[H]
  \DontPrintSemicolon
\SetNoFillComment
	sample $(u_1,\ldots,u_K)\sim \calN_\omega^K|u_1$  where $\omega=1-1/K$ \hspace{10pt} 
	\tcp*{\small equivalently, sample $u_2\!\sim_\omega \!u_1$, $u_3\!\sim_\omega\! u_2, ~\ldots,~ u_K\!\sim_\omega \!u_{K-1}$\!\!\!\!\!\!\!\!\!\!\!} 
	return ${\tt Matching}( u_1)\circ\ldots\circ {\tt Matching}( u_K)$ 
      \caption{${\tt SamplePaths}^K(u_1)$. 
      }\label{alg:vnk}
\end{algorithm}

\begin{theorem}\label{th:vnk}
\begin{sloppypar}
For any $\delta>0$ and $\sigma>0$ there exist positive constants $c_1,c_2,c_3$ with the following property.
Suppose that ${\tt Matching}$ is a 1-hop $\sigma$-stretched skew-symmetric matching cover with ${\tt size}({\tt Matching})\ge \delta$, $K\Delta\le c_1$,
and $K=2^r$ for some integer $r$.
Then ${\tt size}\left({\tt SamplePaths}^K_{[c_2 K]}\right)\ge e^{-c_3K^2}$.
\end{sloppypar}
\end{theorem}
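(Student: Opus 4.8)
The plan is to mimic the structure of the inductive ``doubling'' argument that underlies Sherman's chaining theorem, but with the single uniform correlation parameter $\omega=1-1/K$ throughout, exploiting $K=2^r$ so that the recursion halves cleanly. Write $\calM_j$ for the matching cover produced by chaining $j=2^s$ copies of ${\tt Matching}$ with the $\omega$-correlated sampling; so $\calM_1={\tt Matching}$ and $\calM_K={\tt SamplePaths}^K$. The goal is a lower bound on ${\tt size}(\calM_K{}_{[c_2K]})$, i.e.\ on the expected number of $K$-hop chains $(x_0,\ldots,x_K)$ whose endpoints satisfy $\langle x_K-x_0,u\rangle\ge c_2K$ (counting violating paths for free, since ${\tt Truncate}$ keeps them). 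The engine is Theorem~\ref{th:isoperimetric}: if $\calA$ is the set of directions $u_1$ from which ${\tt Matching}$-chaining down to depth $j/2$ ``reaches'' a given node $x$ as a left endpoint with good stretch, and $\calB$ is the analogous set of $u_j$'s reaching $x$ as a right endpoint, then the correlation $\calN_\omega$ forces a joint hit with probability $\delta^{2/(1-\omega)}$; with $1-\omega=1/K$ this is $\delta^{2K}$, which is where the $e^{-\Theta(K^2)}$ finally comes from.

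The key steps, in order. (1) Set up the right ``reachability event'': for a node $x\in V$ and a direction $v$, say $v$ is \emph{good-left for $x$ at depth $j$} if with probability $\ge p_j$ (over the internal randomness of $\calM_j$) there is a chain in $\calM_j(v)$ starting at $x$ whose total displacement has inner product with $v$ at least $\sigma_j$; symmetrically good-right. Track two quantities down the recursion: the ``mass'' $p_j$ (a probability) and the ``count'' $q_j$ (expected number of nodes that are good endpoints), with base case $q_1\ge\delta n$, $p_1=1$, $\sigma_1=\sigma$ coming from the hypothesis ${\tt size}({\tt Matching})\ge\delta$ and $\sigma$-stretchedness. (2) The inductive step from $j$ to $2j$: given a chain of length $j$ from $u_1$ ending at some node $x$, and a chain of length $j$ from $u_{j+1}$ starting at $x$, concatenate. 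Since in the uniform-$\omega$ scheme $(u_1,u_{j+1})$ is \emph{not} directly $\calN_\omega$-correlated — the correlation is only between consecutive $u_i$'s — I would first record the elementary fact that marginally $(u_i,u_{i'})$ for $i<i'$ is $\calN_{\omega^{i'-i}}$-distributed (compose the per-coordinate $2$D Gaussians), and then choose the split point so that $\omega^{j}$ is bounded below by a constant; with $\omega=1-1/K$ and $j\le K$ we get $\omega^{j}\ge\omega^{K}\ge e^{-1-o(1)}$, so the relevant correlation at every level of the recursion is $\Omega(1)$ rather than $1-\Theta(1/K)$. (3) Apply Theorem~\ref{th:isoperimetric} at each of the $r=\log_2 K$ levels with this $\Omega(1)$ correlation: a single doubling costs a factor like $\delta^{O(1)}$ in the mass $p_j$ (not $\delta^{\Theta(K)}$), and the counts $q_j$ stay $\Omega(n)$ provided we haven't overshot — so after $r$ levels the total mass loss is $\delta^{O(r)}=\delta^{O(\log K)}=K^{-O(1)}$, not $e^{-\Theta(K^2)}$. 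Wait — that is \emph{better} than the claimed bound, which only needs $e^{-c_3K^2}$; the subtlety is that what actually forces the exponential-in-$K^2$ loss is the \emph{stretch accounting}: each concatenation only guarantees additive stretch $\sigma_{2j}\approx\sigma_j+(\text{something})$, and to reach the target $\sigma_K\ge c_2K$ while keeping endpoints close enough (the $L$-long constraint, $\|x_K-x_0\|^2\le(K+1)\Delta\le O(1)$ from $K\Delta\le c_1$) one must pay, via an anti-concentration / union-bound step, a probability price per level that multiplies up to $e^{-\Theta(K^2)}$. (4) Finally feed the length and stretch bookkeeping into ${\tt Truncate}_{c_2K}$ and conclude ${\tt size}(\calM_K{}_{[c_2K]})\ge e^{-c_3K^2}$, choosing $c_1$ small enough that $K\Delta\le c_1$ makes all the geometric constraints ($\|x_K-x_0\|$ small, regularity-type bounds) compatible, and $c_2,c_3$ absorbing the accumulated constants.

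The main obstacle — and the place where the new proof genuinely differs from Sherman's — is step (2)/(3): handling the fact that here \emph{every} consecutive pair uses the same $\omega$, so the pair of directions $(u_1,u_K)$ that bookend the full chain is only $\omega^{K-1}$-correlated, i.e.\ essentially independent, whereas Sherman's mixed-$b$ scheme was engineered precisely so that the outer pair is $\omega$-correlated with $\omega$ close to $1$. The resolution I expect to carry the theorem is the ``robustness to deletions'' idea flagged in the paper right before the statement: one does \emph{not} try to correlate the two extreme directions directly; instead one runs the doubling recursion so that at each level the split is between two \emph{adjacent} blocks of equal length $j=2^s$, whose interface pair $(u_j,u_{j+1})$ \emph{is} the genuinely $\omega$-correlated (hence $\Omega(1)$-correlated) pair, and one applies Theorem~\ref{th:isoperimetric} to the ``good-left at depth $j$, rooted at the interface'' versus ``good-right at depth $j$'' sets. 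Making this precise requires a careful coupled definition of the depth-$j$ reachability sets that is symmetric under path-reversal (using skew-symmetry of ${\tt Matching}$ and the fact that $\calN_\omega$ is symmetric, so $\calM_j$ is a skew-symmetric matching cover), and then an inductive invariant of the form ``$\mathbb E[\#\{x: x \text{ is good-left at depth } j\}]\ge q_j$ with stretch $\sigma_j$ and mass $p_j$'' with explicit recurrences $q_{2j}\ge c\, q_j\cdot p_j\cdot\delta^{O(1)}$, $\sigma_{2j}\ge 2\sigma_j - (\text{slack})$, $p_{2j}\ge p_j^2\cdot\delta^{O(1)}\cdot e^{-\Theta(j)}$; unrolling $r$ times and setting $j$ through $K$ yields the stated $e^{-c_3K^2}$. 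The bookkeeping of the additive stretch across $r$ doublings, and verifying the recurrences close (in particular that $q_j$ never has to exceed $n$, which would break the isoperimetric application), is the delicate part; everything else is the same measure-concentration machinery already set up in the excerpt.
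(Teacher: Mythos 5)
Your high-level skeleton matches the paper's proof — a doubling induction on $k\in\{1,2,4,\ldots,K\}$, extended matching covers conditioning on a pair of endpoint vectors so the deletions propagate, the ``knock out the best directions'' trick to feed the induction, skew-symmetry to get the backward block for free, and Theorem~\ref{th:isoperimetric} at the $\omega$-correlated interface pair $(u_j,u_{j+1})$. But the quantitative mechanism you identify for the $e^{-\Theta(K^2)}$ loss is wrong, and the mistake propagates into your recurrences.

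You write that since $\omega=1-1/K$ is ``$\Omega(1)$-correlated'', a single doubling via Theorem~\ref{th:isoperimetric} costs only $\delta^{O(1)}$, so you get $K^{-O(1)}$ after $\log K$ levels — and you then look elsewhere (stretch accounting, anti-concentration) for the $K^2$ in the exponent. That is backwards. The cost in Theorem~\ref{th:isoperimetric} is $\delta^{2/(1-\omega)}$: what hurts is $\omega$ being \emph{close to 1}, not close to 0, and with $\omega=1-1/K$ the exponent is $2K$, so \emph{each} doubling at the interface pair $(u_j,u_{j+1})$ already costs $(\delta')^{2K}$ with $\delta'=\tfrac12(\delta_{2k}-\delta_k)$. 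This is exactly the recursion $\lambda_{2k}=\theta_{2k}\lambda_k^2$ with $\theta_{2k}=\tfrac12(\delta')^{2/(1-\omega)}=\tfrac12(\delta')^{2K}$ in the paper's Theorem~\ref{th:induction}; unrolled, the powers of $\theta$ accumulate to $\Theta(K)$ and you get $\lambda_K\ge(16/\delta)^{-2K^2}$, which is the genuine source of $e^{-\Theta(K^2)}$. The stretch/anti-concentration term $\varepsilon=\exp\bigl(-\beta_k^2/(2(k+1)(1-\omega^k)\Delta)\bigr)$ is not the bottleneck: the $\beta_k$'s are \emph{chosen} so that $\varepsilon\le\tfrac12\lambda_{2k}$, i.e.\ they are calibrated to the already-determined isoperimetric loss, and $K\Delta\le c_1$ is what makes the induced stretch loss $\sum_k\beta_k$ a small fraction of the accumulated stretch $\Theta(\sigma K)$. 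So your recurrence $p_{2j}\ge p_j^2\cdot\delta^{O(1)}\cdot e^{-\Theta(j)}$ cannot close: the middle factor must be $\delta^{\Theta(K)}$, not $\delta^{O(1)}$, and the $e^{-\Theta(j)}$ is subsumed by it rather than being the dominant loss. Relatedly, your ``choose the split so $\omega^j\ge\omega^K=\Omega(1)$'' remark is not what happens: the isoperimetric step is applied to the genuinely adjacent $\omega$-correlated interface pair; the $\omega^{k-1}$-correlations only enter through the conditioning on the outer vectors $(u_1',u_2')$ and through the bad-event bound. Fixing your proposal requires replacing your $p_j$ recurrence with the paper's $\lambda_{2k}=\theta_{2k}\lambda_k^2$ and dropping the claim that stretch anti-concentration drives the $K^2$.
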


We prove this theorem in Section~\ref{sec:proof:vnk}. Our proof technique is very different from~\cite{Sherman},
and relies on two key ideas: 
\begin{enumerate}
\item We use induction on $k=2^0,2^1,2^2,\ldots,K$ to show that ${\tt SamplePaths}^k_{[\sigma_k]}$ has a sufficiently
large size assuming that ${\tt size}({\tt Matching})\ge \delta_k$, for some sequences $\delta_1<\delta_2<\delta_4<\ldots<\delta_K=\delta$
and $\sigma_1<\sigma_2<\sigma_4<\ldots<\sigma_K=\Theta(K)$.
To prove the claim for $2k$, we show that for each node $x$, function $\mu_x(u)$ is ``sufficiently spread'' 
where $\mu_x(u)$ is the expected out-degree of $x$ in $M\sim {\tt SamplePaths}^k_{[\sigma_k]}(u)$.
In order to do this, we ``knock out'' edges $(x,y)$ from ${\tt Matching}(u)$ for a $\Theta(\delta_{2k}-\delta_k)$ fraction of vectors $u$ with the largest value of $\mu_x(u)$,
and then use the induction hypothesis for the smaller matching cover of size $\delta_k$. 
We conclude that $\mu_x(u)$ is sufficiently large for $\Theta(\delta_{2k}-\delta_k)$ fraction of $u$'s.
We then use Theorem~\ref{th:isoperimetric} and skew-symmetry to argue that chaining ${\tt SamplePaths}^k_{[\sigma_k]}$ with itself
gives a matching cover of large size.
\item We work with ``extended matching covers'' instead of matching covers. These are functions $\calM$ 
that take a {\bf pair} of vectors $(u_1,u_k)\in\mathbb R^d\times\mathbb R^d$ as input,
sample $(u_1,\ldots,u_k)\sim\calN^k_\omega$ conditioned on fixed $u_1,u_k$,
and return (a subset of) ${\tt Matching}(u_1)\circ\ldots\circ {\tt Matching}(u_k)$.
This guarantees that if $(x,y)$ is removed from ${\tt Matching}(u_1)$ then $x$ has no outgoing edge in $\calM(u_1,u_k)$
(which is needed by argument above).
\end{enumerate}

Our proof appears to be more compact than Sherman's proof, and also does not rely on case
analysis. Accordingly, we believe that our technique should give smaller constants in the $O(\cdot)$ notation
 (although neither proof explicitly optimizes these constants).

Next, we discuss implications of Theorem~\ref{th:vnk}.
Below we denote $\llceil  z\rrceil=2^{\lceil\log_2 z\rceil}$ to be the smallest $K=2^r$, $r\in\mathbb Z_{\ge 0}$ satisfying $K\ge z$.
\begin{corollary}\label{cor:vnk}
Suppose that ${\tt Matching}$ is a 1-hop $\sigma$-stretched skew-symmetric matching cover with ${\tt size}({\tt Matching})\ge \delta$.
Define  $A=\tfrac{12}{c_2^2}$, $B=\frac 1{2A\sqrt{c_3}}$ and $\varepsilon^{\max}=\frac{c_1}{2AB^2}$.
Suppose that $\Delta= B\sqrt{\frac{\varepsilon}{\ln n}}$
where $\varepsilon\in(0,\varepsilon^{\max}]$ and $A\Delta\ln n\ge 1$.
If $K=\llceil A \Delta\ln n\rrceil$ then 
$$
\mathbb E_{M\sim{\tt SamplePaths}^K(\calN)} [| M^{\tt violating} |] \ge 
n^{1-\varepsilon}-1
$$
\end{corollary}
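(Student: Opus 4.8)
The plan is to apply Theorem~\ref{th:vnk} to the given matching cover and then translate the resulting size bound on ${\tt SamplePaths}^K_{[c_2 K]}$ into a bound on the expected number of violating paths. First I would check that the hypotheses of Theorem~\ref{th:vnk} are met: ${\tt Matching}$ is a 1-hop $\sigma$-stretched skew-symmetric matching cover with ${\tt size}({\tt Matching})\ge\delta$ by assumption, and $K=\llceil A\Delta\ln n\rrceil=2^r$ by definition of $\llceil\cdot\rrceil$. The remaining condition is $K\Delta\le c_1$. Since $A\Delta\ln n\ge 1$ we have $K=\llceil A\Delta\ln n\rrceil\le 2A\Delta\ln n$, so $K\Delta\le 2A\Delta^2\ln n$; substituting $\Delta=B\sqrt{\varepsilon/\ln n}$ gives $\Delta^2\ln n=B^2\varepsilon$, hence $K\Delta\le 2AB^2\varepsilon\le 2AB^2\varepsilon^{\max}=c_1$ by the choice of $\varepsilon^{\max}$. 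So Theorem~\ref{th:vnk} applies and yields ${\tt size}\left({\tt SamplePaths}^K_{[c_2 K]}\right)\ge e^{-c_3 K^2}$, i.e. $\mathbb E_{M\sim{\tt SamplePaths}^K_{[c_2K]}(\calN)}[|M|]\ge n e^{-c_3 K^2}$.

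Next I would argue that $c_2 K$ is large enough that the truncation at $c_2 K$ kills \emph{all} nonviolating paths when $u$ is regular, so that on regular directions $M$ (the truncated matching) consists entirely of violating paths. The cover ${\tt SamplePaths}^K$ is $K$-hop (it is a chain of $K$ copies of the 1-hop cover ${\tt Matching}$), so ${\tt SamplePaths}^K_{[c_2 K]}$ is $\sigma'$-stretched with $\sigma'=c_2 K$. By Lemma~\ref{lemma:regular}(b), if $c_2 K\ge\sqrt{6(K+1)\Delta\ln n}$ then every matching in the support of ${\tt SamplePaths}^K_{[c_2 K]}(u)$ has no nonviolating paths whenever $u$ is regular. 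The inequality $c_2 K\ge\sqrt{6(K+1)\Delta\ln n}$ holds because $K+1\le 2K$, so it suffices that $c_2^2 K^2\ge 12 K\Delta\ln n$, i.e. $c_2^2 K\ge 12\Delta\ln n$, i.e. $K\ge\frac{12}{c_2^2}\Delta\ln n=A\Delta\ln n$, which is exactly how $K$ was chosen (with $A=12/c_2^2$).

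Finally I would combine these two facts. For regular $u$, every path in $M\sim{\tt SamplePaths}^K_{[c_2K]}(u)$ is violating, so $|M^{\tt violating}|=|M|$; moreover note that the untruncated cover ${\tt SamplePaths}^K$ and its truncation ${\tt SamplePaths}^K_{[c_2K]}$ have the same violating part, since ${\tt Truncate}_\sigma$ only removes nonviolating paths — hence $|M^{\tt violating}|$ is the same for the truncated and untruncated output on the same $u$. Therefore
\begin{align*}
\mathbb E_{M\sim{\tt SamplePaths}^K(\calN)}[|M^{\tt violating}|]
&\ge \mathbb E_u\left[\mathbf 1[u\text{ regular}]\cdot\mathbb E_{M\sim{\tt SamplePaths}^K_{[c_2K]}(u)}[|M|]\right]\\
&\ge \mathbb E_{M\sim{\tt SamplePaths}^K_{[c_2K]}(\calN)}[|M|]\;-\;\Pr{u\text{ not regular}}\cdot n,
\end{align*}
where the last step uses $|M|\le n$ always (since $M$ is a generalized matching on $V$) to bound the contribution of the non-regular event. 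By Lemma~\ref{lemma:regular}(c), $\Pr{u\text{ not regular}}\le 1/n$, so this subtracts at most $1$. Plugging in the Theorem~\ref{th:vnk} bound $\mathbb E[|M|]\ge n e^{-c_3K^2}$ and the estimate $K\le 2A\Delta\ln n=2AB\sqrt{\varepsilon\ln n}$ (using $\Delta=B\sqrt{\varepsilon/\ln n}$), we get $c_3 K^2\le 4A^2B^2 c_3\,\varepsilon\ln n$; by the choice $B=\frac{1}{2A\sqrt{c_3}}$ we have $4A^2B^2c_3=1$, so $c_3K^2\le\varepsilon\ln n$ and $e^{-c_3K^2}\ge n^{-\varepsilon}$. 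Thus $\mathbb E[|M^{\tt violating}|]\ge n^{1-\varepsilon}-1$, as claimed.

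I expect the only mildly delicate point to be the bookkeeping that the truncation operator does not touch violating paths (so passing between ${\tt SamplePaths}^K$ and ${\tt SamplePaths}^K_{[c_2K]}$ is harmless for counting violating paths) together with the use of the crude bound $|M|\le n$ to absorb the non-regular directions; the rest is substituting the definitions of $A$, $B$, $\varepsilon^{\max}$ and $\Delta$ into the inequalities from Theorem~\ref{th:vnk} and Lemma~\ref{lemma:regular}.
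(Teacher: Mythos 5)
Your proposal is correct and follows the same route as the paper's proof: apply Theorem~\ref{th:vnk} after verifying $K\Delta\le c_1$, check that the precondition of Lemma~\ref{lemma:regular}(b) holds for ${\tt SamplePaths}^K_{[c_2K]}$, and then split the expectation over regular vs.\ non-regular directions, absorbing the latter using $|M|\le n$ and Lemma~\ref{lemma:regular}(c). The only cosmetic difference is that the paper introduces the auxiliary quantities $x$, $y$, $p$ and rearranges $n^{1-\varepsilon}\le (1-p)x+py$, while you write the same decomposition directly; the underlying argument and constant bookkeeping (including the observation that truncation does not affect the violating part) are identical.
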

\begin{proof}
Note that $K\in[A\Delta\ln n,2 A\Delta\ln n]$,
and
$
K\Delta \le   2A\Delta^2\ln n
=2AB^2\varepsilon
\le 2AB^2\varepsilon^{\max}
=c_1
$.
Denote $\calM={\tt SamplePaths}^K_{[c_2 K]}$.
Theorem~\ref{th:vnk} gives
$
{\tt size}(\calM)\ge 
e^{-c_3K^2}
\ge e^{-c_3(2A\Delta \ln n)^2}
=n^{-\varepsilon}
$.
We also have
$
\frac{6 (K+1)\Delta\ln n }{(c_2K)^2}
\le \frac {12 \Delta\ln n}{c_2^2 K}
\le \frac{12 \Delta \ln n}{c_2^2 A\Delta\ln n}
=1
$
and so the precondition of Lemma~\ref{lemma:regular}(b) holds for $\calM$.

\begin{sloppypar}
Let
$x=\mathbb E_{u\sim\calN|u\mbox{\scriptsize~is regular},M\sim\calM(u)}[|M|]
=\mathbb E_{u\sim\calN|u\mbox{\scriptsize~is regular},M\sim\calM(u)}[|M^{\tt violating}|]
$
(by Lemma~\ref{lemma:regular}(b)),
 $y=\mathbb E_{u\sim\calN|u\mbox{\scriptsize~is not regular},M\sim\calM(u)}[|M|]\le n$
 and $p={\tt Pr}_u[u\mbox{ is not regular}]\le \tfrac 1n$
 (by Lemma~\ref{lemma:regular}(c)).
 We have
 $
 n^{1-\varepsilon}\le \mathbb E_{M\sim\calM(\calN)}[|M|] = (1-p)x+py
 \le (1-p)x+\tfrac 1n\cdot n
 $ and so $(1-p)x\ge n^{1-\varepsilon}-1$.
Therefore,
$
\mathbb E_{M\sim{\tt SamplePaths}^K(\calN)} [ | M^{\tt violating} | ] \ge 
(1-p)
x\ge
n^{1-\varepsilon}-1
$.
\end{sloppypar}
\end{proof}


\subsection{Final algorithm}\label{sec:FinalAlg}

In the algorithm below we use the following notation:
$V(p)\subseteq V$ is the set of nodes through which path $p$ passes,
and $V(M)=\bigcup_{p\in M} V(p)$.
Furthermore, if $p$ is violating then $p^{\tt violating}$ is a subpath of $p$
satisfying~\eqref{eq:PathViolation}. Note that lines 1-3 compute sets of paths $\tilde M_1,\ldots,\tilde M_N$,
which are then combined into a single set $M\subseteq\bigcup_i \tilde M_i$ using one of the two options.
Option 1 will be mainly used for the analysis, while option 2 will be used for an efficient parallel implementation.

\ignore{
\begin{algorithm}[H]
  \DontPrintSemicolon
\SetNoFillComment
set $M:=\varnothing$ \\
\For{$i=1,\ldots,N$}
{
	sample $u\sim\calN$, call $M_i={\tt SamplePaths}^K(u)$ \\
	let $\tilde M_i=\{p^{\tt violating}\::\:p\in M_i^{\tt violating}\}$ \\
	let $M := M \cup \{p\in \tilde M_i \::\: V(p)\cap V(M)=\varnothing\}$
}
	return $M$ 
      \caption{Computing violating paths. 
      }\label{alg:parallel}
\end{algorithm}
}

\begin{algorithm}[H]
  \DontPrintSemicolon
\SetNoFillComment
\For{$i=1,\ldots,N$}
{
	sample $u\sim\calN$, call $M_i={\tt SamplePaths}^K(u)$ \\
	let $\tilde M_i=\{p^{\tt violating}\::\:p\in M_i^{\tt violating}\}$ 
}
option 1: set $M\!:=\!\varnothing$, then for $i\!=\!1,\ldots,N$ update $M \!:=\! M \cup \{p\in \tilde M_i \::\: V(p)\cap V(M)=\varnothing\}$ \\
option 2: let $M$ be a maximal set of paths in $M^\ast\eqdef\bigcup_{i=1}^N\tilde M_i$ s.t.\ $V(p)\!\cap\! V(q)=\varnothing$ for $p,q\!\in\! M$, $p\!\ne\! q$ \\
	return $M$ 
      \caption{Computing violating paths. 
      }\label{alg:parallel}
\end{algorithm}

\begin{theorem}\label{th:parallel}
Suppose that ${\tt Matching}$ is a 1-hop $\sigma$-stretched skew-symmetric matching cover with ${\tt size}({\tt Matching})\ge \delta$,
and let $A,B,\varepsilon^{\max}$ be the constants defined on Corollary~\ref{cor:vnk} for value $\delta/2$.
Suppose that $\Delta= B\sqrt{\frac{\varepsilon}{\ln n}}$
where $\varepsilon\in(0,\varepsilon^{\max}]$ and $A\Delta\ln n\ge 1$.
If $K=\llceil A \Delta\ln n\rrceil$ and $N\ge \tfrac{ \delta n^\varepsilon}{4 K(1-n^{\varepsilon-1})} $ then $\mathbb E[|M|]\ge \tfrac{\delta n}{8K}$  where $M$ is the output of Algorithm~\ref{alg:parallel} with option 1.
\end{theorem}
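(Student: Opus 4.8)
I would prove the bound by a one‑round‑at‑a‑time (optional stopping) argument. Write $M^{(i)}$ for the value of $M$ after the $i$‑th iteration of the loop in Algorithm~\ref{alg:parallel} (so $M^{(0)}=\varnothing$, $M^{(N)}=M$), and $X_i=|M^{(i)}|-|M^{(i-1)}|$ for the number of paths added in round $i$. A path added in round $i$ is vertex‑disjoint from $V(M^{(i-1)})$, hence cannot already lie in $M^{(i-1)}$; therefore $|M|=\sum_{i=1}^N X_i$ and, unwinding option~1, $X_i=\bigl|\{p\in\tilde M_i:V(p)\cap V(M^{(i-1)})=\varnothing\}\bigr|$. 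The plan is: (1) show that as long as $V(M^{(i-1)})$ is not too large, round $i$ adds $\Omega(n^{1-\varepsilon})$ paths in expectation, using the ``robustness to deletions'' of Corollary~\ref{cor:vnk}; (2) observe that once $V(M)$ does become large, $|M|$ is already as big as we want; (3) trade these off against each other.

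\textbf{Step 1: a deletion‑robust form of Corollary~\ref{cor:vnk}.} For $W\subseteq V$ let ${\tt Matching}^W(u)=\{(x,y)\in{\tt Matching}(u):x,y\notin W\}$. This is again a $1$‑hop, $\sigma$‑stretched, skew‑symmetric matching cover (skew‑symmetry survives since ${\tt Matching}$ is deterministic), and since each node lies in at most one edge of ${\tt Matching}(u)$ we have ${\tt size}({\tt Matching}^W)\ge{\tt size}({\tt Matching})-|W|/n\ge\delta-|W|/n$; hence ${\tt size}({\tt Matching}^W)\ge\delta/2$ whenever $|W|\le\delta n/2$. The key point is a coupling: running Algorithm~\ref{alg:vnk} with the \emph{same} $u_1,\dots,u_K$ but with ${\tt Matching}^W$ in place of ${\tt Matching}$, a length‑$K$ path lies in ${\tt Matching}^W(u_1)\circ\cdots\circ{\tt Matching}^W(u_K)$ iff it lies in ${\tt Matching}(u_1)\circ\cdots\circ{\tt Matching}(u_K)$ and avoids $W$ (every internal edge survives the deletion), and the violating status of a path depends only on its node sequence. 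So, writing ${\tt SamplePaths}^{W,K}$ for the modified algorithm, ${\tt SamplePaths}^{W,K}(u)^{\tt violating}=\{p\in{\tt SamplePaths}^{K}(u)^{\tt violating}:V(p)\cap W=\varnothing\}$, and applying Corollary~\ref{cor:vnk} to ${\tt Matching}^W$ (its hypotheses hold with exactly the constants $A,B,\varepsilon^{\max}$ the theorem fixed for the value $\delta/2$) gives
\[
\mathbb E_u\Bigl[\,\bigl|\{p\in{\tt SamplePaths}^{K}(u)^{\tt violating}:V(p)\cap W=\varnothing\}\bigr|\,\Bigr]\ \ge\ n^{1-\varepsilon}-1\qquad\text{whenever }|W|\le\tfrac{\delta n}{2}.
\]

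\textbf{Steps 2--3: per‑round increment and optional stopping.} Since $\tilde M_i$ is $M_i^{\tt violating}$ with each path shortened to a subpath (so $V(\cdot)$ only shrinks), and $W:=V(M^{(i-1)})$ is determined by rounds $1,\dots,i-1$, Step~1 yields $\mathbb E[X_i\mid\text{rounds }1,\dots,i-1]\ge c:=n^{1-\varepsilon}-1$ on the event $\{|V(M^{(i-1)})|\le\delta n/2\}$ (here one fixes a rule choosing $p^{\tt violating}$ so that $p\mapsto p^{\tt violating}$ is injective on $M_i^{\tt violating}$; this is the only place the set/multiset distinction matters). Each $p\in M$ is a subpath of a $K$‑hop path, so $|V(p)|\le K+1$ and $|V(M^{(i)})|\le(K+1)|M^{(i)}|$; thus once $|V(M^{(i)})|\ge\delta n/2$ we already have $|M^{(i)}|\ge\frac{\delta n}{2(K+1)}\ge\frac{\delta n}{8K}$ (using $K\ge1$). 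Let $T$ be the first round $i$ with $|V(M^{(i)})|\ge\delta n/2$ (and $T>N$ if this never occurs); $\{T\ge i\}$ is determined by rounds $1,\dots,i-1$, so
\[
\mathbb E[|M|]\ \ge\ \sum_{i=1}^N\mathbb E\bigl[X_i\,\mathbf 1[T\ge i]\bigr]\ \ge\ c\sum_{i=1}^N{\tt Pr}[T\ge i]\ =\ c\,\mathbb E[\min(T,N)]\ \ge\ cN\cdot{\tt Pr}[T>N],
\]
while also $\mathbb E[|M|]\ge\mathbb E\bigl[|M^{(T)}|\,\mathbf 1[T\le N]\bigr]\ge\frac{\delta n}{2(K+1)}\,{\tt Pr}[T\le N]$. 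Using the hypothesis on $N$, $cN\ge n^{1-\varepsilon}(1-n^{\varepsilon-1})\cdot\frac{\delta n^\varepsilon}{4K(1-n^{\varepsilon-1})}=\frac{\delta n}{4K}$, so with $p:={\tt Pr}[T\le N]$,
\[
\mathbb E[|M|]\ \ge\ \max\!\Bigl\{\tfrac{\delta n}{4K}(1-p),\ \tfrac{\delta n}{2(K+1)}\,p\Bigr\}\ \ge\ \frac{\tfrac{\delta n}{4K}\cdot\tfrac{\delta n}{2(K+1)}}{\tfrac{\delta n}{4K}+\tfrac{\delta n}{2(K+1)}}\ =\ \frac{\delta n}{2(3K+1)}\ \ge\ \frac{\delta n}{8K},
\]
the last step because $6K+2\le 8K$ for $K\ge1$.

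\textbf{Main obstacle.} The substantive part is Step~1: one must set up the deletion and the coupling so that ``the paths of the unmodified ${\tt SamplePaths}^K$ that avoid $W$'' are \emph{literally} the output of ${\tt SamplePaths}^K$ run on the deleted matching cover, and verify that ${\tt Matching}^W$ still meets the hypotheses of Corollary~\ref{cor:vnk} with size at least $\delta/2$. This is what forces the threshold $\delta n/2$ (equivalently, uses that a matching has maximum degree one) rather than a cruder estimate, and it is precisely the threshold needed for the final constants to land on $\frac{\delta n}{8K}$. Steps~2--3 are then a routine optional‑stopping computation whose only delicate aspect is the numerics.
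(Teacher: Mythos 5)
Your proof follows essentially the same strategy as the paper's: delete from ${\tt Matching}$ all edges incident to the nodes already used, observe that the resulting cover still meets the size hypothesis of Corollary~\ref{cor:vnk} (for the value $\delta/2$), conclude that each round in which the current set is still small contributes $n^{1-\varepsilon}-1$ new node-disjoint violating paths in expectation, and then trade the ``still running'' branch off against the ``already large'' branch via $\min_{\gamma}\max\{\gamma b,(1-\gamma)a\}=\tfrac{ab}{a+b}$. The coupling in your Step~1 and the optional-stopping bookkeeping in Steps~2--3 are sound and match the paper's argument.

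There is, however, one step that is not justified under the stated hypotheses and on which your final constant depends: the claim ${\tt size}({\tt Matching}^W)\ge{\tt size}({\tt Matching})-|W|/n$, justified ``since each node lies in at most one edge of ${\tt Matching}(u)$''. The theorem only assumes that ${\tt Matching}$ is a $1$-hop matching cover, i.e., its output is a \emph{generalized} matching, in which a node may simultaneously carry one incoming and one outgoing edge; deleting all edges incident to $W$ can therefore remove up to $2|W|$ edges, so the size may drop by $2|W|/n$. (The paper's proof accounts for exactly this: it arranges $|U|\le\delta n/4$ and bounds the drop by $\delta/2$.) With the corrected drop, your threshold must become $|V(M^{(i-1)})|\le\delta n/4$, the ``stopped'' branch then only gives $|M|\ge\tfrac{\delta n}{4(K+1)}$, and your final balance yields $\tfrac{\delta n}{8K+4}<\tfrac{\delta n}{8K}$, narrowly missing the claimed constant. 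The paper sidesteps this by gating each round on $|M^{(i-1)}|\le\tfrac{\delta n}{4K}$ directly, so that the ``already large'' branch contributes the full $\tfrac{\delta n}{4K}$, while $|V(M^{(i-1)})|\le K|M^{(i-1)}|\le\delta n/4$ controls the deletion. Either restructure your stopping rule that way, or explicitly invoke the additional fact --- true for the concrete ${\tt Matching}$ of Algorithm~\ref{alg:Matching}, whose edges form a bipartite matching between disjoint sets $A$ and $B$, but not part of the theorem's hypotheses --- that each node lies in at most one edge of ${\tt Matching}(u)$.
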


\begin{proof}
Let $\calE_i$ be the event that set $M$ at the beginning of iteration $i$ satisfies $|M|\le a:=\tfrac{ \delta n}{4 K}$,
and let $\gamma_i={\tt Pr}[\calE_i]$.
Let $x_i$ be the expected number of paths that have been added to $M$ at iteration $i$, so that $\mathbb E[|M|]=x_1+\ldots+x_N$ for the final set $M$.
Let $y_i$ be the expected number of paths that have been added to $M$ at iteration $i$ conditioned on event $\calE_i$.
Clearly, we have $x_i \ge \gamma_i y_i \ge \gamma y_i$ where $\gamma:=\gamma_N$.

Next, we bound $y_i$. Let $M$ be the set at the beginning of iteration $i$, and suppose that $\calE_i$
holds, i.e.\ $|M|\le \tfrac{ \delta n}{4 K}$.
Denote $U=V(M)$, then $|U|\le K|M|\le\tfrac{ \delta n}{4}$.
Let ${\tt Matching}'\subseteq{\tt Matching}$ be the (skew-symmetric) matching cover obtained by removing from ${\tt Matching}(u)$
edges $(x,y)$ and $(y,x)$ with $x\in U$ (for all $u\in\mathbb R^d$).
Clearly, we have ${\tt size}({\tt Matching}')\ge\delta-\tfrac 12\delta=\tfrac 12\delta$.
Let ${\tt SamplePaths}'^K$ be the matching cover given by Algorithm~\ref{alg:vnk}
where ${\tt Matching}$ is replaced by ${\tt Matching}'$. Clearly,
we have ${\tt SamplePaths}'^K\subseteq {\tt SamplePaths}^K$.
By Corollary~\ref{cor:vnk} applied to ${\tt Matching}'$,
 ${\tt SamplePaths}'^K(\calN)$ produces at least $n^{1-\varepsilon}-1$ violating paths in expectation.
By construction, all these paths $p$ satisfy $V(M)\cap V(p)=\varnothing$,
and therefore $y_i\ge n^{1-\varepsilon}-1$.

We showed that $\mathbb E[|M|]\ge \sum_{i=1}^N \gamma(n^{1-\varepsilon}-1)=\gamma b$  for the final set $M$ where $b:=N(n^{1-\varepsilon}-1)$.
We also have $\mathbb E[|M|]\ge (1-\gamma)a$,
and hence $\mathbb E[|M|]\ge \min_{\gamma\in[0,1]} \max\{\gamma b,(1-\gamma)a\}=\tfrac{ab}{a+b}$
(the minimum is attained at $\gamma=a/(a+b)$). By assumption, we have $b\ge a$, and so 
$\mathbb E[|M|] \ge \tfrac 12 a$.
\end{proof}

Next, we analyze Algorithm~\ref{alg:parallel} with option 2. Recall that ``parallel runtime'' is the maximum runtime over all processors.
\begin{lemma}\label{lemma:option2}
(a) Let $M$ and $M'$ be the outputs of Algorithm~\ref{alg:parallel} with options 1 and 2, respectively
(for a given run of the loop in lines 1-3). Then $|M'|\ge |M|/K^3$. \\
(b) Algorithm~\ref{alg:parallel} with option 2 can be implemented on $N$ processors with parallel runtime $O(KT_{\tt maxflow}+n(Kd+K\log^2 n + \log N)))$
(in any version of PRAM).
\end{lemma}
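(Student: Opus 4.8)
The plan is to prove (a) and (b) separately, with (a) being a purely combinatorial packing argument and (b) a routine (if tedious) parallel-implementation bookkeeping exercise.

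\textbf{Part (a).} First I would observe that $M$, the output of option~1, is a set of violating subpaths $p\in M^\ast=\bigcup_i\tilde M_i$ that are pairwise node-disjoint; it is obtained greedily by scanning $i=1,\dots,N$ and adding $p\in\tilde M_i$ whenever $V(p)$ avoids all nodes already used. The set $M'$ of option~2 is a \emph{maximal} node-disjoint subcollection of $M^\ast$. The key point is that option~1's output $M$ is itself a node-disjoint subcollection of $M^\ast$ — but not necessarily maximal, because maximality in option~2 is with respect to all of $M^\ast$, whereas $M$ was built greedily and might have ``blocked'' itself. Actually the subtlety is the reverse: $M$ (option 1) \emph{is} maximal in a certain online sense but need not be a maximum packing. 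I would instead argue directly: every path $p\in M$ has $|V(p)|\le K+1$ (since ${\tt SamplePaths}^K$ produces $K$-hop paths and $p^{\tt violating}$ is a subpath, so at most $K+1$ nodes, and $K+1\le 2K\le K^2$ crudely, or one can be cleaner). Since $M'$ is a maximal node-disjoint subfamily of $M^\ast\supseteq M$, every $p\in M$ intersects some $q\in M'$ in at least one node; assign to each $p\in M$ such a $q\in M'$. Each $q\in M'$ has at most $|V(q)|\le K+1$ nodes, and each node of $V(q)$ lies on at most one path of $M$ (node-disjointness of $M$), so at most $K+1$ paths of $M$ can be assigned to a single $q\in M'$. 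Hence $|M|\le (K+1)|M'|\le K^3|M'|$ (using $K\ge 1$, so $(K+1)\le 2K\le K^3$ when $K\ge 2$, and trivially when $K=1$), giving $|M'|\ge|M|/K^3$. The main thing to get right here is the ``each $q$ captures at most $K+1$ paths'' counting and a clean bound $K+1\le K^3$; one might want $K\ge 2$, which holds since $K=\llceil A\Delta\ln n\rrceil\ge 2$ when $A\Delta\ln n\ge 1$ forces $K\ge 2$ (as $\llceil z\rrceil$ for $z\ge 1$ is at least $2$ unless $z=1$; a tiny case check handles $K=1$).

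\textbf{Part (b).} Here the plan is to walk through the implementation. Lines 1--3 run $N$ independent copies of ${\tt SamplePaths}^K$, one per processor. Each copy: (i) samples $u_1,\dots,u_K$ ($O(Kd)$ work for $n$-free part, but forming all $w_x=\langle x,u_j\rangle$ across the $K$ matching calls costs $O(nKd)$); (ii) performs $K$ calls to ${\tt Matching}$, each dominated by one maxflow on a graph with $n+2$ nodes and $O(n\log n)$ edges plus the flow-decomposition postprocessing in $O(m\log n)=O(n\log^2 n)$ time via dynamic trees — total $O(K T_{\tt maxflow}+nK\log^2 n)$; (iii) chains the $K$ matchings together and extracts violating subpaths, $O(nK)$ work. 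So each processor finishes lines 1--3 in $O(KT_{\tt maxflow}+n(Kd+K\log^2 n))$. For option~2 one must then compute a maximal node-disjoint subfamily of $M^\ast=\bigcup_i\tilde M_i$. Since $|M^\ast|\le N\cdot n$ paths touching $\le n$ nodes total is false — actually $\sum_i|\tilde M_i|$ can be up to $Nn$, but the total node-incidence is at most $N n (K+1)$; still, a maximal independent set in the ``conflict graph'' can be found greedily. I would compute it by a simple parallel/sequential greedy over the $N$ processors' outputs: e.g.\ process blocks $\tilde M_1,\dots,\tilde M_N$ in order, maintaining a Boolean ``used'' array over $V$ of size $n$; this is essentially sequential and costs $O(\sum_i |\tilde M_i|\cdot K)$, which is too much. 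The intended bound has a ``$\log N$'' term, suggesting a different route: gather the $N$ sets (each processor holds its own $\tilde M_i$), and since the paths within a single $\tilde M_i$ are already node-disjoint (they come from a generalized matching), the conflict graph has a special structure. I would argue the maximal set can be found by a standard parallel MIS-type / prefix computation in $O(n(Kd+K\log^2 n+\log N))$: each node $x\in V$ picks, among all paths through it, the one from the smallest-index block (or via a symmetry-breaking priority), resolve conflicts in $O(\log N)$ rounds of a tournament over the (at most $N$) candidate paths per node, each round $O(nK)$ work per processor. The $\log N$ comes from the reduction/tournament across processors; the $n(Kd+K\log^2 n)$ dominates the per-processor local work from lines 1--3.

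\textbf{Main obstacle.} Part (a) is short and the only risk is an off-by-constant in the $K^3$ (easily absorbed). The real work is part (b): making the maximal-disjoint-family computation genuinely run in $O(n(Kd+K\log^2 n+\log N))$ parallel time in \emph{any} PRAM variant. The subtlety is that naively a maximal independent set needs more than $O(\log N)$ depth, but here the conflict graph is very sparse (each path has $\le K+1$ nodes, each node lies on $\le N$ candidate paths but only $\le 1$ per block), so I would reduce it to: for each node, a constant-time choice; and globally, a single $O(\log N)$-depth symmetry-breaking/reduction to resolve the at-most-$N$-way conflicts, with $O(nK)$ total work per round. I expect the cleanest writeup to fold this into the greedy ``option 1 is a special case of option 2'' observation and then quote a standard parallel greedy-MIS bound for graphs of bounded degeneracy, charging the depth to $O(\log N)$ and the work to the stated per-processor bound.
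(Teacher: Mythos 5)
In part (a) your charging scheme (by maximality every $p\in M$ meets some $q\in M'$; bound how many $p$'s can be charged to a single $q$) is the right shape and close to the paper's counting argument, but it rests on a false premise: the output $M$ of option 1 is \emph{not} pairwise node-disjoint. At iteration $i$, option 1 adds \emph{every} $p\in\tilde M_i$ whose node set avoids the $M$ built in iterations $1,\dots,i-1$, and distinct paths inside a single $\tilde M_i$ can share nodes: a generalized matching only forbids a node from having two incoming or two outgoing paths, and in ${\tt Matching}(u_1)\circ\cdots\circ{\tt Matching}(u_K)$ the same node can occur at different positions of different paths. The paper's proof leans on exactly the corrected multiplicity: each node lies on at most $K$ paths of $M$, all coming from one $\tilde M_i$. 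Plugging that into your charge gives at most $|V(q)|\cdot K=O(K^2)$ paths of $M$ per $q\in M'$, which still yields $|M'|\ge|M|/K^3$ (modulo a trivial check for very small $K$); so the error is repairable, but the step ``each node of $V(q)$ lies on at most one path of $M$ by node-disjointness of $M$'' is wrong as stated.

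In part (b) the gap is more substantial: you correctly isolate the bottleneck --- computing a maximal node-disjoint subfamily of $M^\ast$ within an $O(n\log N)$ budget --- but you do not actually give an algorithm for it. The appeal to a ``standard parallel greedy-MIS bound for graphs of bounded degeneracy'' is not a proof, and Luby-type MIS does not obviously meet the stated depth and work, as you yourself flag. The paper's solution is elementary and is what you should supply: each processor first prunes its own $\tilde M_i$ to a maximal node-disjoint subfamily (this cannot change the output of line 5 and leaves each processor holding $O(n)$ data), and then the $N$ families are merged pairwise up a binary computation tree of depth $O(\log N)$, each merge being a sequential $O(n)$ computation on one processor; this yields the $O(n\log N)$ term directly, with no MIS machinery. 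Your accounting for lines 1--3 (one maxflow plus an $O(m\log n)$ flow-decomposition per call to ${\tt Matching}$, $O(nd)$ for the inner products, $K$ calls per processor) matches the paper.
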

\begin{proof} 
\myparagraph{(a)} 
By the construction of $M$, each node $v\in V(M^\ast)$ is contained in at most $K$ paths $p\in M$
(all of them belong to some $\tilde M_i$ for fixed $i$).
Therefore, $|V(M^\ast)|\ge |M|/K$.
We also have $|V(M')|\ge |V(M^\ast)|/ K$ and $|M'|\ge|V(M')|/ K$, since $|V(p)|\le K$ for any $p\in M^\ast$.
Putting these inequalities together gives the claim.

\myparagraph{(b)} Clearly, sets $\tilde M_i$ for $i\in[N]$ can be computed in parallel on $N$ processors
in time $O(K(T_{\tt maxflow}+nd+n\log^2 n)))$ per processor
 (since computing dot products $\langle x,u\rangle$ in Algorithm~\ref{alg:Matching} takes time $O(Knd)$,
 and flow decompositions take time $O(Km\log n)=O(Kn\log^2 n)$).
 We can assume that after computing $\tilde M_i$, processor $i$
 computes a maximal set of paths $\tilde M_i'\subseteq\tilde M_i$ that are pairwise node-disjoint, and updates $\tilde M_i:=\tilde M_i'$.
 Clearly, this step does not affect the output of line 5.
 
 It remains to discuss how to implement line 5 (computing a maximal set of paths $M$ in $M^\ast$ which are pairwise node-disjoint).
 For $N=2$ this can be easily done in $O(n)$ time: processor 1 sends $\tilde M_1$ to processor 2, and processor 2 computes the answer.
 The general case can be reduced to the case above using a divide-and-conquere strategy with a computation tree which is a binary
 tree whose leaves are the $N$ processors. The depth of this tree is $O(\log N)$, and hence the parallel runtime of this procedure is $O(n\log N)$.
\end{proof}

Note in Algorithm~\ref{alg:parallel} 
step 3 can be run in parallel on $N$ processors; after all of them finish, we can run the rest of algorithm on a single machine.
By putting everything together, we obtain
\begin{theorem}
There exists an algorithm for {\sc Balanced Separator}
that given $\varepsilon\in[\Theta(1/\log n),\Theta(1)]$, produces $O(\sqrt{(\log n)/\varepsilon})$-pseudoapproximation w.h.p..
Its expected parallel runtime is $O((\log^{O(1)}n) T_{\tt maxflow} )$ on $O(n^\varepsilon)$ processors (in any version of PRAM).
\end{theorem}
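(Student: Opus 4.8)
The plan is to run the Arora--Kale MW scheme of Section~\ref{sec:MW} with our path-based oracle, tracking both the correctness certificate and the parallel cost. Set-up: sparsify $G$ with Bencz\'ur--Karger so that $m=O(n\log n)$ (correct w.h.p.); fix $\varepsilon\in[\Theta(1/\log n),\Theta(1)]$ and choose $\Delta=B\sqrt{\varepsilon/\ln n}$, $K=\llceil A\Delta\ln n\rrceil$, $\pi=\tfrac{6\alpha}{c'n\Delta}$, $N=\lceil\tfrac{\delta n^\varepsilon}{4K(1-n^{\varepsilon-1})}\rceil=O(n^\varepsilon)$ as in Theorem~\ref{th:parallel}, and a projection dimension $d=\log^{O(1)}n$ large enough that the Gram approximation \eqref{eq:GramApproximation:i}--\eqref{eq:GramApproximation:ij} holds w.h.p.\ with $\tau\le 2$ and $\gamma\le\min\{\tfrac{\Delta}{20(K+1)},\,\tfrac{\alpha}{20n\pi}\}=\Theta(1/\log n)$, so that Proposition~\ref{prop:NGANGKAS} and its counterpart for \eqref{eq:PathViolationtilde} apply. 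Following~\cite{AHK}, loop over $O(\log n)$ geometrically spaced guesses $\alpha$ for the optimum of SDP~\eqref{eq:SDP}; for each $\alpha$ run the MW algorithm with slack $\epsilon=\tfrac12\alpha$, implementing {\tt Oracle} by one call to Algorithm~\ref{alg:parallel} with option~2, as follows. If some invocation of ${\tt Matching}$ inside it reached line~5, output that cut and halt; if some invocation reached line~7, return the corresponding $\pi$-regular flow with $F$ the Laplacian of its flow graph; otherwise pass the returned node-disjoint violating paths (each replaced by its subpath $p^{\tt violating}$) to Lemma~\ref{lemma:ViolatingPaths}. If no cut was found and the resulting set $M$ has $|M|<\tfrac{\delta n}{16K^4}$, re-run the oracle with fresh randomness.

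Next I would verify validity and a uniform width bound $\rho$. A line-5 output is $\Theta(1)$-balanced by Lemma~\ref{lemma:matching}(a) of capacity $<\tfrac{6\alpha}{\Delta}=O(\alpha\sqrt{(\log n)/\varepsilon})$, so the oracle's parameter is $\kappa=O(\sqrt{(\log n)/\varepsilon})$; a line-7 output is a valid type-(i) answer because Proposition~\ref{prop:NGANGKAS} lifts \eqref{eq:largeDtilde} to \eqref{eq:largeD}, and its demand degrees are $\le\pi$, so its width is $\le\tfrac{\alpha}{n}+2\pi=O(\tfrac{\alpha}{n\Delta})$; and the violating-paths answer has width $\tfrac{\alpha}{n}+O(\tfrac{\alpha}{|M|\Delta})$ where, crucially, node-disjointness of the paths in $M$ forces $\pi_F,\pi_D=O(1)$ in Lemma~\ref{lemma:ViolatingPaths}, so on the retained runs ($|M|=\Omega(n/K^4)$, $K=O(\log^{O(1)}n)$) this is $O(\alpha\log^{O(1)}n/n)$. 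Hence $\rho=O(\alpha\log^{O(1)}n/n)$ works uniformly, the MW iteration count is $T=O(\rho^2n^2\log n/\epsilon^2)=O(\log^{O(1)}n)$, and the total number of oracle calls over all guesses is $\log^{O(1)}n$ (before retries). The approximation ratio then follows from the standard threshold argument: let $\hat\alpha$ be the smallest guess at which the oracle returns a cut; at the previous guess all $T$ calls returned type-(i), so the MW guarantee of Section~\ref{sec:MW} certifies that the optimum of SDP~\eqref{eq:SDP}, hence $\mathrm{OPT}$ for {\sc $c$-Balanced Separator}, is $\Omega(\hat\alpha)$, while the cut output at $\hat\alpha$ has capacity $<\tfrac{6\hat\alpha}{\Delta}=\kappa\hat\alpha=O(\sqrt{(\log n)/\varepsilon})\cdot\mathrm{OPT}$ and is $\Theta(1)$-balanced --- the claimed pseudoapproximation.

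For the runtime I would count one oracle call as one execution of Algorithm~\ref{alg:parallel}, which by Lemma~\ref{lemma:option2}(b) runs on $N=O(n^\varepsilon)$ processors in parallel time $O(KT_{\tt maxflow}+n(Kd+K\log^2 n+\log N))=O(\log^{O(1)}n\cdot T_{\tt maxflow})$ since $T_{\tt maxflow}=\Omega(n)$; the per-iteration MW bookkeeping --- forming $\tilde V\in\mathbb R^{d\times n}$ from the action of the matrix exponential of a sum of $\log^{O(1)}n$ sparse Laplacians and diagonal matrices on $d$ random vectors, via $\log^{O(1)}n$ sparse matrix--vector products --- is $\tilde O(n)=\tilde O(T_{\tt maxflow})$ work on a single processor. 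The expectation enters through the retry loop: in case~(i) of Lemma~\ref{lemma:matching}, Theorem~\ref{th:parallel} together with Lemma~\ref{lemma:option2}(a) gives $\mathbb E|M|\ge\tfrac{\delta n}{8K^4}$, and since $|M|\le n$ a reverse Markov bound gives ${\tt Pr}[|M|\ge\tfrac{\delta n}{16K^4}]=\Omega(1/\log^{O(1)}n)$; in case~(ii), each of the $N\ge1$ invocations of ${\tt Matching}$ on a fresh $u_1\sim\calN$ in a given run reaches line~5 or~7 with probability $\Omega(1)$. Either way the expected number of retries per oracle call is $\log^{O(1)}n$, so the expected number of parallel oracle executions is $\log^{O(1)}n$ and the expected parallel runtime is $O(\log^{O(1)}n\cdot T_{\tt maxflow})$; the only nontrivial communication is the divide-and-conquer of Lemma~\ref{lemma:option2}(b), so the algorithm runs in any PRAM variant. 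For w.h.p.\ correctness, note that w.h.p.\ the Bencz\'ur--Karger sparsifier and the Gram approximation both succeed, and conditioned on that every oracle answer actually used is a genuine type-(i) or type-(ii) output with width $\le\rho$, so the MW analysis applies verbatim and the final cut is a valid $O(\sqrt{(\log n)/\varepsilon})$-pseudoapproximation.

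The step I expect to be the main obstacle is keeping the oracle width (hence $T$) uniformly small. Algorithm~\ref{alg:parallel} only controls $\mathbb E|M|$, and Lemma~\ref{lemma:matching} splits into a ``many violating paths'' regime and a ``likely early termination'' regime without telling us which one we are in; the detect-and-retry loop above reconciles these, succeeding within $\log^{O(1)}n$ expected tries in either regime, so that whenever we invoke Lemma~\ref{lemma:ViolatingPaths} we are guaranteed $|M|=\Omega(n/\log^{O(1)}n)$. Equally essential is that the paths produced by Algorithm~\ref{alg:parallel} are pairwise node-disjoint, which is exactly what bounds $\pi_F,\pi_D=O(1)$ in Lemma~\ref{lemma:ViolatingPaths}; without either ingredient the width --- and hence the iteration count $T$ --- would be superpolylogarithmic and the stated bound would fail.
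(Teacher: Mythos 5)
Your proposal is correct and follows essentially the same route as the paper's proof: implement the oracle by running Algorithm~\ref{alg:parallel} (option 2) with a detect-and-retry loop until either an early termination at line 5/7 of ${\tt Matching}$ occurs or $|M|=\Omega(n/K^{O(1)})$ node-disjoint violating paths are found, bound the width uniformly by $O(\alpha\log^{O(1)}n/n)$ so that $T=\log^{O(1)}n$, and charge the parallel cost via Lemma~\ref{lemma:option2}(b). The only differences are cosmetic: you are more explicit about the outer search over $\alpha$ and the final threshold argument (which the paper delegates to \cite{AHK}), and you use the sharper bound $\pi_F=O(1)$ available from node-disjointness where the paper settles for $\pi_F=O(K)$ — both yield the same polylogarithmic conclusion.
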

\begin{proof}

Let $\delta,\sigma$ be as in Lemma~\ref{lemma:matching}.
Set parameters as in Theorem~\ref{th:parallel}, and require additionally that $\varepsilon\le \tfrac 12$.
Condition $A\Delta\ln n\ge 1$ means that this can be done for $\varepsilon\in[\Theta(1/\log n),\Theta(1)]$.
By Theorem~\ref{th:parallel} and Lemma~\ref{lemma:option2}, the output of Algorithm~\ref{alg:parallel}
satisfies $\mathbb E[|M|]\ge \frac{\delta n}{8 K^{1+h}}$
where $h=0$ if option 1 is used, and $h=3$ if option 2 is used.

To implement the oracle, run Algorithm~\ref{alg:parallel}
until either procedure ${\tt Matching}(\cdot)$ (Alg.~\ref{alg:Matching}) terminates at lines 5 or 7,
or until we find a set $M$ of violating paths with $|M|\ge \tfrac {\delta n}{16 K^{1+h}}$.
In the latter case use Lemma~\ref{lemma:ViolatingPaths} to set the variables.
Since we always have $|M|\le n$, the expected number of runs will be $O(n/\tfrac n{K^{1+h}})=O(K^{1+h})$.

If the oracle terminates at line 5 of Alg.~\ref{alg:Matching}, then it returns a $c'$-balanced cut of cost at most $\tfrac{6\alpha}\Delta=O\left(\alpha\sqrt{\tfrac{\log n}\varepsilon}\right)$.
Otherwise it returns valid variables $f_p,F$. 
If the oracle terminates at line 7 then its width is $\rho=O(\tfrac \alpha n + \pi)=O(\tfrac {\alpha}{ n \Delta})$.
Now suppose that the oracle finds set $M$ of violating paths with $|M|\ge \tfrac {\delta n}{16 K^{1+h}}$.
Clearly, degrees $\pi_F,\pi_D$ in Lemma~\ref{lemma:ViolatingPaths} satisfy $\pi_F=O(K)$ and $\pi_D=O(1)$,
so the oracle's width in this case is $\rho=O(\tfrac \alpha n + \tfrac {\alpha K^{2+h}}{n\Delta})$.
In both cases we have $\rho=O(\tfrac {\alpha K^{2+h}}{n\Delta})$.
Thus, the number of calls to ${\tt Oracle}$
 for a fixed value of $\alpha$ is $T=O(\tfrac{\rho^2n^2\log n}{\alpha^2})=O(\tfrac{K^{4+2h}\log n}{\Delta^2})$.

Next, we bound the complexity of computing approximations $\tilde v_1,\ldots,\tilde v_{n_{\tt orig}}$ for fixed $\alpha$ as described in Theorem~\ref{th:GramApproximation}
in Section~\ref{sec:Gram}. (Here we assume familiarity with Appendix~\ref{sec:app:MW}).
We have $\tau=\Theta(1)$ and $\gamma=\Theta\{\min\{\tfrac{\alpha}{n\pi},\tfrac{\Delta}{K}\})=\Theta(\tfrac\Delta K)$,
thus we need to use dimension $d=\Theta(\tfrac{\log n}{\gamma^2})=\Theta(\tfrac{K^2\log n}{\Delta^2})$.
We need to compute $Tkd$ matrix-vector products of form $A\cdot u$ where
$k=O(\max\{(\tfrac{\rho n\log n}{\alpha})^2,\log n\})=O(\tfrac{K^4\log^2 n}{\Delta^2})$.
Each matrix $A$ has the form $\sum_{r=1}^{O(T)} N^{(r)}$,
and each $N^{(r)}$ can be represented as a sum of $O(K)$ ``easy'' matrices (e.g.\ corresponding to matchings)
for which the multiplication with a vector takes $O(n)$ time.
To summarize, the overall complexity of 
computing approximations $\tilde v_1,\ldots,\tilde v_{n_{\tt orig}}$ is $Tkd\cdot TKn=O(\tfrac{K^{15+4h}\log^5 n}{\Delta^8}n)$, 
which is $O(n\log^{O(1)}n)$
since $\Delta=\Theta\left(\sqrt{\tfrac{\varepsilon}{\log n}}\right)\in \left[\Theta(\tfrac{1}{\log n}),\Theta(\tfrac{1}{\sqrt{\log n}})\right]$ 
and $K=\Theta(\Delta\log n)=O(\sqrt{\log n})$.
These computations can be done on a single processor; their runtime is subsumed by the claimed $O((\log^{O(1)}n) T_{\tt maxflow} )$ bound.

We saw that $d=O(\log^{O(1)}n)$.
From Lemma~\ref{lemma:option2} we can now conclude that the algorithm's expected parallel runtime is
$T\cdot O(KT_{\tt maxflow}+n(Kd+K\log^2 n + \log N)))=O((\log^{O(1)}n) T_{\tt maxflow} )$
on $N=\Theta(n^\varepsilon/K)$ processors, assuming that option 2 is used.
Note that the output of the overall algorithm is correct w.h.p.\ since vectors 
$\tilde v_1,\ldots,\tilde v_{n_{\tt orig}}$ approximate original vectors only w.h.p.\ (see Theorem~\ref{th:GramApproximation}).
\end{proof}


\section{Proof of Theorem~\ref{th:vnk}}\label{sec:proof:vnk}

We will need the following definition.
\begin{definition}
A {\em $k$-hop extended matching cover}
is function $\calM$ that maps vectors $u,u'\in\mathbb R^d$ to a distribution over generalized matchings.
It is {\em skew-symmetric}
if sampling $M\sim\calM(-u',-u)$
and then reversing all paths in $M$ produces the same distribution as sampling $M\sim\calM(u,u')$.
We define ${\tt size}_\omega(\calM)=\tfrac 1n\mathbb E_{M\sim \calM(\calN_\omega)}[|M|]$
where $\calM(\calN_\omega)$ denotes the distribution $(u,u')\sim \calN_\omega,M\sim\calM(u,u')$.
We say that $\calM$ is {\em $\sigma$-stretched} (resp. {\em $L$-long}) 
if $\min\{\langle y-x,u\rangle,\langle y-x,u'\rangle\}\ge \sigma$
(resp. $||y-x||^2\le L$)
for any $u,u'\in\mathbb R^d$ and any nonviolating $(x,\ldots,y)\in {\tt supp}(\calM(u,u'))$.
We write $\calM\subseteq\tilde\calM$ for (coupled) extended matching covers $\calM,\tilde \calM$
if $M\subseteq \tilde M$ for any $u,u'$ and $M\sim\calM(u,u')$, $\tilde M\sim\tilde\calM(u,u')$.
\end{definition}

For an extended matching cover $\calM$ we can define matching cover $\calM_\omega$ as follows:
given vector $u$, sample $u'\sim_\omega u$, $M\sim\calM(u,u')$ and let $M$ be the output of $\calM(u)$.
Clearly, if $\calM$ is $\sigma$-stretched then $\calM_\omega$ is also $\sigma$-stretched,
and ${\tt size}_\omega(\calM)={\tt size}(\calM_\omega)$.

We now proceed with the proof of Theorem~\ref{th:vnk}.
Define $\calK=\{2^0,2^1,2^2,\ldots,K\}$, and let $\dot \calK=\calK-\{K\}$.
(Recall that $K$ has the form $K=2^r$ for some integer $r$).
Let us choose positive numbers $\beta_k$ for $k\in\dot\calK$ (to be specified later),
and define numbers $\{\sigma_k\}_{k\in\calK}$ via the following
recursion:
\begin{subequations}\label{eq:MFSDA}
\begin{eqnarray}
\sigma_1&=&\sigma \\
\sigma_{2k}&=&(1+\omega^k)\sigma_k - \beta_k \qquad\forall k\in\dot\calK \label{eq:HLAJGA}
\end{eqnarray}
\end{subequations}

For a matching cover $\calM\subseteq{\tt Matching}$
and integer $k\in\calK$ we define extended matching cover $\calM^k$
as follows:
\begin{itemize}
\item given vectors $(u_1,u_k)$, sample $(u_1,u_2,\ldots,u_k)\sim\calN_\omega^k|(u_1,u_k)$,
compute $M=\calM(u_1)\circ\ldots\circ\calM(u_k)$ and let ${\tt Truncate}_{\sigma_k}(M;u_1)\cap {\tt Truncate}_{\sigma_k}(M;u_k)$
be the output of $\calM^k(u_1,u_k)$.
\end{itemize}
By definition, $\calM^k$ is $\sigma_k$-stretched. It can be seen that $(\calM^K)_{\omega^{K-1}}\subseteq {\tt SamplePaths}^K$
for any $\calM\subseteq{\tt Matching}$.
 Our goal will be to analyze
${\tt size}_{\omega^{K-1}}(\calM^K)={\tt size}((\calM^K)_{\omega^{k-1}})$ for $\calM={\tt Matching}$.

\begin{theorem}\label{th:induction}
Choose an increasing sequence of numbers $\{\delta_k\}_{k\in \calK}$ in $(0,1)$,
and define sequence $\{\lambda_k\}_{k\in\calK}$ via the following recursions:
\begin{subequations}\label{eq:GNAKGA}
\begin{eqnarray}
\lambda_1&\!\!\!\!\!=\!\!\!\!\!&\delta_1 \\
\lambda_{2k}&\!\!\!\!\!=\!\!\!\!\!&\theta_{2k}\lambda_k^2,\quad \theta_{2k}=\tfrac 12 \left(\tfrac 12(\delta_{2k}-\delta_k)\right)^{2/(1-\omega)} \quad\forall k\in\dot\calK \label{eq:MKASASF}
\end{eqnarray}
\end{subequations}
Suppose that
\begin{equation}
\exp\left(-\frac{\beta_k^2}{2(k+1)(1-\omega^k)\Delta}\right) \le \tfrac 12 \lambda_{2k} \qquad\forall k\in\dot\calK \label{eq:HDHGDHDGAHKA}
\end{equation}
Then for any $k\in\calK$ and any skew-symmetric matching cover $\calM\subseteq{\tt Matching}$ with ${\tt size}(\calM)\ge \delta_k$
there holds ${\tt size}_{\omega^{k-1}}(\calM^k)\ge\lambda_k$.

\end{theorem}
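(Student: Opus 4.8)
The plan is to argue by induction on $k\in\calK=\{2^0,2^1,\ldots,K\}$, with base case $k=1$ immediate: there $\calM^1=\calM_{[\sigma_1]}$ essentially coincides with ${\tt Matching}$ truncated at stretch $\sigma=\sigma_1$, so ${\tt size}_{\omega^0}(\calM^1)={\tt size}(\calM)\ge\delta_1=\lambda_1$ because ${\tt Matching}$ is already $\sigma$-stretched, hence truncation removes nothing from nonviolating edges. For the inductive step, assume the statement holds for $k$ and prove it for $2k$, given a skew-symmetric $\calM\subseteq{\tt Matching}$ with ${\tt size}(\calM)\ge\delta_{2k}$.

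The heart of the argument is the ``knock-out'' technique advertised in Section~\ref{sec:chaining}. For each node $x\in V$ let $\mu_x(u)=\mathbb E_{M\sim(\calM^k)_{\omega^{k-1}}(u)}[\text{out-degree of }x\text{ in }M]$, and similarly an in-degree version. First I would show $\mu_x$ is ``sufficiently spread'': fix $x$, let $\calA_x\subseteq\mathbb R^d$ be the set of $u$ on which $\mu_x(u)$ is among the largest, of measure exactly $\tfrac12(\delta_{2k}-\delta_k)$, and form the smaller cover $\calM'\subseteq\calM$ by deleting from $\calM(u)$ every edge out of $x$ whenever $u\in\calA_x$ (doing this simultaneously for all $x$, and symmetrizing so $\calM'$ stays skew-symmetric). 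Since the total measure removed is at most $\tfrac12(\delta_{2k}-\delta_k)$ worth of size, ${\tt size}(\calM')\ge\delta_{2k}-\tfrac12(\delta_{2k}-\delta_k)\ge\delta_k$; this is where the \emph{extended} matching cover matters — deleting the outgoing edge of $x$ from $\calM(u_1)$ really does kill $x$'s outgoing edge in $\calM'^{\,k}(u_1,u_k)$, because the composition starts from $\calM(u_1)$. Applying the induction hypothesis to $\calM'$ gives ${\tt size}_{\omega^{k-1}}(\calM'^{\,k})\ge\lambda_k$, and since $\calM'^{\,k}\subseteq\calM^k$ and the deleted edges contributed nothing on $u\in\calA_x$, I can conclude $\mathbb E_u[\mu_x(u)\mathbf 1\{u\notin\calA_x\}]$ is still large; because $\calA_x$ captures the \emph{top} values of $\mu_x$, this forces $\mu_x$ to be reasonably large on a set of $u$ of measure $\Theta(\delta_{2k}-\delta_k)$. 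Summing over $x$ yields: for measure $\ge\tfrac12(\delta_{2k}-\delta_k)$ of vectors $u$, the expected matching size $\mathbb E_{M\sim(\calM^k)_{\omega^{k-1}}(u)}[|M|]\ge c\lambda_k n$ for an explicit constant, i.e.\ the event $\calB=\{u:\ \mathbb E[|M|\mid u]\ge c\lambda_k n\}$ has measure $\ge\tfrac12(\delta_{2k}-\delta_k)$; by skew-symmetry the corresponding ``reversed'' event has the same measure.

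Next I would chain $(\calM^k)_{\omega^{k-1}}$ with itself. Sampling $(u_1,u_{2k})\sim\calN_{\omega^k}$ (the correlation that results from $2k$ steps of $\omega=1-1/K$... more precisely from composing two blocks each with internal correlation pattern giving end-to-end correlation $\omega^k$), Theorem~\ref{th:isoperimetric} with parameter $\omega^k$ gives ${\tt Pr}[(u_1,u_{2k})\in\calB\times\calB^{\mathrm{rev}}]\ge\bigl(\tfrac12(\delta_{2k}-\delta_k)\bigr)^{2/(1-\omega^k)}$; conditioning on this event, the expected number of $(x,\ldots,y,\ldots,z)$ paths in $\calM^k(u_1,u_k)\circ\calM^k(u_k,u_{2k})$ that survive is $\Omega(\lambda_k^2 n)$, which matches $\lambda_{2k}=\theta_{2k}\lambda_k^2$ up to the constant baked into $\theta_{2k}$. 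The one remaining loss is the stretch truncation: composing a $\sigma_k$-stretched cover along $u_1$ with one along $u_{2k}$ gives a path $(x,\ldots,z)$ with $\langle y-x,u_1\rangle\ge\sigma_k$ and $\langle z-y,u_{2k}\rangle\ge\sigma_k$, but we need $\langle z-x,u_1\rangle\ge\sigma_{2k}$ and $\langle z-x,u_{2k}\rangle\ge\sigma_{2k}$. Here $\langle z-x,u_1\rangle=\langle y-x,u_1\rangle+\langle z-y,u_1\rangle$; the first term is $\ge\sigma_k$, and $\langle z-y,u_1\rangle$ versus $\langle z-y,u_{2k}\rangle\ge\sigma_k$ differ by a Gaussian-type fluctuation controlled because $u_1$ restricted to the relevant block is an $\omega^k$-correlated copy along that span and $||z-y||^2\le(k+1)\Delta$ (Lemma~\ref{lemma:regular}(a)). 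A tail bound gives that $\langle z-x,u_1\rangle\ge(1+\omega^k)\sigma_k-\beta_k=\sigma_{2k}$ fails with probability at most $\exp\!\bigl(-\beta_k^2/(2(k+1)(1-\omega^k)\Delta)\bigr)\le\tfrac12\lambda_{2k}$ by hypothesis~\eqref{eq:HDHGDHDGAHKA}, and symmetrically for $u_{2k}$; a union bound removes at most a $\lambda_{2k}$-fraction of the expected surviving mass, which is absorbed by the factor $\tfrac12$ already present in $\theta_{2k}$. Putting these together gives ${\tt size}_{\omega^{2k-1}}(\calM^{2k})\ge\lambda_{2k}$, completing the induction.

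The main obstacle I anticipate is the spreading step — turning ``$\mathbb E_u[\mu_x(u)]$ is large after a small knock-out'' into ``$\mu_x(u)$ is pointwise large on a set of measure $\Theta(\delta_{2k}-\delta_k)$'' uniformly in $x$, and doing the bookkeeping so that the per-node deletions sum to a \emph{global} size loss of only $\tfrac12(\delta_{2k}-\delta_k)$ rather than $n$ times that. The extended-matching-cover formalism is exactly what makes the ``deleting $x$'s outgoing edge kills $x$'s outgoing edge downstream'' implication clean; getting that bookkeeping right (and keeping skew-symmetry throughout) is the delicate part. The stretch-fluctuation tail bound and the isoperimetric chaining are comparatively routine once the correlation parameter $\omega^k$ is correctly identified for the two-block composition.
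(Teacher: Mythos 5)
Your overall architecture matches the paper's proof: induction over $k\in\calK$, the knock-out of the top-measure set $\calA_x$ of size $\tfrac12(\delta_{2k}-\delta_k)$ per node, the application of the induction hypothesis to the reduced cover, the use of extended matching covers so that deleting $x$'s outgoing edge from $\calM(u_1)$ really kills $x$'s outgoing path in $\calM^k(u_1,u_k)$, and the Gaussian tail bound with $\beta_k$ that drives the stretch recursion $\sigma_{2k}=(1+\omega^k)\sigma_k-\beta_k$. (Two minor bookkeeping points: the symmetrization costs $2\cdot\tfrac12(\delta_{2k}-\delta_k)$, not $\tfrac12(\delta_{2k}-\delta_k)$, which is why the paper lands exactly at $\delta_k$; and the outer endpoints of the $2k$-chain are $\omega^{2k-1}$-correlated, not $\omega^k$-correlated.)

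However, there is a genuine gap in the chaining step. You correctly derive the per-node statement that $\mu_x(u)$ (equivalently the paper's $\nu_x(u)$) is at least $\gamma_x$ on a set of measure $\tfrac12(\delta_{2k}-\delta_k)$, but you then sum over $x$ and apply Theorem~\ref{th:isoperimetric} to a single \emph{global} event $\calB=\{u:\mathbb E[|M|\mid u]\ge c\lambda_k n\}$ at the \emph{outer} endpoints. Knowing that, with probability $\ge\delta^{2/(1-\omega^k)}$, both blocks individually contain $\Omega(\lambda_k n)$ paths in expectation does \emph{not} imply that their composition contains $\Omega(\lambda_k^2 n)$ paths: the composition $M_1\circ M_2$ is nonempty only where the endpoints of $M_1$'s paths coincide with the start points of $M_2$'s paths, and nothing in the global event forces this (e.g.\ $M_1$ could terminate only on nodes where $M_2$ never starts, giving $M_1\circ M_2=\varnothing$). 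The paper avoids this by applying the isoperimetric inequality \emph{separately for each middle node $x$}, to the pair of sets $\calA_x\times\calB_x$ (with $\calB_x=-\calA_x$, using skew-symmetry) evaluated at the two \emph{inner} chain vectors $(u_k,u_{k+1})$. These inner vectors are $\omega$-correlated — which is also why the exponent in the recursion is $2/(1-\omega)$ rather than your $2/(1-\omega^k)$ — and, crucially, conditioning on them makes the two blocks independent, so that $x$ simultaneously has an incoming path in $M_1$ and an outgoing path in $M_2$ with probability $\ge\delta^{2/(1-\omega)}\gamma_x^2$. Summing $\tau_x\ge\delta^{2/(1-\omega)}\lambda_x^2-2\varepsilon$ over $x$ and minimizing $\sum_x\lambda_x^2$ subject to $\tfrac1n\sum_x\lambda_x\ge\lambda_k$ then yields $\lambda_{2k}$. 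To repair your argument, keep the per-node events all the way through the isoperimetric step rather than aggregating them into $\calB$ beforehand.
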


By plugging appropriate sequences $\{\delta_k\}$ and $\{\beta_k\}$
we can derive Theorem~\ref{th:vnk} as follows.

\begin{lemma}\label{lemma:GAGFHASF}
Define $\delta_k=(1-\tfrac 1{2k}) \delta$
and 
$
\beta_k = 2\phi\sigma k\sqrt{1-\omega^k}
$
where
$
 \phi = \tfrac {1}{\sigma} \left( K\Delta \cdot 5 \ln \tfrac {16}{ \delta} \right)^{1/2} 
$. Then~\eqref{eq:HDHGDHDGAHKA} holds, and
\begin{eqnarray}
\lambda_k & \ge & \left(\frac {16}{ \delta}\right)^{-2Kk} 
\end{eqnarray}
Furthermore, if $\phi\le \tfrac 18$ (or equivalently $K\Delta<\frac{\sigma^2}{320\ln \tfrac {16}{ \delta}}$ ) then
\begin{eqnarray}
\sigma_K & \ge & \tfrac 1 {16}\, \sigma \, K
\end{eqnarray}

\end{lemma}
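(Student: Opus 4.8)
The plan is to verify the three claims in turn by feeding the specified choices $\delta_k=(1-\tfrac1{2k})\delta$ and $\beta_k=2\phi\sigma k\sqrt{1-\omega^k}$ into the recursions of Theorem~\ref{th:induction}. First I would compute $\theta_{2k}$: since $\delta_{2k}-\delta_k=\delta\big(\tfrac1{2k}-\tfrac1{4k}\big)=\tfrac{\delta}{4k}$, we get $\theta_{2k}=\tfrac12\big(\tfrac{\delta}{8k}\big)^{2/(1-\omega)}$. Recall $\omega=1-1/K$, so $2/(1-\omega)=2K$, giving $\theta_{2k}=\tfrac12\big(\tfrac{\delta}{8k}\big)^{2K}$. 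Now unfold $\lambda_{2k}=\theta_{2k}\lambda_k^2$ along the dyadic chain $1,2,4,\dots,k$; this is a standard "product telescoping" argument. Writing $k=2^j$ and taking logs, $\log(1/\lambda_k)$ satisfies an affine recursion $\log(1/\lambda_{2k})=2\log(1/\lambda_k)+\log(1/\theta_{2k})$, which unrolls to $\log(1/\lambda_k)=\sum_{i} 2^{\,(\text{depth})}\log(1/\theta_{2^i})$. Bounding each $\log(1/\theta_{2^i})\le 2K\ln(8k/\delta)+\ln 2\le 2K\ln(16/\delta)$ (using $k\le K$ and absorbing constants, with $8k\le 16K$ dealt with by noting we want $16/\delta$ raised to a power proportional to $Kk$) and summing the geometric weights $2^0+2^1+\dots\le 2k$ yields $\log(1/\lambda_k)\le (\text{const})\cdot Kk\ln(16/\delta)$, i.e.\ $\lambda_k\ge(16/\delta)^{-2Kk}$ after checking the constant is at most $2$. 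This is the bookkeeping-heavy step, but it is routine.

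Next I would check the hypothesis~\eqref{eq:HDHGDHDGAHKA}, i.e.\ $\exp\!\big(-\tfrac{\beta_k^2}{2(k+1)(1-\omega^k)\Delta}\big)\le\tfrac12\lambda_{2k}$. Plugging $\beta_k^2=4\phi^2\sigma^2 k^2(1-\omega^k)$ cancels the $(1-\omega^k)$ factor, leaving the exponent equal to $-\tfrac{4\phi^2\sigma^2 k^2}{2(k+1)\Delta}\le -\tfrac{2\phi^2\sigma^2 k}{\Delta}$ (using $k^2/(k+1)\ge k/2$... actually $k^2/(k+1)\ge k/2$ for $k\ge1$, so the exponent is $\le -\tfrac{\phi^2\sigma^2 k}{\Delta}$). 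Substituting $\phi^2=\tfrac1{\sigma^2}K\Delta\cdot 5\ln\tfrac{16}{\delta}$ makes this $\le -5k K\ln\tfrac{16}{\delta}$, so the left side is $\le(16/\delta)^{-5kK}$. Since $\tfrac12\lambda_{2k}\ge\tfrac12(16/\delta)^{-4K k}\ge(16/\delta)^{-5Kk}$ for $\delta$ small enough (the $\tfrac12$ is absorbed since $\ln\tfrac{16}{\delta}\ge\ln 16>1$), the inequality holds. The main thing to watch here is getting the constant $5$ to genuinely dominate the $4$ plus the slack from the $\tfrac12$ and from $k^2/(k+1)$ versus $k$; I expect this is exactly why the constant $5$ (rather than, say, $3$) was chosen.

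Finally, for $\sigma_K\ge\tfrac1{16}\sigma K$ I would unfold the recursion~\eqref{eq:HLAJGA}, $\sigma_{2k}=(1+\omega^k)\sigma_k-\beta_k$. Along the dyadic chain this gives $\sigma_K=\big(\prod_{j}(1+\omega^{2^j})\big)\sigma - \sum_j\big(\prod_{\text{later}}(1+\omega^{2^i})\big)\beta_{2^j}$. For the positive part, $1+\omega^k\ge 1+\omega^K$... more carefully, each factor $(1+\omega^{2^j})$ with $2^j\le K$ and $\omega=1-1/K$ satisfies $\omega^{2^j}\ge\omega^K\ge e^{-1}\ge 1/e$ roughly, actually $\omega^K=(1-1/K)^K\to1/e$, so $1+\omega^{2^j}\ge 1+\omega^K\ge 1$; the product of $\log_2 K$ such factors each $\ge(1+\text{something})$ — but the cleaner route is: the homogeneous solution to $s_{2k}=(1+\omega^k)s_k$ with $s_1=\sigma$ grows like $\sigma\prod(1+\omega^{2^j})$, and since each factor exceeds $1$, $\sigma_K\ge\sigma$ minus the error terms; more precisely one shows the product is $\Theta(K)$ because there are $\log_2 K$ factors is not enough — instead note that if all $\beta_k=0$ then by an easy induction $\sigma_k\ge\tfrac{k+1}{2}\sigma\cdot$(correction), the cleanest is to prove $\sigma_k\ge c_0 k\,\sigma$ for the homogeneous part. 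Then the error term: each $\beta_{2^j}=2\phi\sigma 2^j\sqrt{1-\omega^{2^j}}\le 2\phi\sigma\cdot 2^j$, and the accumulated error, amplified by at most the total product $O(K)$... this needs care. The hard part of this step is controlling the geometric amplification of the $\beta_k$ errors through the later $(1+\omega^i)$ factors and showing that when $\phi\le\tfrac18$ the total error is at most, say, $\tfrac12$ of the homogeneous growth, leaving $\sigma_K\ge\tfrac1{16}\sigma K$. I would set this up as a single induction on $k\in\calK$ proving simultaneously a lower bound $\sigma_k\ge a k$ and an upper bound $\sigma_k\le bk$ (to control the amplification), with $a=\tfrac1{16}\sigma$ and $b=O(\sigma)$, choosing constants so that the inductive step $\sigma_{2k}=(1+\omega^k)\sigma_k-\beta_k$ preserves both, using $\phi\le\tfrac18$ precisely to absorb $\beta_k$ into the gap. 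I expect this $\sigma_K$ bound — specifically the two-sided induction needed to tame the error amplification — to be the main obstacle; the $\lambda_k$ and~\eqref{eq:HDHGDHDGAHKA} parts are mechanical once the dyadic telescoping is set up.
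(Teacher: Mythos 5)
Your treatment of \eqref{eq:HDHGDHDGAHKA} is exactly the paper's argument (plug in $\beta_k^2=4\phi^2\sigma^2k^2(1-\omega^k)$, cancel $(1-\omega^k)$, use $k^2/(k+1)\ge k/2$, and compare $(16/\delta)^{-5Kk}$ with $\tfrac12(16/\delta)^{-4Kk}$), and your dyadic unrolling of $\lambda_{2k}=\theta_{2k}\lambda_k^2$ is the same route as well. One caution on the latter: your uniform bound $\log(1/\theta_{2^i})\le 2K\ln(16/\delta)$ is false for large $i$ (since $\theta_{2k}=\tfrac12(\delta/(8k))^{2K}$ depends on $k$); you must keep the $i$-dependent term and use the weighted sum $\sum_{i=1}^{r}2^{r-i}\,i<2\cdot 2^{r}$, which is precisely how the paper absorbs the extra factor into replacing $\delta/4$ by $\delta/16$. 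This is the bookkeeping you gesture at, and it does close.

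The genuine gap is in the $\sigma_K$ bound. The specific induction you settle on --- preserving $\sigma_k\ge a\,k\,\sigma$ for a \emph{fixed} constant $a$ --- cannot work: the step would require $(1+\omega^k)\,a\,k\,\sigma-\beta_k\ge 2a\,k\,\sigma$, i.e.\ $a(\omega^k-1)k\sigma\ge\beta_k>0$, which is impossible since $\omega^k<1$. The per-step constant necessarily decays, and the whole point is to show the total decay is bounded. The paper's device is to normalize $\pi_k=\sigma_k/(k\sigma)$, turning the recursion into $\pi_{2k}=\gamma_k\pi_k-\phi\sqrt{2(1-\gamma_k)}$ with $\gamma_k=(1+\omega^k)/2<1$, and then to prove two quantitative facts you do not identify: (i) $\Gamma_{K/2}=\prod_{k\in\dot\calK}\gamma_k\ge\tfrac18$, which follows from $1-\omega^k\le k/(K-1)$ together with $\sum_{k\in\dot\calK}k=K-1$ (so the losses sum to $O(1)$ rather than compounding); and (ii) the accumulated error $\phi\sum_{i}\sqrt{2(1-\gamma_i)}/\Gamma_i\le 20\phi$, because $\sqrt{1-\gamma_i}\le\sqrt{i/(K-1)}$ makes the sum geometric in $\sqrt{2^j}$. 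Your proposed companion upper bound $\sigma_k\le bk$ is a red herring --- the errors enter additively and are amplified only by $\prod(1+\omega^i)=O(K)$, which the normalization already accounts for. (Separately, note that the paper's own proof of this part only establishes the claim under $\phi\le\tfrac1{40}$, not the $\phi\le\tfrac18$ stated in the lemma; any write-up should either tighten the constants or weaken the hypothesis accordingly.)
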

\begin{proof}
Recall that $\omega=1-1/K$.
Let us set $\delta_k=(1-\tfrac 1{2k}) \delta$, then $\delta_1=\tfrac 12  \delta$
and $\theta_j=\tfrac 12 \left(\tfrac {\delta}{4j}\right)^{2K}=\tfrac {C}{j^{2K}}$ for $j\ge 2$
where $C=\tfrac 12 \left(\tfrac {\delta}{4}\right)^{2K}$.
By expanding expressions~(12) we obtain
\begin{eqnarray*}
\lambda_k&=& \delta_1^k\prod_{i\in\calK\cap[k/2]} \theta^i_{k/i} \;\;=\;\;
\left(\frac{\delta}2\right)^k
\cdot\left( \prod_{i\in \calK\cap[k/2]}C^{i} \right)
\cdot
\left( \prod_{i\in \calK\cap[k/2]}(i/k)^{i} \right)^{2K} 
\\ &\stackrel{\mbox{\tiny (1)}}{>}& 
\left(\frac{\delta}2\right)^k
\cdot C^{k-1}
\cdot
\left( 2^{-2k} \right)^{2K} 
\;\;\stackrel{\mbox{\tiny (2)}}{>}\;\; (2C)^k\cdot 2^{-4Kk}
\;\;=\;\;\left(\frac {16}{ \delta}\right)^{-2Kk}
\qquad\quad\forall k\in\calK-\{1\}
\end{eqnarray*}
where in (2) we used the fact that $\left(\tfrac { \delta}2\right)^k>2^k\cdot C$, and in (1) we used the fact that for $k=2^r\in\calK$ we have
$$
\log_2\prod_{i\in \calK\cap[k/2]}(k/i)^{i}
=\sum_{i\in \calK\cap[k/2]} i \log (k/i)
=\sum_{j\in \calK\cap[2,k]} (k/j) \log j
=\sum_{i=1}^{r} 2^{r-i} \cdot i < 2^r\cdot 2=2k
$$
since $\sum_{i=1}^{\infty}i2^{-i}=\tfrac 12 f(\tfrac 12)=2$
where $f(z)=\tfrac 1 {(1-z)^2 }=(\tfrac 1{1-z})'=\sum_{i=1}^\infty i z^{i-1}$ for $|z|<1$.

Next, we show~(13).
We have 
$$
\beta_k^2
=K\Delta \cdot 20 \ln \tfrac {16}{ \delta} \cdot k^2(1-\omega^k)
\ge K\Delta \cdot 10 \ln \tfrac {16}{ \delta} \cdot k(k+1)(1-\omega^k)
$$
and so
$$
\exp\left(-\frac{\beta_k^2}{2(k+1)(1-\omega^k)\Delta}\right) 
\le \exp\left(-Kk \cdot 5 \ln \tfrac {16}{ \delta}\right) 
=\left(\tfrac {16}{ \delta} \right)^{-5Kk}
\le \tfrac 12\left(\tfrac {16}{ \delta} \right)^{-4Kk}
\le \tfrac 12 \lambda_{2k}
$$

It remains to show that $\sigma_K\ge \tfrac 1{16} \sigma K$ if $\phi\le \tfrac{1}{40}$.
If $K=1$ then the claim is trivial. Suppose that $K\ge 2$.
Denote $\pi_k=\sigma_k/(k \sigma)$ and $\gamma_k=(1+\omega^k)/2$.
Eq.~(11b) can be written as 
$
2k\sigma\pi_{2k}=2\gamma_k\cdot k\sigma\pi_k-2\phi\sigma k\sqrt{2(1-\gamma_k)}
$, therefore recursions~(11) become
\begin{subequations}
\begin{eqnarray}
\pi_1&=&1 \\
\pi_{2k}&=&\gamma_k\pi_k-\phi \sqrt{2(1-\gamma_k)}\qquad\quad\forall k\in\dot\calK
\end{eqnarray}
\end{subequations}
Denoting  $\Gamma_k=\prod_{i\in\calK\cap[k]}\gamma_i$ and expanding these recursions gives
\begin{equation}\label{eq:GALDSGA}
\pi_{2k}=\Gamma_{k} - \phi\sum_{i\in\calK\cap [k]}\tfrac {\Gamma_{k}}{\Gamma_i}\sqrt{2(1-\gamma_i)}\qquad\quad\forall k\in\dot\calK
\end{equation}

Recall that 
$
e^{-x}>1-x>e^{-x/(1-x)}
$ for $x<1$. From this we get
$$
\omega^k=(1-1/K)^k>e^{-k/(K-1)}>1-\tfrac{k}{K-1}
\qquad\Rightarrow\qquad
\gamma_k>1-\tfrac{k}{2(K-1)}
$$
We claim that $\Gamma_{K/2}\ge \tfrac 18$.~\footnote{Numerical evaluation suggests a better bound: $\Gamma_{K/2}>0.432$.}
Indeed, if $K< 32$ then this can be checked numerically.
Suppose that $K\ge 32$.
We have $\ln(1-x)>-\tfrac{1}{1-x}\cdot x$ for $x\in(0,1)$ and thus
$$
\ln\Gamma_{K/2}
>  \sum_{k\in\dot\calK}\ln \left(1-\tfrac{k}{K-1}\right)
> -\tfrac{1}{1-\tfrac{16}{31}}\sum_{k\in\dot\calK}\tfrac{k}{K-1}
=
 -\tfrac{31}{15}
\qquad\Rightarrow\qquad \Gamma_{K/2}\ge e^{-31/15}>\tfrac 18
$$
Clearly, we have $\Gamma_k\ge \Gamma_{K/2}>\tfrac 18$ for all $k\in\dot\calK$. Therefore,
\begin{align*}
\sum_{i\in \calK\cap[K/2]}\tfrac {\sqrt{2(1-\gamma_i)}}{\Gamma_i} 
&\le 8 \sum_{i\in \calK\cap[K/2]} \sqrt{\tfrac{i}{K-1}}
=\tfrac {8}{\sqrt{K-1}} \cdot (\sqrt {2^0} + \sqrt {2^1} + \sqrt {2^2} + \ldots + \sqrt{K/2}) \\
&=\tfrac {8}{\sqrt{K-1}} \cdot \tfrac {\sqrt{K}-1}{\sqrt 2-1}
<\tfrac {8}{\sqrt 2-1}<20
\end{align*}
Using~\eqref{eq:GALDSGA}, we conclude that if $\phi<\tfrac 1{40}$ then $\pi_K\ge \Gamma_{K/2}(1-20\phi)\ge \tfrac 18\cdot \tfrac 12=\tfrac 1{16}$.
\end{proof}

\subsection{Proof of Theorem~\ref{th:induction}}\label{sec:induction}

To prove the theorem, 
we use induction on $k\in\calK$. For $k=1$ the claim is trivial. 
In the remainder of this section we assume that the claim holds for $k\in\dot\calK$,
and prove it for $2k$ and skew-symmetric matching cover $\calM\subseteq{\tt Matching}$ with ${\tt size}(\calM)\ge \delta_{2k} $.
Clearly, the skew-symmetry of $\calM$ implies that $\calM^k$ is also skew-symmetric.
We introduce the following notation; letter $x$ below always denotes a node in $S$.

\begin{itemize}
\item Let $\mu_{x}(u,u')$ be the expected out-degree  of $x$  in $M\sim\calM^{k}(u,u')$.
\item Let  $\nu_{x}(u)=\mathbb E_{u'\sim_\rho u}[\mu_{x}(u,u')]$ where $\rho:=\omega^{k-1}$.
\item Let $\calA_x$ be a subset of $\mathbb R^d$ of Gaussian measure $\delta:=\tfrac 12(\delta_{2k}-\delta_k)$
containing vectors $u$ with the largest value of $\nu_{x}(u)$, i.e.\
such that $\gamma_x:=\inf\{\nu_{x}(u)\::u\in \calA_x\}\ge \sup\{\nu_{x}(u)\::u\in \mathbb R^d-\calA_x\}$.

\item Let $\dot\calM\subseteq\calM$ be the matching cover obtained from $\calM$ by removing
edges $(x,y)$ from $\calM(u)$ and edges $(y,x)$ from $\calM(-u)$
for each $u\in \calA_x$.
Clearly, $\dot\calM$ is a skew-symmetric matching cover
with ${\tt size}(\dot\calM)\ge \delta_{2k}-2\delta =\delta_k$.
\item Let $\dot\mu_x(u,u')$ be the expected out-degree of $x$  in $M\sim\dot\calM^{k}(u,u')$.
\item Let  $\dot\nu_x(u)=\mathbb E_{u'\sim_\rho u}[\dot\mu_{x}(u,u')]$, and let $\lambda_x=||\dot\nu||_1$.
Note that $\tfrac 1n\sum_{x\in V}\lambda_x = {\tt size}_\rho(\dot\calM^k)\ge \lambda_k $
where the last inequality is by the induction hypothesis.


\end{itemize}

\begin{lemma}
$\gamma_x\ge \lambda_x$.
\end{lemma}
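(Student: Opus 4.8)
The plan is to compare the quantity $\gamma_x = \inf\{\nu_x(u) : u \in \calA_x\}$ against $\lambda_x = \|\dot\nu_x\|_1 = \mathbb E_u[\dot\nu_x(u)]$, using the fact that $\calA_x$ is precisely the "top-$\delta$" superlevel set of $\nu_x$ and that $\dot\calM$ differs from $\calM$ only by deleting $x$'s outgoing edges over exactly the set $\calA_x$. First I would observe that on the complement $\mathbb R^d - \calA_x$ the two matching covers $\calM$ and $\dot\calM$ agree as far as $x$'s outgoing edge in $\calM(u)$ is concerned, whereas for $u \in \calA_x$ the outgoing edge of $x$ has been removed. The key point is how this deletion propagates to $\dot\mu_x(u,u')$: since $\dot\calM^k(u_1,u_k)$ is built by sampling $(u_1,\ldots,u_k)\sim\calN_\omega^k$ and chaining $\dot\calM(u_1)\circ\cdots\circ\dot\calM(u_k)$ (then truncating), and the out-degree of $x$ in a chained matching is governed by whether $x$ has an outgoing edge in the \emph{first} factor $\dot\calM(u_1)$, removing $x$'s outgoing edges over $u_1 \in \calA_x$ forces $\dot\mu_x(u_1, u_k) = 0$ whenever $u_1 \in \calA_x$. (This is exactly the motivation, flagged in the paper, for working with \emph{extended} matching covers conditioned on the pair $(u_1,u_k)$ — so that deleting an edge from the first factor is visible.)

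Next I would unwind the definition $\dot\nu_x(u) = \mathbb E_{u'\sim_\rho u}[\dot\mu_x(u,u')]$. By the previous paragraph, $\dot\mu_x(u,u') = 0$ for all $u \in \calA_x$, hence $\dot\nu_x(u) = 0$ for $u \in \calA_x$; and for $u \notin \calA_x$ we have $\dot\mu_x(u,u') \le \mu_x(u,u')$ (removing edges can only decrease out-degrees), so $\dot\nu_x(u) \le \nu_x(u)$. Therefore
\[
\lambda_x = \mathbb E_u[\dot\nu_x(u)] = \mathbb E_u[\dot\nu_x(u)\,\mathbf{1}_{u\notin\calA_x}] \le \mathbb E_u[\nu_x(u)\,\mathbf{1}_{u\notin\calA_x}] \le \sup_{u\notin\calA_x}\nu_x(u),
\]
where the last step uses that a conditional average is bounded by the conditional supremum. (One could alternatively keep the probability factor ${\tt Pr}[u\notin\calA_x] = 1-\delta < 1$ and get a slightly stronger bound, but the stated inequality only needs the supremum.) Finally, by the defining property of $\calA_x$ as the top-$\delta$ superlevel set, $\sup\{\nu_x(u): u\in\mathbb R^d-\calA_x\} \le \inf\{\nu_x(u): u\in\calA_x\} = \gamma_x$. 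Chaining these gives $\lambda_x \le \gamma_x$, as desired.

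The main obstacle I anticipate is the first step: making rigorous the claim that deleting $x$'s outgoing edge from the first factor $\dot\calM(u_1)$ indeed zeroes out $\dot\mu_x(u_1,u_k)$. This requires a careful look at how out-degrees behave under the chaining operation $\circ$ and under the truncation ${\tt Truncate}_{\sigma_k}(\cdot;u_1)\cap{\tt Truncate}_{\sigma_k}(\cdot;u_k)$ in the definition of $\calM^k$. Since the chained matching $M_1\circ\cdots\circ M_k = \{(p_1,x_1,p_2,x_2,\ldots): (x_{i-1},p_i)\in M_{i-1}, (p_i,x_i)\in M_i,\ldots\}$ has, for each node $v$, at most one path starting at $v$, and that path must begin with an edge of $M_1$ leaving $v$, the out-degree of $x$ in the chained (and then truncated) matching is at most the out-degree of $x$ in $M_1 = \dot\calM(u_1)$; and in $\dot\calM$ that out-degree is $0$ for $u_1\in\calA_x$ by construction. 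Once this is pinned down, the rest is the short monotonicity-and-averaging argument above. I would also double-check the coupling convention in "$\dot\calM \subseteq \calM$" so that $\dot\mu_x \le \mu_x$ holds pointwise in the coupled sample, which is immediate from the definition $M\subseteq\tilde M$ for coupled covers.
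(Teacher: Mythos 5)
Your proposal is correct and follows essentially the same route as the paper's own proof: show $\dot\nu_x\equiv 0$ on $\calA_x$ (which is exactly why the extended-cover construction conditions on the first vector $u_1$), show $\dot\nu_x\le\nu_x\le\gamma_x$ off $\calA_x$ via $\dot\calM^k\subseteq\calM^k$, and conclude $\lambda_x=\|\dot\nu_x\|_1\le\gamma_x$. The extra care you take with how edge deletion in the first factor propagates through chaining and truncation is exactly the point the paper flags, so nothing is missing.
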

\begin{proof}
By the definition of $\dot\calM$, we have $\dot\mu_x(u,u')=0$ for any $u\in\calA_x$ and $u'\in\mathbb R^d$.
This implies that  $\dot\nu_x(u)=0$ for any $u\in\calA_x$.
We have $\dot\calM\subseteq\calM$ and thus $\dot\calM^k\subseteq\calM^k$.
This implies that $\dot\nu_x(u)\le \nu_x(u)\le\gamma_x$ for any $u\in\mathbb R^d-\calA_x$.
We can now conclude that $\lambda_x=||\dot \nu_x||_1\le \gamma_x\cdot 1$.

\end{proof}

Let $\calH$ be an extended matching cover where $\calH(u'_1,u'_2)$ is defined
as follows: 
\begin{itemize}
\item[$(\ast)$]
sample $(u'_1,u_1,u_2,u'_2)\sim\calN_{\rho,\omega,\rho}|(u'_1,u'_2)$,
sample $M_1\sim\calM^k(u'_1,u_1)$, sample $M_2\sim\calM^k(u_2,u'_2)$, compute $M=M_1\circ M_2$,
output $M^{\tt good}={\tt Truncate}_{\sigma_{2k}}(M;u'_1)\cap {\tt Truncate}_{\sigma_{2k}}(M;u'_2)$.
\end{itemize}
It can be seen that  $\calH\subseteq\calM^{2k}$ under a natural coupling.
Consider the following process: sample $(u'_1,u'_2)\sim\calN_{\rho\omega\rho}$
and then run procedure $(\ast)$.
By definitions, we have ${\tt size}_{\rho\omega\rho}(\calH)=\tfrac 1n\mathbb E[|M^{\tt good}|]$.
Let  $M^{\tt good}_x$ be the set of paths in $M$ that  go through node $x$,
and denote $\tau_x=\mathbb E[|M^{\tt good}_x|]$. From these definitions we get that
 ${\tt size}_{\rho\omega\rho}(\calH)=\tfrac 1n\sum_{x\in V}\tau_x$.

\begin{lemma}
$\tau_x\ge\delta^{2/(1-\omega)}\lambda_x^2-2\varepsilon$ where $\varepsilon:=\exp\left(-\frac{\beta_k^2}{2(k+1)(1-\omega\rho)\Delta}\right)$.
\end{lemma}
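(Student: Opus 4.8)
The plan is to fix a node $x$ and lower-bound $\tau_x=\mathbb E[|M^{\tt good}_x|]$, the expected number of good chained paths through $x$. First I would write the process $(\ast)$ in a way that exposes the conditional structure: the path through $x$ in $M=M_1\circ M_2$ exists iff $x$ has an incoming path in $M_1$ (from $\calM^k(u'_1,u_1)$) and an outgoing path in $M_2$ (from $\calM^k(u_2,u'_2)$); moreover the chained path is \emph{good}, i.e.\ survives both truncations at $\sigma_{2k}$ with respect to $u'_1$ and $u'_2$. The key reduction is to move from $\calM^k$ to the reduced cover $\dot\calM^k$ (which by the previous lemma has $\dot\nu_x$ supported outside $\calA_x$ and total mass $\lambda_x$), so that the out-degree of $x$ in $M_2$ is controlled by $\dot\mu_x$ and in $M_1$ by the skew-symmetric mirror of $\dot\mu_x$. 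Using skew-symmetry of $\calM$ (hence of $\dot\calM^k$), the in-degree of $x$ in $\calM^k(u'_1,u_1)$ has the same law as $\dot\mu_x(-u_1,-u'_1)$, which after the substitution $u\mapsto -u$ becomes $\dot\mu_x$ evaluated at the mirrored pair.

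Second, I would isolate the event $E$ that \emph{both} $u'_1\in\calA_x^{\,c}$-type favorable direction and $u'_2$ are favorable, formalized via the sets $\calA_x$. Since $\calA_x$ has Gaussian measure $\delta=\tfrac12(\delta_{2k}-\delta_k)$ and we need the \emph{complement}-type favorable event for $\nu_x$, I'd actually apply Theorem~\ref{th:isoperimetric} to a suitable pair of sets of measure $\delta$ (the sets where $\dot\nu_x$, equivalently $\nu_x$ on the support of $\dot\calM^k$, is large): the correlated-Gaussian isoperimetric inequality gives that the probability both $u'_1$ and $u'_2$ land in these sets is at least $\delta^{2/(1-\omega)}$. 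Conditioned on such $u'_1,u'_2$, the expected contribution factorizes — because $M_1$ and $M_2$ are sampled independently given the chain $(u'_1,u_1,u_2,u'_2)$ — into (expected in-degree at $x$ from $M_1$) times (expected out-degree at $x$ from $M_2$), and each of these, after averaging over the internal $u_1$ resp.\ $u_2$ via $u'_i\sim_\rho u_i$, is at least $\lambda_x$ by the definitions of $\dot\nu_x$ and $\lambda_x=\|\dot\nu_x\|_1$ together with the choice of $\calA_x$ as the top-$\delta$-measure set. This yields a clean term $\delta^{2/(1-\omega)}\lambda_x^2$.

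Third, I would pay the truncation cost. Replacing $M$ by $M^{\tt good}$ discards a chained nonviolating path $(p,x,q)$ only if $\langle y-x', u'_1\rangle<\sigma_{2k}$ or $\langle y-x',u'_2\rangle<\sigma_{2k}$ for its endpoints $x',y$; but each surviving half already has stretch $\ge\sigma_k$ with respect to the intermediate $u_1$ (resp.\ $u_2$), and the recursion $\sigma_{2k}=(1+\omega^k)\sigma_k-\beta_k$ is exactly calibrated so that the deficit $\langle y-x',u'_i\rangle-\sigma_{2k}$ is a mean-zero-ish Gaussian-type quantity whose failure probability, after a Gaussian tail bound using $\|y-x'\|^2\le (k+1)\Delta$ (Lemma~\ref{lemma:regular}(a) for $k$-hop, applied to a $k$-hop subpath) and the correlation parameter $1-\omega\rho$, is at most $\varepsilon=\exp\!\left(-\tfrac{\beta_k^2}{2(k+1)(1-\omega\rho)\Delta}\right)$. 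A union bound over the two endpoints subtracts at most $2\varepsilon$ from the per-node expectation. Putting the three pieces together gives $\tau_x\ge\delta^{2/(1-\omega)}\lambda_x^2-2\varepsilon$, as claimed.

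The main obstacle I anticipate is the Gaussian tail estimate for the truncation step: one must carefully track \emph{which} vector the half-path is stretched against versus which vector the truncation uses, and argue that conditioned on the intermediate direction the quantity $\langle y-x',u'_i\rangle$ is (a shift of) a Gaussian with the right variance so that the shift $\beta_k$ lands in the exponent with denominator $2(k+1)(1-\omega\rho)\Delta$ — this is where the precise definitions of $\sigma_k$, $\omega$, $\rho=\omega^{k-1}$, and the $k$-hop bound $\|y-x'\|^2\le(k+1)\Delta$ all have to line up. The factorization step is conceptually the heart but is relatively routine given the independent sampling of $M_1,M_2$; the isoperimetric application is a direct invocation of Theorem~\ref{th:isoperimetric}; so the bookkeeping in the tail bound is the delicate part.
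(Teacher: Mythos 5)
Your three-step plan (decompose into a good event and two truncation failures; lower-bound the good event via the correlated-Gaussian isoperimetric inequality; bound each failure by a Gaussian tail) is exactly the paper's strategy, and your identification of the two delicate points — the factorization given the intermediate chain, and the calibration of $\sigma_{2k}=(1+\omega^k)\sigma_k-\beta_k$ against a Gaussian tail with the $(k+1)\Delta$ length bound — is on target. The paper formalizes the good event as $\{p_1\ne\perp\}\cap\{p_2\ne\perp\}$ for the incoming half $p_1$ (from $M_1$) and outgoing half $p_2$ (from $M_2$), and the bad events as the survivor's stretch against the far endpoint falling below $\omega\rho\sigma_k-\beta_k$, which is the same bookkeeping you describe.

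There is, however, a genuine confusion in your isoperimetric step that you would need to fix. You say you apply Theorem~\ref{th:isoperimetric} to $(u'_1,u'_2)$ and obtain $\delta^{2/(1-\omega)}$. But $(u'_1,u'_2)$ are $\omega^{2k-1}$-correlated (the chain is $u'_1\sim_\rho u_1\sim_\omega u_2\sim_\rho u'_2$ with $\rho=\omega^{k-1}$), so applying the theorem there gives the exponent $2/(1-\omega^{2k-1})$, not $2/(1-\omega)$. More importantly, $\calA_x$ is defined via $\nu_x(u)=\mathbb E_{u'\sim_\rho u}[\mu_x(u,u')]$, i.e.\ by conditioning on the \emph{first} argument of $\mu_x$ and integrating out the second. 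In the process $(\ast)$, the first arguments are $u_2$ (for the out-degree in $M_2$) and, after the skew-symmetry flip, $-u_1$ (for the in-degree in $M_1$). So the favorable events must be $u_1\in\calB_x:=-\calA_x$ and $u_2\in\calA_x$, and the isoperimetric theorem must be applied to the \emph{inner} pair $(u_1,u_2)\sim\calN_\omega$ — which is precisely what produces $\delta^{2/(1-\omega)}$ and lines up with the previous lemma $\gamma_x\ge\lambda_x=\|\dot\nu_x\|_1$. A related slip: you describe the sets as "where $\dot\nu_x$ is large", but $\dot\nu_x$ vanishes on $\calA_x$ by construction of $\dot\calM$; the sets are where $\nu_x$ (not $\dot\nu_x$) is large, and the role of $\dot\nu_x$ is only to certify, via its $L^1$ norm, that $\nu_x\ge\lambda_x$ on $\calA_x$. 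Once these two points are corrected, the rest of your outline goes through and matches the paper's proof.
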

\begin{proof}
Define random variable $p_1$ and $p_2$ as follows: 
\begin{itemize}
\item if $x$ has an incoming path in $M_1$ then let $p_1$ be this path, otherwise let $p_1=\perp$; 
\item if $x$ has an outgoing path in $M_1$ then let $p_2$ be this path, otherwise let $p_2=\perp$. 
\end{itemize}
Let $\calE^{\tt good}(p_1,p_2)=[p_1 \ne \perp \wedge\;p_2 \ne \perp]$.
Let $\calE^{\tt bad}_1(p_1,u'_2)$ be the event that $p_1=(y_1,\ldots,x)\ne\perp$, $p_1$ is nonviolating and $\langle x-y_1,u'_2\rangle<\sigma:=\omega\rho\sigma_k-\beta_k$.
Similarly, let 
$\calE^{\tt bad}_2(p_2,u'_1)$ be the event that $p_2=(x,\ldots,y_2)\ne\perp$, $p_2$ is nonviolating and $\langle y_2-x,u'_1\rangle<\sigma$.
Note, if  $[\calE^{\tt good}(p_1,p_2)\wedge \neg \calE^{\tt bad}_1(p_1,u'_2)]$ holds
then $p_1\circ p_2=(y_1,\ldots,x,\ldots, y_2)$ is either violating or satisfies
$\langle y_2-y_1,u'_2\rangle \ge \sigma_k + \sigma =\sigma_{2k}$
(and similarly for $\calE^{\tt bad}_2(p_2,u'_1)$).
Therefore, $\tau_x\ge {\tt Pr}[\calE^{\tt good}(p_1,p_2)]-{\tt Pr}[\calE^{\tt bad}_1(p_1,u'_2)] - {\tt Pr}[\calE^{\tt bad}_2(p_2,u'_1)]$.

Clearly, vectors $(u'_1,u_1,u_2,u'_2)$
are distributed according to $\calN_{\rho,\omega,\rho}$.
Equivalently, they are obtained by the following process:
sample $(u_1,u_2)\sim\calN_\omega$, sample $u'_1\sim_\rho u_1$, sample $u'_2\sim_\rho u_2$.
Define $\calB_x=\{u\::\:-u\in\calA_x\}$.
By Theorem~\ref{th:isoperimetric}, we will have $(u_1,u_2)\in\calA_x\times\calB_x$ with
probability at least $\delta^{2/(1-\omega)}$.
Conditioned on the latter event, we have ${\tt Pr}[p_1\ne \perp]\ge\gamma_x$ and
${\tt Pr}[p_2\ne \perp]\ge\gamma_x$ (independently), where the claim for $p_1$ follows from the skew-symmetry of $\calM^k$.
This implies that ${\tt Pr}[\calE^{\tt good}(p_1,p_2)]\ge \delta^{2/(1-\omega)}\gamma_x^2\ge \delta^{2/(1-\omega)}\lambda_x^2$.

We claim that ${\tt Pr}[\calE^{\tt bad}_1(p_1,u'_2)]\le \varepsilon$.
Indeed, it suffices to prove that for fixed $u'_1,u_1,p_1$ we have 
${\tt Pr}[\calE^{\tt bad}_1(p_1,u'_2)]\le \varepsilon$ under $u_2\sim_\omega u_1$, $u'_2\sim_\rho u_2$
(or equivalently under $u'_2\sim_{\omega\rho} u_1$).
Assume that $p_1=(y_1,\ldots,x)\ne\perp$ is nonviolating (otherwise the desired probability is zero and the claim holds).
Since $\calM^k$ is $\sigma_k$-stretched, we have $\langle x-y_1,u_1\rangle \ge \sigma_k$.
We also have $r:=||x-y_1||\le \sqrt{(k+1)\Delta}$ by Lemma~\ref{lemma:regular}(a).
The quantity $\langle x-y_1,u'_2\rangle$ is normal with mean $\omega\rho \langle x-y_1,u_1\rangle\ge \omega\rho\sigma_k$
and variance $(1-(\omega\rho)^2)r^2$ 
(since e.g.\ we can assume that $x-y_1=(r,0,\ldots,0)$ by rotational symmetry, and then use the definitions of correlated Gaussians
for 1-dimensional case).
Therefore, 
\begin{align*}
{\tt Pr}_{ u'_2\sim_{\omega\rho} u_1}\left[\langle x-y_1,u'_2\rangle<\sigma\right] 
&\le {\tt Pr}\left[\calN(\omega\rho\sigma_k,(1-(\omega\rho)^2)^{1/2}r)<\omega\rho\sigma_k-\beta_k\right] \\
&= {\tt Pr}\left[\calN(0,1)<-\frac{\beta_k}{(1-(\omega\rho)^2)^{1/2}r}\right]  
<\exp\left(-\frac{\beta_k^2}{2(1-(\omega\rho)^2)r^2}\right)\le\varepsilon
\end{align*}

In a similar way we prove that ${\tt Pr}[\calE^{\tt bad}_2(p_2,u'_1)]\le \varepsilon$. The lemma follows.

\end{proof}

We showed that 
$$
{\tt size}_{\rho\omega\rho}(\calH)\ge \left(\tfrac 1n\sum_{x\in V} \delta^{2/(1-\omega)}\lambda_x^2 \right) - 2\varepsilon
$$
Let us minimize the bound on the RHS under constraint $\tfrac 1n\sum_{x\in V}\lambda_x\ge \lambda_k $.
Clearly, the minimum is obtained when $\lambda_x=\lambda_k$ for all $x\in V$,
in which case the bound becomes
$$
{\tt size}_{\rho\omega\rho}(\calH)\ge  \delta^{2/(1-\omega)}\lambda_k^2 - 2\varepsilon\ge\lambda_{2k}
$$
where we used \eqref{eq:MKASASF} and \eqref{eq:HDHGDHDGAHKA}.

\appendix

\section{Matrix multiplicative weights (MW) algorithm}\label{sec:app:MW}
It this section we review the method of Arora and Kale~\cite{AK}
for the checking the feasibility of system (\ref{eq:SDP}$'$)
consisting of constraints \eqref{eq:SDP:a'} and \eqref{eq:SDP:b}-\eqref{eq:SDP:e}.
The algorithm is given below.

\begin{algorithm}[H]
  \DontPrintSemicolon
\SetNoFillComment
\For{$t=1,2,\ldots,T$}
{
	compute 
$$
W^{(t)}=\exp\left(\eta\sum_{r=1}^{t-1}N^{(r)}\right)\;,\quad
X^{(t)} = 
n \cdot \frac{W^{(t)}} {{\tt Tr}(W^{(t)})}
$$
\\
either output {\tt Fail} or find ``feedback matrix'' $N^{(t)}$ of the form
 $$
 N^{(t)}={\tt diag}(y)+\sum_p f_pT_p + \sum_S z_S K_S - F
 $$
 where $f_p\ge 0$, $z_S\ge 0$, $\sum_i y_i+\xi n^2\sum_Sz_S\ge \alpha$, $F$ is a symmetric matrix with $F\preceq C$, and $N^{(t)}\bullet X^{(t)}\le 0$.
}

      \caption{MW algorithm. 
      }\label{alg:MW}
\end{algorithm}

It can be checked that $N^{(t)}\bullet X\ge 0$ for all feasible $X$ satisfying (\ref{eq:SDP}$'$) (see~\cite{AK}). Thus, matrix $N^{(t)}$ can be viewed as a cutting plane
that certifies that  $X^{(t)}$ is infeasible (or lies on the boundary of the feasible region).

The procedure at line 3 is called {\tt Oracle}, and the maximum possible spectral norm $||N^{(t)}||$ of the feedback matrix is called the {\em width} of the oracle.
This width will be denoted as $\rho$.~\footnote{
Note that~\cite{AK} formulated the algorithm in terms of the ``loss matrix'' $M^{(t)}=-\frac 1\rho N^{(t)}$
that satisfies $||M^{(t)}||\le 1$.
Namely, it used the update $W^{(t)}\!=\!\exp\left(-\bar\eta\sum_{r=1}^{t-1}M^{(r)}\right)$
with $\bar\eta\!=\!\rho\eta$, and set $\bar\eta\!=\!\frac{\varepsilon}{2\rho n}$
in their Theorems 4.4 and 4.6.
}

\begin{theorem}[{\cite[Theorems 4.4 and 4.6]{AK}}]
Set $\eta=\frac{\epsilon}{2\rho^2 n}$ and $T=\lceil \frac{4\rho^2 n^2 \ln n}{\epsilon^2}\rceil$.
If Algorithm~\ref{alg:MW} does not fail during the first $T$ iterations then
the optimum value of SDP~\eqref{eq:SDP} is at least $\alpha-\epsilon$.
\end{theorem}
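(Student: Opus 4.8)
The plan is to run the classical matrix Multiplicative Weights potential argument, instantiated with the loss matrices $M^{(t)}=-\tfrac1\rho N^{(t)}$ (so $\|M^{(t)}\|\le 1$ by definition of the width $\rho$) and the normalized weights $P^{(t)}=\tfrac1n X^{(t)}=W^{(t)}/{\tt Tr}(W^{(t)})$, which is a density matrix. The target is to show that every $X$ feasible for \eqref{eq:SDP:b}--\eqref{eq:SDP:e} satisfies $C\bullet X\ge\alpha-\epsilon$; since the optimum of SDP~\eqref{eq:SDP} is $\min C\bullet X$ over exactly those $X$, this yields the theorem.

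First I would upgrade the remark ``$N^{(t)}\bullet X\ge 0$ for feasible $X$'' to the quantitative inequality $N^{(t)}\bullet X\ge\alpha-C\bullet X$ for every $X$ feasible for \eqref{eq:SDP:b}--\eqref{eq:SDP:e}. This is immediate from the structure $N^{(t)}={\tt diag}(y)+\sum_p f_pT_p+\sum_S z_SK_S-F$: on such an $X$ one has ${\tt diag}(y)\bullet X=\sum_i y_i$ (using $X_{ii}=1$), $\sum_p f_p(T_p\bullet X)\ge 0$ and $\sum_S z_S(K_S\bullet X)\ge an^2\sum_S z_S$ (using $f_p,z_S\ge 0$ and the path/set inequalities), and $-F\bullet X\ge -C\bullet X$ (using $F\preceq C$, $X\succeq 0$); summing and invoking $\sum_i y_i+an^2\sum_S z_S\ge\alpha$ gives the bound.

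The core step is to sandwich the potential $\Phi^{(t)}={\tt Tr}(W^{(t)})$, noting $\Phi^{(1)}={\tt Tr}(I)=n$. For the upper bound, write $\Phi^{(t+1)}={\tt Tr}\exp(-\bar\eta\sum_{r\le t}M^{(r)})$, apply Golden--Thompson to peel off the last term, and use the operator inequality $\exp(-\bar\eta M^{(t)})\preceq I-\bar\eta M^{(t)}+\bar\eta^2(M^{(t)})^2$ (valid since $\bar\eta\|M^{(t)}\|\le\bar\eta\le 1$) together with $(M^{(t)})^2\bullet P^{(t)}\le 1$ and $1+x\le e^x$; this telescopes to $\Phi^{(T+1)}\le n\exp\!\big(-\bar\eta\sum_t M^{(t)}\bullet P^{(t)}+\bar\eta^2 T\big)$. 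For the lower bound, ${\tt Tr}\exp(A)\ge\exp(\lambda_{\max}(A))\ge\exp(A\bullet P)$ for any density matrix $P$; taking $P=\tfrac1n X$ for a feasible $X$ and using the first step gives $\Phi^{(T+1)}\ge\exp\!\big(\tfrac{\bar\eta}{\rho n}\sum_t N^{(t)}\bullet X\big)\ge\exp\!\big(\tfrac{\bar\eta}{\rho n}\,T(\alpha-C\bullet X)\big)$. Now invoke the oracle guarantee $N^{(t)}\bullet X^{(t)}\le 0$, i.e.\ $\sum_t M^{(t)}\bullet P^{(t)}\ge 0$, to drop that term from the upper bound; taking logs and rearranging yields $\tfrac{T(\alpha-C\bullet X)}{\rho n}\le\tfrac{\ln n}{\bar\eta}+\bar\eta T$, hence $\alpha-C\bullet X\le\rho n\bar\eta+\tfrac{\rho n\ln n}{\bar\eta T}$.

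Finally, substitute $\bar\eta=\rho\eta=\tfrac{\epsilon}{2\rho n}$ and $T\ge\tfrac{4\rho^2 n^2\ln n}{\epsilon^2}$: the first term is $\tfrac\epsilon2$, and $\bar\eta T\ge\tfrac{2\rho n\ln n}{\epsilon}$ forces the second term to be $\le\tfrac\epsilon2$, so $C\bullet X\ge\alpha-\epsilon$ for every feasible $X$, as required. The main obstacle is the potential-function step: getting the two matrix-exponential estimates (Golden--Thompson, and the operator version of $e^{-x}\le 1-x+x^2$) to interlock cleanly while carrying the width $\rho$ correctly through the change of variables $M^{(t)}=-N^{(t)}/\rho$; once that machinery is in place the other steps are short bookkeeping, and the whole argument is the standard Arora--Kale matrix-MW analysis.
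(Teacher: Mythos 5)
Your proposal is correct, and it is essentially the standard Arora--Kale matrix-MW potential argument that the paper itself does not reprove but cites from [AK] (the paper's footnote even spells out exactly the change of variables $M^{(t)}=-N^{(t)}/\rho$, $\bar\eta=\rho\eta=\tfrac{\epsilon}{2\rho n}$ that you use). All the steps check out: the quantitative feasibility bound $N^{(t)}\bullet X\ge\alpha-C\bullet X$, the Golden--Thompson upper bound on ${\tt Tr}(W^{(T+1)})$, the $\lambda_{\max}$ lower bound via the density matrix $X/n$ (using ${\tt Tr}(X)=n$ from $X_{ii}=1$), and the final substitution of $\bar\eta$ and $T$. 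The only implicit hypothesis is $\bar\eta\le 1$ for the operator inequality $e^{-\bar\eta M}\preceq I-\bar\eta M+\bar\eta^2M^2$; this holds in all uses here since $\rho\ge\alpha/n$ and $\epsilon\le\alpha$, but it is worth stating.
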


The oracle used in~\cite{AK} has the following property: if it fails then it returns a cut which is $\frac{c}{512}$-balanced
and has value at most $\kappa \alpha$, where value $\kappa$ depends on the implementation. One of the implementations
achieves $\kappa=O(\sqrt{\log{n}})$ and has width $\rho=\tilde O(\frac \alpha n)$. (The runtime of this oracle will be discussed later).
Setting $\epsilon=\alpha/2$ yields an $O(\kappa)=O(\sqrt{\log{n}})$ approximation algorithm
that makes $\tilde O(1)$ calls to the oracle.

\subsection{Gram decomposition and matrix exponentiation}\label{sec:Gram}
Consider matrix $X=X^{(t)}$ computed at the $t$-th step of Algorithm~\ref{alg:MW}.
It can be seen that $X$ is positive semidefinite, so we can consider its Gram decomposition: $X=V^TV$.
Let $v_1,\ldots,v_n$ be the columns of $V$; clearly, they uniquely define $X$. 
Note that computing $v_1,\ldots,v_n$ requires matrix exponentiation, which is a tricky operation because of accuracy issues.
Furthermore, even storing these vectors requires $\Theta(n^2)$ space and thus $\Omega(n^2)$ time, which is too slow for our purposes.
To address these issues, Arora and Kale compute approximations $\tilde v_1,\ldots,\tilde v_n$ to these vectors
using the following result.
\begin{theorem}[{\cite[Lemma 7.2]{AK}}]\label{th:GramApproximation}
For any constant $c>0$ there exists an algorithm that does the following:
given values $\gamma\in(0,\tfrac 12)$, $\lambda>0$, $\tau=O(n^{3/2})$ and  matrix $A\in\mathbb R^{n\times n}$ of spectral norm $||A||\le \lambda$, it computes 
matrix $\tilde V\in\mathbb R^{d\times n}$ with column vectors
$\tilde v_1,\ldots,\tilde v_n$ 
of dimension $d=O(\tfrac{\log n}{\gamma^2})$ such that matrix $\tilde X=\tilde V^T\tilde V$ has trace $n$,
and with probability at least $1-n^{-c}$,
one has
\begin{subequations}\label{eq:GramApproximation}
\begin{align}
|\;||\tilde v_i||^2-||v_i||^2\;| &\;\;\le\;\; \gamma (||\tilde v_i||^2+\tau) && \forall i \\
|\;||\tilde v_i-\tilde v_j||^2-||v_i-v_j||^2\;| &\;\;\le\;\; \gamma (||\tilde v_i-\tilde v_j||^2+\tau) && \forall i,j
\end{align}
\end{subequations}
where $v_1,\ldots,v_n$ are the columns of a Gram decomposition of $X=n\cdot \frac{\exp(A)}{{\tt Tr}(\exp(A))}$.
The complexity of this algorithm equals the complexity of computing $kd$ matrix-vector products of the form $A\cdot u$, $u\in\mathbb R^n$,
where $k=O(\max\{\lambda^2,\log \frac{n^{5/2}}{\tau}\})$.
\end{theorem}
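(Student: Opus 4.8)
The plan is to prove Theorem~\ref{th:GramApproximation} by the standard route of Arora--Hazan--Kale~\cite{AHK} and Arora--Kale~\cite{AK}: replace the matrix exponential by a low-degree polynomial in $A$ (so that one column of the resulting matrix costs only a few multiplications by $A$), compress to $d=O(\log n/\gamma^2)$ dimensions by a Johnson--Lindenstrauss projection, and rescale so the trace is exactly $n$. First I would record the harmless reduction that the quantities to be preserved, $\|v_i-v_j\|^2=X_{ii}+X_{jj}-2X_{ij}$ and $\|v_i\|^2=X_{ii}$, depend only on $X$, not on the chosen Gram decomposition; so we may fix $V=X^{1/2}$. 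Since $A$ is symmetric, $\exp(A/2)\succeq 0$ is the PSD square root of $\exp(A)$, hence $X^{1/2}=\theta\exp(A/2)$ with $\theta=\sqrt{n/{\tt Tr}(\exp A)}$ and ${\tt Tr}(\exp A)=\|\exp(A/2)\|_F^2$; it therefore suffices to approximate the images $\exp(A/2)e_i$ and the normalizer $\|\exp(A/2)\|_F$.

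\emph{Polynomial step.} Pick a scalar polynomial $q$ of degree $k$ with $|q(x)-e^{x/2}|\le\delta'$ on $[-\lambda,\lambda]$, e.g.\ a truncated Taylor or Chebyshev expansion of $e^{\cdot/2}$. Because the spectrum of $A$ lies in $[-\lambda,\lambda]$, this yields $\|q(A)-\exp(A/2)\|\le\delta'$ in operator norm and so $\|q(A)-\exp(A/2)\|_F\le\sqrt n\,\delta'$, and each product $q(A)w$ is evaluated by Horner's rule in $k$ multiplications by $A$. One must take $\delta'$ to be a suitable inverse polynomial in $n/\tau$: the additive slack $\gamma\tau$ in~\eqref{eq:GramApproximation} has to swallow the truncation error even when ${\tt Tr}(\exp A)$ and the individual $\|\exp(A/2)e_i\|^2$ are exponentially small (which $\lambda$ large permits). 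Standard remainder bounds then force a degree $k$ of the order stated in the theorem.

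\emph{Projection and normalization.} Let $Q\in\mathbb R^{d\times n}$ have i.i.d.\ Gaussian entries of variance $1/d$ with $d=\Theta(\log n/\gamma^2)$, and set $\hat V=Q\,q(A)$; since $q(A)$ is symmetric, $\hat V=(q(A)Q^T)^T$ is obtained from the $d$ vectors $q(A)(Q^Te_j)$, i.e.\ from $dk$ multiplications by $A$, matching the claimed complexity. By the Johnson--Lindenstrauss lemma applied to the fixed family of $O(n^2)$ vectors $\{q(A)e_i\}_i\cup\{q(A)(e_i-e_j)\}_{i<j}$, with probability at least $1-n^{-c}$ (for a large enough constant in $d$) one has $\|\hat Vw\|^2=(1\pm\Theta(\gamma))\|q(A)w\|^2$ simultaneously for all $w$ in the family; summing over $i$ also gives $\|\hat V\|_F^2=(1\pm\Theta(\gamma))\|q(A)\|_F^2$. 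Now put $\tilde V=\sqrt n\,\hat V/\|\hat V\|_F$, so ${\tt Tr}(\tilde X)=\|\tilde V\|_F^2=n$ exactly.

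\emph{Assembling the estimates.} Combining the above with $\|q(A)\|_F=\|\exp(A/2)\|_F\pm\sqrt n\,\delta'$ shows that the scaling $\sqrt n/\|\hat V\|_F$ equals $\theta$ up to a factor $1\pm\Theta(\gamma)$ plus a negligible additive correction. Hence for each $i,j$ the quantity $\|\tilde v_i-\tilde v_j\|^2$ matches $\theta^2\|\exp(A/2)(e_i-e_j)\|^2=\|v_i-v_j\|^2$ to within a multiplicative $1\pm\Theta(\gamma)$ and an additive $\Theta(\gamma\tau)$; rescaling $\gamma$ by a constant turns this into exactly~\eqref{eq:GramApproximation}, and the bound on $\|\tilde v_i\|^2$ versus $X_{ii}$ is identical with $w=e_i$. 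The step I expect to be the main obstacle is precisely this last reconciliation: the target~\eqref{eq:GramApproximation} is a hybrid multiplicative/additive guarantee, and one must verify that both error sources --- the Johnson--Lindenstrauss distortion (multiplicative) and the polynomial-truncation error (additive, and \emph{not} controllable relative to the possibly tiny $\|\exp(A/2)e_i\|$) --- simultaneously fit the budget; carrying this out is what pins down the dependence of $k$ on $\lambda$ and on $n^{5/2}/\tau$.
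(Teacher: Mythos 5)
Theorem~\ref{th:GramApproximation} is imported by the paper verbatim from \cite{AK} (Lemma 7.2) without proof, and your sketch reconstructs exactly the standard argument behind it: reduce to approximating $\exp(A/2)e_i$ via a degree-$k$ Taylor/Chebyshev truncation (valid since $A$ is symmetric with spectrum in $[-\lambda,\lambda]$), compress with a Johnson--Lindenstrauss projection in $d=O(\gamma^{-2}\log n)$ dimensions at a cost of $kd$ products $A\cdot u$, renormalize to trace $n$, and let the additive slack $\gamma\tau$ absorb the truncation error while JL supplies the multiplicative part. This is correct and matches the cited proof; the only step you leave implicit --- verifying that the Taylor remainder bound, after being inflated by the normalizer $\theta^2=n/{\tt Tr}(\exp A)\le e^{\lambda}$, fits within $\gamma\tau$ for $k=O(\max\{\lambda^2,\log\tfrac{n^{5/2}}{\tau}\})$ --- is the routine quantitative calculation that you correctly identify as the place where the stated dependence of $k$ on $\lambda$ and $n^{5/2}/\tau$ is pinned down.
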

Note that matrices $A$ used in Algorithm~\ref{alg:MW} have norm at most $\eta\rho T$.
Therefore, we can set $\lambda=\Theta(\frac{\rho n \log n}{\alpha})$
when applying Theorem~\ref{th:GramApproximation} to Algorithm~\ref{alg:MW}.
Parameters $\gamma$ and $\tau$ will be specified later.

From now on we make the following assumption.
\begin{assumption}\label{assumption:one}
We have (unobserved) matrix $V\in\mathbb R^{n\times n}$ and (observed) matrix $\tilde V\in\mathbb R^{d\times n}$ with 
$X=V^TV$, $\tilde X=\tilde V^T\tilde V$ and
${\tt Tr}(X)={\tt Tr}(\tilde X)=n$
satisfying conditions~\eqref{eq:GramApproximation} where $v_1,\ldots,v_n$ are the columns of $V$ and $\tilde v_1,\ldots,\tilde v_n$ are the columns of $\tilde V$.
\end{assumption}

\subsection{Oracle implementation}
Let us denote $S=\{i\in V\::\:||\tilde v_i||^2\le 2\}$. 
We have $\sum_{i\in V}||\tilde v_i||^2={\tt Tr}(\tilde V^T\tilde V)=n$ and thus $|S|\ge n/2$.
First, one can eliminate an easy case.
\begin{proposition}\label{prop:oracle-easy-case}
Suppose that $K_S\bullet \tilde X< \tfrac {\xi n^2}4$.
Then setting $y_i=-\tfrac{\alpha}n$ for all $i\in V$, $z_{S}=\tfrac{2\alpha}{\xi n^2}$, $z_{S'}=0$ for all $S'\ne S$, and $F=0$
gives a valid output of the oracle with width $\rho=O(\tfrac \alpha n)$ assuming that 
 parameters $\tau,\gamma$ in Theorem~\ref{th:GramApproximation} satisfy 
 $\gamma\le \tfrac 12$ and $\tau\le \tfrac \xi 2$.
\end{proposition}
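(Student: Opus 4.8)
The plan is to check directly that the prescribed variables meet every condition imposed on the oracle of Algorithm~\ref{alg:MW}, so that in this case no flow or cut needs to be produced. With $f_p\equiv 0$, $F=0$, $z_S=\tfrac{2\alpha}{an^2}$, $z_{S'}=0$ for $S'\ne S$, and $y_i=-\tfrac\alpha n$ for all $i$, the feedback matrix reduces to
\begin{equation*}
N^{(t)}={\tt diag}(y)+\textstyle\sum_p f_pT_p+\sum_{S'} z_{S'}K_{S'}-F=-\tfrac\alpha n\,I+\tfrac{2\alpha}{an^2}\,K_S.
\end{equation*}
The routine conditions are then immediate: $f_p,z_S\ge 0$; $F=0\preceq C$ since $C$ is a weighted graph Laplacian and hence PSD; and $\sum_i y_i+an^2\sum_{S'}z_{S'}=-\alpha+an^2\cdot\tfrac{2\alpha}{an^2}=\alpha$, so this constraint holds with equality.

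The one substantive step is the inequality $N^{(t)}\bullet X\le 0$ for the current iterate $X=X^{(t)}$. Using ${\tt Tr}(X)=n$ from Assumption~\ref{assumption:one}, one has ${\tt diag}(y)\bullet X=-\tfrac\alpha n\,{\tt Tr}(X)=-\alpha$, so it suffices to establish $\tfrac{2\alpha}{an^2}\,(K_S\bullet X)\le\alpha$, i.e.\ $K_S\bullet X\le\tfrac{an^2}{2}$. Since $K_S\bullet X=\sum_{i,j\in S:\,i<j}\|v_i-v_j\|^2$ and the hypothesis is $K_S\bullet\tilde X=\sum_{i,j\in S:\,i<j}\|\tilde v_i-\tilde v_j\|^2<\tfrac{an^2}{4}$, I would apply the Gram-approximation bound~\eqref{eq:GramApproximation}, which gives $\|v_i-v_j\|^2\le(1+\gamma)\|\tilde v_i-\tilde v_j\|^2+\gamma\tau$ for every pair. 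Summing over the fewer than $n^2/2$ pairs in $S$ and using $\gamma\le\tfrac12$, $\tau\le\tfrac a2$ gives
\begin{equation*}
K_S\bullet X<(1+\gamma)\cdot\tfrac{an^2}{4}+\gamma\tau\cdot\tfrac{n^2}{2}\le\tfrac32\cdot\tfrac{an^2}{4}+\tfrac12\cdot\tfrac a2\cdot\tfrac{n^2}{2}=\tfrac{an^2}{2},
\end{equation*}
which is exactly what is needed; hence $N^{(t)}\bullet X\le-\alpha+\alpha=0$.

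For the width, I would use that $K_S$ is the Laplacian of the unit-weight complete graph on $S$, whose nonzero eigenvalue is $|S|$, so $\|K_S\|=|S|\le n$, and therefore $\rho=\|N^{(t)}\|\le\tfrac\alpha n+\tfrac{2\alpha}{an^2}\|K_S\|\le\tfrac\alpha n\bigl(1+\tfrac2a\bigr)=O\!\bigl(\tfrac\alpha n\bigr)$, since $a=3c-4c^2$ is a positive constant. I do not expect a genuine obstacle; the only point requiring care is the error-propagation in the displayed inequality, where one must confirm that the gap between the hypothesis $\tfrac{an^2}{4}$ and the target $\tfrac{an^2}{2}$ comfortably absorbs both the multiplicative factor $1+\gamma$ and the additive contribution $\gamma\tau\binom{|S|}{2}$ under the stated bounds on $\gamma$ and $\tau$.
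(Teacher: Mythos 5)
Your proof is correct and follows essentially the same route as the paper's: write out $N^{(t)}=-\tfrac\alpha n I+\tfrac{2\alpha}{an^2}K_S$, verify $\sum_i y_i+an^2\sum_{S'}z_{S'}=\alpha$, transfer the hypothesis on $K_S\bullet\tilde X$ to $K_S\bullet X\le\tfrac{an^2}{2}$ via the Gram-approximation bound summed over at most $n^2/2$ pairs, and bound the width by triangle inequality. The only difference is cosmetic — you make the estimate $\|K_S\|\le n$ explicit where the paper leaves it implicit.
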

\begin{proof}
Denote $z_{ij}=||v_i-v_j||^2$ and $\tilde z_{ij}=||\tilde v_i-\tilde v_j||^2$.
We know that $\tilde Z:=\sum_{ij} \tilde z_{ij}=K_{S}\bullet \tilde X<\tfrac {\xi n^2}4$ where the sum is over $i,j\in S$.
Also, $|z_{ij}-\tilde z_{ij}|\le \gamma(\tilde z_{ij}+\tau)$ for all $i,j$.
This implies that 
$$
K_{S}\bullet X=\sum_{ij} z_{ij}
< \tilde Z + \gamma \tilde Z + \tfrac{n^2}{2}\gamma\tau
\le(1+\gamma) \tfrac {\xi n^2}4 + \tfrac{n^2}{2} \gamma\tau 
\le \tfrac{\xi n^2}2
$$
Note that $N^{(t)}=-\tfrac \alpha n I + \tfrac {2\alpha}{\xi n^2}K_{S}$.
We have $\sum_iy_i+\xi n^2\sum_Sz_S=n\cdot(-\tfrac \alpha n)+\xi n^2\cdot \tfrac {2\alpha}{\xi n^2}=\alpha$
and $N^{(t)}\bullet X=-\tfrac \alpha n I\bullet X + \tfrac {2\alpha}{\xi n^2}(K_{S}\bullet X)
\le -\tfrac \alpha n \cdot n + \tfrac {2\alpha}{\xi n^2}\cdot \tfrac {\xi n^2}2=0$, as desired.
Also, $||N^{(t)}||\le||-\tfrac \alpha n I||+||\tfrac {2\alpha}{\xi n^2}K_{S}||=O(\tfrac \alpha n)$.
\end{proof}

Now suppose that $\sum_{i,j\in S}||\tilde v_i-\tilde v_j||^2=K_S\bullet \tilde X\ge \tfrac{\xi n^2}4$.
We will set $N^{(t)}_{ij}=0$ for all $(i,j)\notin S\times S$.
When describing how to set the remaining entries of $N^{(t)}$,
it will be convenient to treat $N^{(t)}$, $X$, $\tilde X$ etc.\ as matrices of size
$|S|\times |S|$ rather than $n\times n$, and $y$ as a vector of size $|S|$.
(Note that $X$ and $\tilde X$ are the submatrices of the original matrices, and ${\tt Tr}(X)\le n$).
With some abuse of terminology we will redefine $V=S$ and $n=|S|$,
and refer to the original variables as $V_{\tt orig}$ and $n_{\tt orig}\in[n,2n]$.
Thus, from now on we make the following assumption.
\begin{assumption}\label{assumption:two}
$||\tilde v_i||^2\le 2$ for all $i\in V$ and $\sum_{i,j\in V:i<j}||\tilde v_i-\tilde v_j||^2\ge \tfrac {\xi n_{\tt orig}^2}4\ge \tfrac {\xi n^2}4$.
\end{assumption}

Let us set $y_i=\frac \alpha n$ for all $i\in V$ and $z_S=0$ for all $S$, then $\sum_i y_i+\xi n_{\tt orig}^2\sum_S z_S=\alpha$.
Note that $N^{(t)}=\tfrac \alpha n I + \sum_p f_p T_p - F$ and $\tfrac \alpha n I\bullet X\le\alpha$,
so condition $(\sum_p f_p T_p - F)\bullet X\le -\alpha$
would imply that $N^{(t)}\bullet X\le 0$.
Our goal thus becomes as follows.

\vspace{5pt}
\noindent\hspace{0pt}
\begin{minipage}{\dimexpr\columnwidth-10pt\relax}
\fbox{\parbox{\textwidth}{
 \em Find variables $f_p\ge 0$ and symmetric matrix $F\preceq C$ such that $(\sum_p f_p T_p - F)\bullet X\le -\alpha$.
}}
\end{minipage}
\vspace{5pt}

We arrive at the specification of {\tt Oracle} given in Section~\ref{sec:MW}.

\section{Proof of Theorem~\ref{th:isoperimetric}}\label{sec:isoperimetric}
Here we repeat the argument in~\cite{Mossel06}.
For a function $f:\mathbb R^d\rightarrow\mathbb R_{\ge 0}$
denote
$||f||_p=(\mathbb E_u[f^p(u)])^{1/p}$. Let $T_\omega f$ is the function $\mathbb R^d\rightarrow\mathbb R_{\ge 0}$
defined via $(T_\omega f)(u)=\mathbb E_{\hat u\sim_\omega u} f(\hat u)$. ($T_\omega$ is called the ``Ornstein-Uhlenbeck operator'').
\begin{theorem}[{\cite{Borell}}]\label{th:Borell}
Let $f:\mathbb R^d\rightarrow\mathbb R_{\ge 0}$ and $-\infty<q\le p\le 1$, $0\le\omega\le\sqrt{\frac{1-p}{1-q}}$. Then
$$
||T_\omega f||_q \ge ||f||_p
$$
\end{theorem}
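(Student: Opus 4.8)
The plan is to prove Theorem~\ref{th:Borell} by semigroup interpolation, using the Gaussian logarithmic Sobolev inequality of Gross; no separate tensorization step is needed because the log-Sobolev constant is dimension-free, so everything is carried out directly in $\mathbb R^d$. First I would dispose of the degenerate cases. If $\omega=0$ then $T_0f\equiv\mathbb E_u[f]=\|f\|_1$, so $\|T_0f\|_q=\|f\|_1\ge\|f\|_p$ for every $p\le1$ by monotonicity of $r\mapsto\|f\|_r$ on a probability space, and the claim holds; and if $p=1$ the hypothesis forces $\omega=0$. So assume $\omega\in(0,1)$ and $p<1$. I would also first prove the statement for $f$ smooth, bounded, and bounded away from $0$ (so that $\|f\|_r\in(0,\infty)$ for every real $r$ and all differentiations below are legitimate), and only afterwards pass to general measurable $f\ge0$ by monotone convergence.

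Next I would set up the interpolation. Write $T_\omega=P_{t_0}$ with $t_0=-\log\omega>0$, where $(P_t)_{t\ge0}$ is the Ornstein--Uhlenbeck semigroup (so $\partial_tP_tf=LP_tf$ by Mehler's formula, and $\mathbb E[(-Lg)g]=\mathbb E[|\nabla g|^2]$), and choose the exponent schedule $p(t)$ on $[0,t_0]$ by $1-p(t)=(1-p)e^{2t}$, so that $p(0)=p$, $p'(t)=2(p(t)-1)$, and $p(t)\le p<1$ throughout. Put $g=g_t=P_tf$, $\pi=p(t)$, $N=\mathbb E[g^\pi]$. Differentiating $\log\|g_t\|_{p(t)}=\tfrac1\pi\log N$, using $\partial_tg=Lg$ and the integration by parts $\mathbb E[g^{\pi-1}Lg]=-(\pi-1)\mathbb E[g^{\pi-2}|\nabla g|^2]$, yields
\[
\frac{d}{dt}\log\|g_t\|_{p(t)}=\frac{p'(t)}{\pi^2}\cdot\frac{\mathrm{Ent}(g^\pi)}{N}-(\pi-1)\cdot\frac{\mathbb E[g^{\pi-2}|\nabla g|^2]}{N},
\]
where $\mathrm{Ent}(h)=\mathbb E[h\log h]-\mathbb E[h]\log\mathbb E[h]$. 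Applying the Gaussian log-Sobolev inequality to $h=g^\pi=(g^{\pi/2})^2$ gives $\mathrm{Ent}(g^\pi)\le\tfrac{\pi^2}{2}\,\mathbb E[g^{\pi-2}|\nabla g|^2]$; since $p'(t)/\pi^2<0$ (because $\pi<1$), multiplying this upper bound on the entropy by $p'(t)/\pi^2$ \emph{reverses} the inequality, so $\tfrac{p'(t)}{\pi^2}\mathrm{Ent}(g^\pi)\ge\tfrac{p'(t)}{2}\mathbb E[g^{\pi-2}|\nabla g|^2]=(\pi-1)\,\mathbb E[g^{\pi-2}|\nabla g|^2]$, whence $\tfrac{d}{dt}\log\|g_t\|_{p(t)}\ge0$. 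Therefore $\|P_{t_0}f\|_{p(t_0)}\ge\|P_0f\|_{p(0)}=\|f\|_p$. Finally $1-p(t_0)=(1-p)e^{2t_0}=(1-p)/\omega^2\ge1-q$ exactly when $\omega\le\sqrt{(1-p)/(1-q)}$, i.e.\ $p(t_0)\le q$; so by monotonicity of $r\mapsto\|P_{t_0}f\|_r$ we conclude $\|T_\omega f\|_q=\|P_{t_0}f\|_q\ge\|P_{t_0}f\|_{p(t_0)}\ge\|f\|_p$, which is the theorem.

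The step I expect to be the main obstacle is the regularity reduction combined with handling exponents $p(t)\le0$: the displayed derivative has $\pi^2$ in a denominator, and $\|h\|_0=\exp(\mathbb E[\log h])$ must be reached through $\pi\to0$, where one checks that $\mathrm{Ent}(g^\pi)/\pi^2\to\tfrac12\mathrm{Var}(\log g)$ stays finite. One handles this by keeping $f$ bounded and bounded away from $0$ (so $t\mapsto\|P_tf\|_{p(t)}$ is locally Lipschitz and the a.e.\ derivative bound suffices for monotonicity), optionally perturbing the schedule to avoid $\pi=0$, and only at the end removing these hypotheses on $f$ by monotone/dominated convergence --- this last limit needs mild care precisely because $\|\cdot\|_r$ for $r\le0$ is sensitive to $f$ being near $0$. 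A secondary point requiring care is the double sign flip that distinguishes this reverse regime from ordinary hypercontractivity: both $\pi-1<0$ and $p'(t)<0$, and it is exactly their combination that makes log-Sobolev yield a \emph{lower} bound on the derivative and reproduces the hypothesis $\omega^2\le(1-p)/(1-q)$ with the correct orientation. As an alternative route I would dualize: by the reverse Hölder inequality and self-adjointness of $T_\omega$, the claim is equivalent to $\mathbb E_{(u,u')\sim\calN_\omega}[f(u)g(u')]\ge\|f\|_p\,\|g\|_{q/(q-1)}$ for all $f,g\ge0$, where $(1-p)\bigl(1-\tfrac{q}{q-1}\bigr)=\tfrac{1-p}{1-q}\ge\omega^2$; this symmetric two-function inequality can be proved first on the two-point space (a one-variable calculus check), then tensorized to $\{-1,1\}^n$ via reverse Minkowski (valid since all exponents are $\le1$), and finally transferred to the Gaussian by the central limit theorem.
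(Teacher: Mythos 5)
The paper does not prove this statement at all: Theorem~\ref{th:Borell} is quoted verbatim from Borell's work and used as a black box in the derivation of Theorem~\ref{th:isoperimetric}, so there is no in-paper proof to compare against. Your argument is the standard semigroup proof of Gaussian reverse hypercontractivity, and I checked the computations: the schedule $1-p(t)=(1-p)e^{2t}$ gives $p'(t)=2(p(t)-1)$; the derivative identity for $\log\|P_tf\|_{p(t)}$ is correct (the $-\tfrac{p'}{\pi^2}\log N$ term is exactly absorbed into $\tfrac{p'}{\pi^2}\mathrm{Ent}(g^\pi)/N$); the log-Sobolev inequality applied to $g^{\pi/2}$ gives $\mathrm{Ent}(g^\pi)\le\tfrac{\pi^2}{2}\mathbb E[g^{\pi-2}|\nabla g|^2]$, and since $p'/\pi^2<0$ the two terms cancel in the right direction, using $p'=2(\pi-1)$; and the endpoint computation $1-p(t_0)=(1-p)/\omega^2\ge 1-q$ recovers exactly the hypothesis $\omega\le\sqrt{(1-p)/(1-q)}$. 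The points you flag as delicate are indeed the only delicate points: the crossing $\pi=0$ (where $\mathrm{Ent}(g^\pi)/\pi^2\to\tfrac12\mathrm{Var}(\log g)$, finite for $f$ bounded away from $0$ and $\infty$), and the final limiting argument for general $f\ge0$, where monotone convergence from below is the safe direction because negative norms are sensitive to $f$ vanishing. Your alternative route (two-function form on the two-point space, tensorization by reverse Minkowski, CLT transfer) is also a valid and standard path. In short: the proof is correct and supplies a self-contained argument where the paper relies on a citation; either of your two routes would serve as a legitimate appendix proof.
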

Let $f,g$ be the indicator functions of sets $\calA,\calB$ respectively.
Set $p=1-\omega$ and $q=1-1/\omega$, so that $1/p+1/q=1$ and $\sqrt{\tfrac{1-p}{1-q}}=\omega$.
Note that $||f||_p=||g||_p=\delta^{1/p}$. We can write
\begin{align*}
{\tt Pr}_{(u,u')\sim\calN_\omega}[(u,u')\in \calA\times \calB]
&=\mathbb E_{(u,u')\sim\calN_\omega}[f(u)g(u')] 
=\mathbb E_{u}[f T_\omega g]  \\
&\ge ||f||_p ||T_\omega g||_q && \mbox{\em \small by reverse H\"older inequality} \\
&\ge ||f||_p ||g||_p && \mbox{\em \small by Theorem~\ref{th:Borell} } \\
& = \delta^{2/(1-\omega)}
\end{align*}

\bibliographystyle{alpha}
\bibliography{maxflow}

\end{document}